\theoremstyle{plain}
\newtheorem{thm}{Theorem}[section]
\newtheorem{lem}[thm]{Lemma}
\newtheorem{cor}[thm]{Corollary}
\newtheorem{prop}[thm]{Proposition}
\theoremstyle{definition}
\newtheorem{asp}[thm]{Assumption}
\theoremstyle{remark}
\newtheorem{rem}{Remark}
\newcommand{\pder}[2][]{\frac{\partial#1}{\partial#2}}
\newcommand{\calA}{\mathcal{A}}
\newcommand{\calC}{\mathcal{C}}
\newcommand{\calF}{\mathcal{F}}
\newcommand{\calL}{\mathcal{L}}
\newcommand{\calM}{\mathcal{M}}
\newcommand{\calS}{\mathcal{S}}
\newcommand{\frakM}{\mathfrak{M}}
\newcommand{\E}{\mathbb{E}}				
\newcommand{\prob}{\mathbb{P}}		
\newcommand{\Q}{\mathbb{Q}}
\newcommand{\vm}[1]{\mathbf{#1}}
\begin{document}


\title{Representation of Exchange Option Prices under Stochastic Volatility Jump-Diffusion Dynamics\thanks{This is the preprint of the article of the same title published by Taylor \& Francis in \emph{Quantitative Finance} \textbf{20}(2), available online at: \url{https://doi.org/10.1080/14697688.2019.1655785}.}}

\author{
\name{Gerald H. L. Cheang\textsuperscript{a}\thanks{E-mail: Gerald.Cheang@unisa.edu.au; ORCiD: \url{https://orcid.org/0000-0003-3786-0285}} and Len Patrick Dominic M. Garces\textsuperscript{a,b}\thanks{Corresponding Author; E-mail: len\_patrick\_dominic.garces@mymail.unisa.edu.au; ORCiD: \url{https://orcid.org/0000-0002-2737-7348
}}}
\affil{\textsuperscript{a}Centre for Industrial and Applied Mathematics, School of Information Technology and Mathematical Sciences, University of South Australia, Mawson Lakes SA 5095, Australia; \textsuperscript{b}Department of Mathematics, School of Science and Engineering, Ateneo de Manila University, Quezon City 1108, Metro Manila, Philippines}
}

\maketitle

\begin{abstract}
In this article, we provide representations of European and American exchange option prices under stochastic volatility jump-diffusion (SVJD) dynamics following models by \citet{Merton-1976}, \citet{Heston-1993}, and \citet{Bates-1996}. A Radon-Nikod\'ym derivative process is also introduced to facilitate the shift from the objective market measure to other equivalent probability measures, including the equivalent martingale measure. Under the equivalent martingale measure, we derive the integro-partial differential equation that characterizes the exchange option prices. We also derive representations of the European exchange option price using the change-of-num\'eraire technique proposed by \citet{Geman-1995} and the Fourier inversion formula derived by \citet{CaldanaFusai-2013}, and show that these two representations are comparable. Lastly, we show that the American exchange option price can be decomposed into the price of the European exchange option and an early exercise premium.\end{abstract}

\begin{keywords}
exchange options; American options; jump-diffusion processes; stochastic volatility; Fourier inversion
\end{keywords}

\section{Introduction}

An exchange option is a contract that grants the holder the right, but not the obligation, to exchange one risky asset for another. For example, if time $t$ asset prices are denoted by $S_{1,t}$ and $S_{2,t}$, then the payoff of the European exchange option with maturity $T$ is given by $(S_{1,T}-S_{2,T})^+$, where $x^+ = \max\{x,0\}$. The exchange option is a special case of the spread option, which is an option written on the difference of the prices of two assets. For example, the European call spread option with strike price $K$ has terminal payoff $(S_{1,T}-S_{2,T}-K)^+$, and so the exchange option can be seen as a call spread option with zero strike price. In the \citet{BlackScholes-1973} framework, the price of the European exchange option is given by the celebrated \citet{Margrabe-1978} formula. However, \citet{Carmona-2003} established that such a closed-form equation is unavailable for general non-zero strike European spread options even in the Black-Scholes model.

In recent years, literature in option pricing has moved beyond the classical model. Empirical studies have shown that the classical Black-Scholes assumption that asset prices are log-normally distributed is insufficient to capture pertinent features of asset returns such as heavy tails, volatility clustering, and implied volatility smiles and skews \citep{Cont-2001, ContTankov-2004, Kou-2008}. In light of the limitations of the Black-Scholes framework, alternative asset price models have been proposed to provide more accurate characterizations of asset returns. Prominent examples of these alternative models are jump-diffusion models \citep{Merton-1976, NaikLee-1990, Pham-1997, Kou-2002}, stochastic volatility models \citep{HullWhite-1987, SteinStein-1991, Heston-1993}, and combinations of stochastic volatility and jump-diffusion models \citep{Bates-1996, Bakshi-1997, Scott-1997}. 

A practical consequence, however, of the use of alternative asset price models is that option prices are no longer available in a form as elegant as the original Black-Scholes formula. Indeed, the option pricing formulas obtained by \citet{Heston-1993} and \citet{Bates-1996} are expressed in ``semi-closed'' forms in terms of the characteristic function of log-prices; in the \citet{Merton-1976} jump-diffusion setting, \citet{CheangChiarella-2012} obtain an infinite series representation of European option prices with Poisson-probability weights. A popular approach that can be efficiently implemented using computing software is the fast Fourier transform (FFT) approach proposed by \citet{CarrMadan-1999} and \citet{Lewis-2001}.

The complexity of the pricing problem is further exacerbated when considering early-exercise and American options, where one has to also account for the early exercise boundary. In the pure diffusion setting, \citet{Kim-1990}, \citet{Jacka-1991}, and \citet{Jamshidian-1992} were able to show that the price of the American option on a single stock decomposes into the sum of the price of the corresponding European option and a quantity that is commonly interpreted as the early exercise premium. \citet{Pham-1997} and \citet{Gukhal-2001} were able to derive a similar representation in the jump-diffusion setting, noting that the early exercise premium is heavily affected by the possibility of jumps in asset prices. \citet{Pham-1997} and \citet{Touzi-1999}, in the jump-diffusion and the stochastic volatility frameworks, respectively, provide an analysis of the American option price with respect to the early exercise boundary. \citet{CheangChiarellaZiogas-2013} show that a similar decomposition holds under a stochastic volatility jump-diffusion model for the underlying asset. 

Parallel to the developments in single-asset option pricing under alternative price processes, the valuation of the European exchange option has also since then been conducted under jump-diffusion models and stochastic volatility models. \citet{Jamshidian-2007} considered the pricing of European exchange options and constructing hedging portfolios when asset prices are driven by pure-diffusion processes with deterministic volatility and when they are modelled using exponential Poisson processes. \citet{Antonelli-2010}, \citet{Alos-2017}, and \citet{KimPark-2017} provide prices for the European exchange option under stochastic volatility dynamics. \citet{CheangChiarella-2011} extended Merton's jump-diffusion model to the case of two assets and characterized the price of European exchange options, an analysis which has been refined by \citet{Caldana-2015}. 
\citet{PetroniSabino-2018} consider a market model with correlated jumps in pricing European exchange options.

On the legacy of the fast Fourier transform approaches developed by \citet{DempsterHong-2002} and \citet{HurdZhou-2010}, European spread option prices have also been derived under alternative price processes. \citet{CaneOlivares-2014} used the \citet{HurdZhou-2010} method to price European spread options under a two-dimensional \citet{Bates-1996} model. \citet{AlfeusSchlogl-2018} showed that the \citet{HurdZhou-2010} method is a particular application of the two-dimensional Parseval's Identity. A striking feature of the Fourier transform approach is that closed form expressions or approximations for European spread option prices can be obtained with knowledge of the joint characteristic function of the log-prices of the underlying assets, which is available in most models. 

It was noted however by \citet{CaldanaFusai-2013} that the method of \citet{HurdZhou-2010} is unable to produce prices for European exchange options, so they proposed alternative lower-bound approximations of spread option prices based on the approach of \citet{DempsterHong-2002} as a generalization of the approximations derived by \citet{Bjerskund-2011}. The \citet{CaldanaFusai-2013} result only requires the joint characteristic function of the log-prices of the assets and is exact for European exchange options. 

Analysis of exchange and spread options with non-European payoffs have also gained traction in financial literature. Under the Black-Scholes framework, \citet{Bjerskund-1993} analyzed the American exchange option by approaching it as an optimal stopping problem. \citet{BroadieDetemple-1997} established pricing formulas for multi-asset options, including the exchange and spread options, under the pure-diffusion framework. 
Following their analysis of the European exchange option under jump-diffusion dynamics, \citet{CheangChiarella-2011} obtained a linked system of patrial integro-differential equations characterizing the price of the American exchange option and the associated early exercise boundary. Following the methods of \citet{McKean-1965} and \citet{Jamshidian-1992}, an analysis of American options written on two underlying assets in the pure-diffusion setting was tackled via partial differential equations and Fourier integral transforms by \citet{ChiarellaZiveyi-2014}. \citet{CheangLian-2015} have priced perpetual exchange options under jump-diffusion dynamics, while \citet{PengPeng-2016} analyzed the price of Bermudan-style exchange options under a jump-diffusion model. Typical of American-style derivatives, however, is the unavailability of a closed-form option pricing formula, and so one must resort to approximations or numerical solutions.

In this paper, we focus on the representation of the European and American exchange option prices when the underlying asset prices are characterized by a stochastic volatility jump-diffusion model. After establishing the two-dimensional \citet{Bates-1996} model of the financial market and the necessary change-of-measure mechanisms, we derive the integro-partial differential equation (and the associated boundary conditions) that characterizes exchange option prices under SVJD dynamics. We then employ probabilistic arguments to obtain expressions for the European and American exchange option prices arising from the IPDE.

Our discussion of the European exchange option employs two methods---a probabilistic approach and a Fourier transform approach. First, we demonstrate the use of the \citet{Geman-1995} change-of-num\'eraire technique to obtain an alternative probabilistic representation of European exchange option prices that resembles the original \citet{Margrabe-1978} formula. In this analysis, we also establish some conditions on the volatility processes to ensure that asset and option prices are well-defined after changes in probability measures.
Second, with the imposition of additional assumptions on the correlation structure of the market model, we derive the the joint characteristic function of the log-prices of the stocks (following \citet{ContTankov-2004} and \citet{CaneOlivares-2014}) and use the result of \citet{CaldanaFusai-2013} to obtain the price of the European exchange option. Our analysis shows that the \citet{CaldanaFusai-2013} formulation is compatible with the representation of option prices obtain via the change-of-num\'eraire procedure, linking the probabilities of option exercise (under the secondary probability measures) to Fourier inversion formulas.

To the best of our knowledge, an analysis of American exchange options under both stochastic volatility and jump-diffusion dynamics is yet to be formulated. As such, this paper aims to extend the probabilistic analysis of American exchange options by \citet{CheangChiarella-2011} to the case of stochastic volatility jump-diffusion dynamics. Under the SVJD model, we were able to show that the American exchange option price can also be decomposed into the sum of the price of the European exchange option and an early exercise premium, the latter of which can be decomposed further into diffusive and jump components (a feature that was also shown by \citet{CheangChiarellaZiogas-2013} for the single-asset American option). In this regard, we also derive the coupled system of integral equations that determine both the American exchange option price and the unknown early exercise boundary. 

The rest of the paper is organized as follows. Section \ref{sec-SVJDModel} presents the stochastic volatility jump-diffusion model based on \citet{Bates-1996} and some pertinent results surrounding the stochastic volatility process; Section \ref{sec-RadonNikodymDerivative} presents the Radon-Nikod\'ym derivative that will be used to shift to an equivalent martingale measure as well as to the secondary probability measures required in the change-of-num\'eraire technique; Section \ref{sec-IPDE} contains the derivation of the exchange option pricing integro-partial differential equation; Section \ref{sec-EuropeanExchangeOption} discusses a representation of the European exchange option price based on the change-of-num\'eraire technique (Section \ref{sec-ChangeofNumeraire}) and the Fourier transform method of \citet{CaldanaFusai-2013} (Section \ref{sec-FourierTransform}); Section \ref{sec-AmericanExchangeOption} shows the derivation of the decomposition of the American exchange option price; Section \ref{sec-Conclusion} concludes the paper.

\section{A Stochastic Volatility Jump-Diffusion Model}
\label{sec-SVJDModel}

Let $(\Omega,\calF,\{\calF_t\},\prob)$ be a filtered probability space where $\prob$ is interpreted to be the market probability measure and the filtration $\{\calF_t\}$ is one that is generated by all stochastic process which will be included in the models hereafter.

Let $S_{1,t}$ and $S_{2,t}$ denote the prices of two stocks at time $t$. Denote by $S_{i,t-} = \lim_{u\to t^-}S_{i,u}$ the price of stock $i$ ($i=1,2$) immediately before time $t$; in particular, if a jump in stock price occurs at time $t$, then $S_{i,t-}$ represents the pre-jump price of stock $i$. Assume that stock $i$ pays a (constant) continuously compounded dividend yield $q_i$. We assume that the evolution of the price of stock $i$ is given by a stochastic volatility jump-diffusion (SVJD) model specified as
\begin{align}
\label{eqn-SVJD-ExcOp-Stock}
\dif S_{i,t} & = \mu_i S_{i,t-}\dif t+\sqrt{v_{i,t}}S_{i,t-}\dif W_{i,t}+S_{i,t-}\int_\mathbb{R}(e^{y_i}-1)\left(p(\dif y_i,\dif t)-\lambda_i m_\prob(\dif y_i)\dif t\right)\\
\label{eqn-SVJD-ExcOp-Vol}
\dif v_{i,t} & = \xi_i(\eta_i-v_{i,t})\dif t+\sigma_i\sqrt{v_{i,t}}\dif Z_{i,t},
\end{align}
where $\mu_i$ is the instantaneous return on asset $i$ per unit time, $v_{i,t}$ is the instantaneous variance per unit time, $\xi_i$ is the rate of mean reversion of $v_{i,t}$, $\eta_i$ is the long-run mean for $v_{i,t}$, $\sigma_i$ is the instantaneous volatility of $v_{i,t}$, and $\{W_{i,t}\}$ and $\{Z_{i,t}\}$ are standard Wiener processes under $\prob$. It is assumed that $\mu_i$, $\lambda_i$, $\xi_i$, $\eta_i$, and $\sigma_i$ are positive constants. This assumption is particularly important for the variance process as it ensures that $v_{i,t}$ reverts to a positive level \citep{AndersenPiterbarg-2007}.

Furthermore, the counting measure $p(\dif y_i,\dif t)$ is associated to a marked Poisson process $(Y_{i,n}, N_{i,t})$, where the marks $Y_{i,1},Y_{i,2},\dots$ are i.i.d. random variables with a non-atomic $\prob$-density $m_\prob(\dif y_i)$ and $\{N_{i,t}\}$ is a Poisson process with intensity $\lambda_i$ under $\prob$. In the language of \citet{Runggaldier-2003}, $p(\dif y_i,\dif t)$ has $\prob$-local characteristics $(\lambda_i, m_\prob(\dif y_i))$. It is assumed that the marks $Y_{i,n}$ and the Poisson process $\{N_{i,t}\}$ are independent of each other and independent of the Wiener processes defined above. We further assume that the marks and Poisson processes among the two assets are independent of each other.

This model is a two-asset extension of the stock price dynamics in \citet{CheangChiarellaZiogas-2013}, which combines the jump-diffusion model of \citet{Merton-1976} and the square-root volatility process of \citet{Heston-1993} (as what was done by \citet{Bates-1996}). Likewise, it is a stochastic volatility version of the model introduced in \citet{CheangChiarella-2011} and \citet{CheangLian-2015} for exchange options.

Dependencies in the Wiener components in the stock price equations and volatility equations are assumed to be the following:
\begin{align}
\dif W_{1,t}\dif W_{2,t} & = \rho_w\dif t\\
\dif W_{i,t}\dif Z_{i,t} & = \rho_{wz_i}\dif t, \qquad i=1,2\\
\dif Z_{1,t}\dif Z_{2,t} & = \rho_z\dif t.
\end{align}
Furthermore, we assume that $\dif W_{1,t}\dif Z_{2,t} = \dif W_{1,t}\dif Z_{2,t} = 0$, as correlations between Wiener components in the stock price and volatility processes of stocks 1 and 2 have already been introduced. In tabular form, the correlation structure of the model is summarized as follows:
\begin{center}
\begin{tabular}{c|cccc}
 & $W_{1,t}$ & $W_{2,t}$ & $Z_{1,t}$ & $Z_{2,t}$ \\ \hline
$W_{1,t}$ & 1 & $\rho_w$ & $\rho_{wz_1}$ & 0 \\
$W_{2,t}$ & $\rho_w$ & 1 & 0 & $\rho_{wz_2}$ \\
$Z_{1,t}$ & $\rho_{wz_1}$ & 0 & 1 & $\rho_z$ \\
$Z_{2,t}$ & 0 & $\rho_{wz_2}$ & $\rho_z$ & 1
\end{tabular}
\end{center}
Let $\bm{\Sigma}$ denote the correlation matrix of the Wiener processes as given above. We assume that $\rho_w$ and $\rho_z$ are not equal to $\pm 1$.

\begin{rem}
In the succeeding analysis, the correlation coefficients $\rho_w$ and $\rho_z$ may be set to zero as simplifying assumptions.
\end{rem}

We make the following assumptions on the correlation coefficients and the coefficients of the volatility processes.

\begin{asp}
\label{asp-ParameterAssumptions}
Assume that $Z_{1,t}$ and $Z_{2,t}$ are uncorrelated (i.e. $\rho_z=0$). Furthermore, assume that the coefficients $\xi_i$, $\eta_i$, and $\sigma_i$ are positive and satisfy 
\begin{equation}
2\xi_i\eta_i\geq\sigma_i^2, \qquad i=1,2.
\end{equation}
Lastly, assume that
\begin{equation}
-1 < \rho_{wz_i} < \min\left\{\frac{\xi_i}{\sigma_i},1\right\}, \qquad i=1,2.
\end{equation}
\end{asp}

The condition $2\xi_i\eta_i\geq\sigma_i^2$ is required to ensure that the volatility processes do not hit zero or explode in finite time under the objective market measure $\prob$ \citep{AndersenPiterbarg-2007, Bertini-2008, CheangChiarellaZiogas-2011}. The condition on $\rho_{wz_i}$ ensures that the volatility processes do not do the same under the equivalent martingale measure and the equivalent measures considered in the change-of-num\'eraire procedure. This will be justified in the succeeding sections.

Denote by $\kappa_i$ the expected jump-size increment under $\prob$, which is given by
\begin{equation}
\kappa_i = \int_\mathbb{R}(e^{y_i}-1)m_\prob(\dif y_i).
\end{equation}
Then equation \eqref{eqn-SVJD-ExcOp-Stock} may be rewritten as
\begin{equation}
\label{eqn-SVJD-ExcOp-Stock2}
\dif S_{i,t} = (\mu_i-\lambda_i\kappa_i)S_{i,t-}\dif t+\sqrt{v_{i,t}}S_{i,t-}\dif W_{i,t}+S_{i,t-}\int_\mathbb{R}(e^{y_i}-1)p(\dif y_i,\dif t).
\end{equation}
We assume that the moment generating function of jump sizes, denoted by $M_{\prob,Y_i}(u) = \E_\prob[e^{uY_i}]$, exists but the distribution is not specified.

By It\^{o}'s Lemma for jump-diffusion processes \citep[see][]{Runggaldier-2003}, the dynamics of the log-price $X_{i,t} = \ln S_{i,t}$, provided $S_{i,t}$ satisfies equation \eqref{eqn-SVJD-ExcOp-Stock2}, is given by
\begin{equation}
\dif X_{i,t} = \left(\mu_i-\lambda_i\kappa_i-\frac{1}{2}v_{i,t}\right)\dif t+\sqrt{v_{i,t}}\dif W_{i,t}+\int_{\mathbb{R}}y_i p(\dif y_i,\dif t),
\end{equation}
for $i=1,2$. This implies that equation \eqref{eqn-SVJD-ExcOp-Stock} admits a solution of the form
\begin{equation}
\label{eqn-SVJD-ExcOp-Stock-Soln}
S_{i,t} = S_{i,0}\exp\left\{(\mu_i-\lambda_i\kappa_i)t-\frac{1}{2}\int_0^t v_{i,s}\dif s+\int_0^t \sqrt{v_{i,s}}\dif W_{i,s}+\sum_{n=1}^{N_{i,t}}Y_{i,n}\right\},
\end{equation}
for $i=1,2$. Let $\tilde{S}_{i,t} = e^{-(r-q)t}S_{i,t}$. Then $\{\tilde{S}_{i,t}\}$ gives the discounted yield process for stock $i$ and its $\prob$-dynamics is given by the SDE
\begin{align}
\begin{split}
\dif\tilde{S}_{i,t}
	& = (\mu_i+q_i-r-\lambda_i\kappa_i)\tilde{S}_{i,t-}\dif t+\sqrt{v_{i,t}}\tilde{S}_{i,t-}\dif W_{i,t}\\
	& \qquad + \tilde{S}_{i,t-}\int_\mathbb{R}(e^{y_i}-1)p(\dif y_i,\dif t).
\end{split}
\end{align}
We investigate later the dynamics of the discounted yield process under the equivalent risk-neutral measure in line with the option pricing problem.

With respect to the expressions obtained for the log-price and stock price dynamics, the non-explosion of the volatility process ensures that the integrals $\int_0^t v_{i,s}\dif s$ and $\int_0^t \sqrt{v_{i,s}}\dif W_{i,s}$ are properly defined.\footnote{A discussion of sensible integrands for stochastic integrals can be found in \citet{Shreve-2004} and \citet{Kuo-2006}.} Furthermore, it also guarantees that the process $\{\calM_{i,t}\}$, where
\begin{equation}
\label{eqn-StochExp}
\calM_{i,t} = \exp\left\{-\frac{1}{2}\int_0^t v_{i,s}\dif s+\int_0^t \sqrt{v_{i,s}}\dif W_{i,s}-\lambda_i\kappa_i t+\sum_{n=1}^{N_{i,t}}Y_{i,n}\right\}, \qquad i=1,2,
\end{equation} 
is a $\prob$-martingale (see Appendix \ref{sec-app-MartingaleProperty}).

\section{A Change of Measure Mechanism}
\label{sec-RadonNikodymDerivative}

Let $\vm{B}_t = (W_{1,t},W_{2,t},Z_{1,t},Z_{2,t})^\top$ be a vector of standard $\prob$-Wiener processes with correlation matrix $\bm{\Sigma}$ and let $Q_{1,t} = \sum_{n=1}^{N_{1,t}}Y_{1,n}$ and $Q_{2,t} = \sum_{n=1}^{N_{2,t}} Y_{2,n}$ be compound Poisson processes. In order to achieve a risk-neutral valuation of the price of an exchange option, we must find a suitable Radon-Nikod\'ym derivative that translates the situation from the market measure $\prob$ to a risk-neutral probability measure $\Q$ under which the discounted stock yield processes are martingales.

Let $\bm{\theta}_t = (\psi_{1,t},\psi_{2,t},\zeta_{1,t},\zeta_{2,t})^\top$ be a vector of real-valued adapted processes. The parameters $\psi_{i,t}$ and $\zeta_{i,t}$ are related to the market price of Wiener risk and market price of volatility risk associated to asset $i$, as shall be seen in the succeeding discussion. Assumption \ref{asp-ParameterAssumptions} ensures that these market prices of risks are strictly positive and do not explode in finite time \citep{CheangChiarellaZiogas-2013}.

Following \citet{Runggaldier-2003}, \citet{CheangChiarella-2011}, \citet{CheangChiarellaZiogas-2013}, \citet{CheangTeh-2014}, we state the following Radon-Nikod\'ym derivative to facilitate a change of measure from $\prob$ to $\Q$, inducing a drift in the components of $\vm{B}_t$ under $\Q$, a change in intensity of the Poisson processes, and a change in density of the jump size variables.

\begin{prop}
\label{prop-ChangeofMeasure}
Let $(\Omega,\calF,\{\calF_t\},\prob)$ be a probability space such that $\{\calF_t\}$ is the natural filtration generated by $\vm{B}_t$, $Q_{1,t}$, and $Q_{2,t}$ (as defined above). Let $L_t$ be given by the equation
\begin{align}
\begin{split}
\label{eqn-SVJD-RNDerivative}
L_t	& = \exp\left\{-\int_0^t(\bm{\Sigma}^{-1}\bm{\theta}_s)^\top\dif\vm{B}_s-\frac{1}{2}\int_0^t\bm{\theta}_s^\top\bm{\Sigma}^{-1}\bm{\theta}_s\dif s\right\}\\
		& \qquad \times\exp\left\{\sum_{n=1}^{N_{1,t}}(\gamma_1 Y_{1,n}+\nu_1)-\lambda_1 t\left(e^{\nu_1}\E_\prob(e^{\gamma_1 Y_{1}})-1\right)\right\}\\
		& \qquad \times\exp\left\{\sum_{n=1}^{N_{2,t}}(\gamma_2 Y_{2,n}+\nu_2)-\lambda_2 t\left(e^{\nu_2}\E_\prob(e^{\gamma_2 Y_{2}})-1\right)\right\}
\end{split}
\end{align}
and suppose that $\{L_t\}$ is a strict $\prob$-martingale such that $\E_\prob[L_t]=1$. Then $L_T$ is the Radon-Nikod\'ym derivative of some probability measure $\Q$ equivalent to $\prob$ and the following hold:
\begin{enumerate}
	\item $W_{i,t}$ and $Z_{i,t}$ have drift $-\psi_{i,t}$ and $-\zeta_{i,t}$, respectively for $i=1,2$, under $\Q$;
	\item the compound Poisson process $Q_{i,t} = \sum_{n=1}^{N_{1,t}} Y_{i,n}$ has a new intensity rate
		\begin{equation}
		\tilde{\lambda}_i = \lambda_i e^{\nu_i}\E_\prob[e^{\gamma_i Y_i}], \qquad i=1,2
		\end{equation}
		under $\Q$; and
	\item the moment generating function of jump sizes under $\Q$ is given by
		\begin{equation}
		M_{\Q,Y_i}(u) = \frac{M_{\prob,Y_i}(u+\gamma_i)}{M_{\prob,Y_i}(\gamma_i)}, \qquad i=1,2.
		\end{equation}
\end{enumerate}
\end{prop}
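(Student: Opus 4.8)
The plan is to exploit the multiplicative structure of $L_t$: it factors as $L_t = L_t^{(0)}\,L_t^{(1)}\,L_t^{(2)}$, where $L_t^{(0)}$ is the exponential--Wiener factor built from $\vm{B}_t$ and $\bm{\theta}_t$, and $L_t^{(i)}$ ($i=1,2$) is the exponential--Poisson (Esscher-type) factor associated with the marked Poisson process of asset $i$. Because $\vm{B}_t$ and the two marked Poisson processes are mutually independent under $\prob$ (and the market-price-of-risk processes $\psi_{i,t},\zeta_{i,t}$ are driven by the Wiener components through the variance processes), these three factors are independent, and each is a nonnegative $\prob$-local martingale; since their product has constant expectation $1$, each factor has unit expectation and is therefore a true $\prob$-martingale. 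This lets me analyse the change of measure factor by factor: Girsanov's theorem handles $L_t^{(0)}$, and the Esscher transform for marked point processes (in the sense of Runggaldier, as cited) handles $L_t^{(1)}$ and $L_t^{(2)}$.

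For claim (1), I would apply the multivariate Girsanov theorem to $L_t^{(0)}$. Writing $\bm{\theta}_t=(\psi_{1,t},\psi_{2,t},\zeta_{1,t},\zeta_{2,t})^\top$, the assertion is that $\hat{\vm{B}}_t := \vm{B}_t + \int_0^t\bm{\theta}_s\dif s$ is a $\Q$-Brownian motion with the same correlation matrix $\bm{\Sigma}$; read componentwise, this is exactly the statement that $W_{i,t}$ and $Z_{i,t}$ acquire $\Q$-drifts $-\psi_{i,t}$ and $-\zeta_{i,t}$. By L\'evy's characterization it suffices to show $\hat{\vm{B}}_t$ is a continuous $\Q$-local martingale with $\langle\hat B^{(j)},\hat B^{(k)}\rangle_t=\Sigma_{jk}t$; the quadratic covariation is inherited unchanged from $\vm{B}_t$, and the $\Q$-local-martingale property follows from the abstract Bayes rule once $L_t\hat{\vm{B}}_t$ is shown to be a $\prob$-local martingale. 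That is an integration-by-parts computation from $\dif L_t^{(0)} = -L_t^{(0)}(\bm{\Sigma}^{-1}\bm{\theta}_t)^\top\dif\vm{B}_t$, and the point is the algebraic identity, using the symmetry of $\bm{\Sigma}$, that $\dif\langle L^{(0)},B^{(j)}\rangle_t = -L_t^{(0)}\theta_{j,t}\dif t$, which cancels precisely the drift $\theta_{j,t}\dif t$ of $\hat B^{(j)}$ (the purely discontinuous factors $L^{(1)},L^{(2)}$, being independent of $\vm{B}$, contribute nothing here).

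For claims (2) and (3), I would identify the $\Q$-law of each marked Poisson process via its Laplace functional. Fix $i$ and, for $u$ in the domain where $M_{\prob,Y_i}$ is finite, set $\tilde\lambda_i := \lambda_i e^{\nu_i}M_{\prob,Y_i}(\gamma_i) = \lambda_i e^{\nu_i}\E_\prob[e^{\gamma_i Y_i}]$ and $M_{\Q,Y_i}(u) := M_{\prob,Y_i}(u+\gamma_i)/M_{\prob,Y_i}(\gamma_i)$. Using Bayes' rule, the factorization of $L_t$, and the independence of the factors, showing that $Q_{i,t}$ is a $\Q$-compound-Poisson process with intensity $\tilde\lambda_i$ and jump-size moment generating function $M_{\Q,Y_i}$ reduces to checking that $L_t^{(i)}\exp\{uQ_{i,t}-\tilde\lambda_i t(M_{\Q,Y_i}(u)-1)\}$ is a $\prob$-martingale. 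Combining exponents and invoking the standard compound-Poisson Laplace functional $\E_\prob[\exp\{\sum_{n=1}^{N_{i,t}}((\gamma_i+u)Y_{i,n}+\nu_i)\}] = \exp\{\lambda_i t(e^{\nu_i}M_{\prob,Y_i}(\gamma_i+u)-1)\}$, the deterministic terms cancel after substituting the definitions of $\tilde\lambda_i$ and $M_{\Q,Y_i}$, giving expectation $1$; the same computation on increments, using stationarity and independence of increments of the $\prob$-compound-Poisson process, yields the martingale property, and uniqueness of Laplace transforms pins down the $\Q$-law.

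The only places needing care are the routine localization/integrability in the integration-by-parts step and the precise domain of $u$; there is no substantive obstacle, because the hypothesis that $\{L_t\}$ is a strict $\prob$-martingale with $\E_\prob[L_t]=1$ (its justification via Assumption \ref{asp-ParameterAssumptions} and the non-explosion of the variance-driven market prices of risk being taken as given) removes the one genuinely delicate point, namely that the candidate density loses no mass. The main thing to get right is the bookkeeping with $\bm{\Sigma}$ in the Girsanov step, so that the induced $\Q$-drift on $W_{i,t}$ comes out as exactly $-\psi_{i,t}$ rather than a $\bm{\Sigma}$-weighted combination of the $\psi$'s and $\zeta$'s.
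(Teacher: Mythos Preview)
Your proposal is correct and follows exactly the standard route the paper defers to: the paper's own proof consists solely of the sentence ``The proof is similar to those presented by \citet{Runggaldier-2003} and \citet{CheangTeh-2014},'' and your factorization into a Girsanov piece for the correlated Wiener vector plus two independent Esscher-type transforms for the marked Poisson processes is precisely the argument in those references. Your handling of the $\bm{\Sigma}$ bookkeeping in the Girsanov step (so that the induced drift on the $j$th component is $-\theta_{j,t}$ via $\bm{\Sigma}\bm{\Sigma}^{-1}\bm{\theta}_t=\bm{\theta}_t$) and your Laplace-functional identification of the $\Q$-compound-Poisson law are both sound.
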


\begin{proof}
The proof is similar to those presented by \citet{Runggaldier-2003} and \citet{CheangTeh-2014}.
\end{proof}

\begin{rem}
In the subsequent analysis, any change of measure will be facilitated by a Radon-Nikod\'ym derivative of the form given in equation \eqref{eqn-SVJD-RNDerivative} and the properties of the new probability measure will be reflected in the choice of parameters $\bm{\theta}_t$, $\gamma_1$, $\gamma_2$, $\nu_1$, and $\nu_2$. 
\end{rem}

Since it is assumed that the jump components in the model and in the Radon-Nikod\'ym deirvative are independent of the Wiener components, the new distributions of these components can be obtained separately from the corresponding components in equation \eqref{eqn-SVJD-RNDerivative} \citep{CheangTeh-2014}. That is, the first factor of equation \eqref{eqn-SVJD-RNDerivative} facilitates the change of measure in the Wiener processes, whereas the last two factors handle the change in distribution of the jump components in the transition to a new probability measure.

We assume that the parameters $\gamma_1$, $\gamma_2$, $\nu_1$, and $\nu_2$ are constant so that the Poisson processes $N_{1,t}$ and $N_{2,t}$ remain homogenous and that the jump sizes remain identically distributed under $\Q$ \citep{CheangChiarellaZiogas-2013}. Also, the independence of $\vm{B}_t$, $Q_{1,t}$, and $Q_{2,t}$ allows the multiplicative nature of the Radon-Nikod\'ym derivative.


\citet{CheangChiarella-2011} proposed a number of ways to select appropriate values of the parameters of the Radon-Nikod\'ym derivative. Among their suggestions is the selection of parameters that induce the minimum entropy martingale measure in \citet{Miyahara-1999}. The observation that there are infinitely many equivalent probability measures also stems from the fact that the market under the SVJD model is incomplete in the \citet{HarrisonPliska-1981} sense. On top of the Wiener components of the stock price dynamics, the stochastic volatility components add two additional sources of randomness from their own Wiener components and the jump components also induce additional randomness \citep{CheangChiarellaZiogas-2013}.

We now investigate the $\Q$-dynamics of the discounted yield processes. From Proposition \ref{prop-ChangeofMeasure}, we can write
\begin{equation}
\dif W_{i,t} = -\psi_{i,t}\dif t+\dif\tilde{W}_{i,t}, \qquad i=1,2,
\end{equation}
where $\{\tilde{W}_{i,t}\}$ is a standard $\Q$-Wiener process. The conjecture also implies that the $\Q$-local characteristics of the counting measure $p(\dif y_i,\dif t)$ are given by $(\tilde{\lambda}_i, m_\Q(\dif y_i))$, where
\begin{equation}
m_\Q(\dif y_i) = \frac{e^{\gamma_i y_i}}{\E_\prob[e^{\gamma_i Y_i}]}m_\prob(\dif y_i),
\end{equation}
for $i=1,2$. As in \citet{Runggaldier-2003}, define the $\Q$-compensated counting measure $q(\dif y_i,\dif t)$ as
\begin{equation}
q(\dif y_i,\dif t) = p(\dif y_i,\dif t)-\tilde{\lambda}_i m_\Q(\dif y_i)\dif t.
\end{equation}
As was shown in the prior section, the $\prob$-dynamics of the discounted yield process $\{\tilde{S}_{i,t}\}$ is given by the SDE $$\dif\tilde{S}_{i,t} = \tilde{S}_{i,t-}\left\{(\mu_i+q_i-r-\lambda_i\kappa_i)\dif t+\sqrt{v}_{i,t}\dif W_{i,t}+\int_\mathbb{R}(e^{y_i}-1)p(\dif y_i,\dif t)\right\}.$$ Substituting the expressions for $\dif W_{i,t}$ and $q(\dif y_i,\dif t)$ above, we obtain
\begin{align*}
\dif\tilde{S}_{i,t}
	& = \tilde{S}_{i,t-}(\mu_i+q_i-r-\lambda_i\kappa_i)\dif t+\sqrt{v_{i,t}}\tilde{S}_{i,t-}(-\psi_{i,t}\dif t+\dif\tilde{W}_{i,t})\\
	& \qquad +\tilde{S}_{i,t-}\int_\mathbb{R}(e^{y_i}-1)\left[q(\dif y_i,\dif t)+\tilde{\lambda}_i m_\Q(\dif y_i)\dif t\right]
\end{align*}
If we let
\begin{equation}
\tilde{\kappa}_i = \int_\mathbb{R}(e^{y_i}-1)m_\Q(\dif y_i) = \E_\Q\left[e^{Y_i}-1\right]
\end{equation}
be the mean relative jump size increment under $\Q$, then $\dif\tilde{S}_{i,t}$ can be simplified to
\begin{align}
\begin{split}
\dif\tilde{S}_{i,t}
	& = \tilde{S}_{i,t-}\left(\mu_i+q_i-r-\lambda_i\kappa_i+\tilde{\lambda}_i\tilde{\kappa}_i-\sqrt{v_{i,t}}\psi_{i,t}\right)\dif t+\sqrt{v_{i,t}}\tilde{S}_{i,t-}\dif\tilde{W}_{i,t}\\
	& \qquad +\tilde{S}_{i,t-}\int_\mathbb{R}(e^{y_i}-1)q(\dif y_i,\dif t).
\end{split}
\end{align}
From here, we can choose the market price of $W_{i,t}$ risk as
\begin{equation}
\psi_{i,t} = \frac{\mu_i+q_i-r-\lambda_i\kappa_i+\tilde{\lambda}_i\tilde{\kappa}_i}{\sqrt{v_{i,t}}},
\end{equation}
which is the risk premium of the stock $\mu_i+q_i-r$ less the jump risk $\lambda_i\kappa_i-\tilde{\lambda}_i\tilde{\kappa}_i$ per unit volatility $\sqrt{v_{i,t}}$. It follows that
\begin{equation}
\dif\tilde{S}_{i,t} = \tilde{S}_{i,t-}\left\{\sqrt{v_{i,t}}\dif\tilde{W}_{i,t}+\int_\mathbb{R}(e^{y_i}-1)q(\dif y_i,\dif t)\right\}.
\end{equation}
and so with this selection for $\psi_{i,t}$, it is clear that the discounted yield process $\{e^{-(r-q_i)t}S_{i,t}\}$ is a martingale under $\Q$.

From the above equation, we can recover the dynamics of the stock price $S_{i,t}$ under $\Q$. Noting that $S_{i,t} = \tilde{S}_{i,t}e^{(r-q_i)t}$, stochastic integration by parts yields
\begin{align*}
\dif S_{i,t}
	& = \tilde{S}_{i,t-}(r-q_i)e^{(r-q_i)t}\dif t+e^{(r-q_i)t}\dif\tilde{S}_{i,t}\\
	& = S_{i,t-}(r-q_i)\dif t+e^{(r-q_i)t}\tilde{S}_{i,t-}\left\{\sqrt{v_{i,t}}\dif\tilde{W}_{i,t}+\int_\mathbb{R}(e^{y_i}-1)q(\dif y_i,\dif t)\right\}\\
	& = S_{i,t-}\left\{(r-q_i)\dif t+\sqrt{v_{i,t}}\dif\tilde{W}_{i,t}+\int_\mathbb{R}(e^{y_i}-1)q(\dif y_i,\dif t)\right\}.
\end{align*}
Equivalently, we can write
\begin{equation}
\label{eqn-QDynamics-Stock}
\dif S_{i,t} = S_{i,t-}\left\{(r-q_i-\tilde{\lambda}_i\tilde{\kappa}_i)\dif t+\sqrt{v_{i,t}}\dif\tilde{W}_{i,t}+\int_\mathbb{R}(e^{y_i}-1)p(\dif y_i,\dif t)\right\}.
\end{equation}
From here, the $\Q$-dynamics of the log-price $X_{i,t}=\ln S_{i,t}$ is given by
\begin{align}
\begin{split}
\dif X_{i,t}
	& = \left(r-q_i-\tilde{\lambda}\tilde{\kappa}_i-\frac{1}{2}v_{i,t}\right)\dif t+\sqrt{v_{i,t}}\dif\tilde{W}_{i,t}+\int_{\mathbb{R}}y_i p(\dif y_i,\dif t). 
\end{split}
\end{align}
From here, it can be seen that the solution $S_{i,t}$, for $0<t\leq T$, to equation \eqref{eqn-QDynamics-Stock} is given by
\begin{equation}
\label{eqn-QSolution-Stock}
S_{i,t} = S_{i,0}\exp\left\{(r-q_i-\tilde{\lambda}_i\tilde{\kappa}_i)t-\frac{1}{2}\int_0^t v_{i,s}\dif s+\int_0^t\sqrt{v_{i,s}}\dif\tilde{W}_{i,s}+\sum_{n=1}^{N_{i,t}}Y_{i,n}\right\}.
\end{equation}

Let $\tilde{Z}_{1,t}$ and $\tilde{Z}_{2,t}$ be standard Wiener processes under $\Q$. Then by Proposition \ref{prop-ChangeofMeasure}, we can write
\begin{equation}
\dif Z_{i,t} = -\zeta_{i,t}\dif t+\dif\tilde{Z}_{i,t}, \qquad i=1,2.
\end{equation}
Thus, the $\Q$-dynamics of the volatility processes are given by
\begin{align*}
\dif v_{i,t}
	& = \xi_i(\eta_i-v_{i,t})\dif t+\sigma_i\sqrt{v_{i,t}}(-\zeta_{i,t}\dif t+\dif\tilde{Z}_{i,t})\\
	& = \left[\xi_i(\eta_i-v_{i,t})-\zeta_{i,t}\sigma_i\sqrt{v_{i,t}}\right]\dif t+\sigma_i\sqrt{v_{i,t}}\dif\tilde{Z}_{i,t}.
\end{align*}
The quantity $\zeta_{i,t}\sigma_i\sqrt{v_{i,t}}$ is interpreted as the market price of volatility risk and is assumed to be independent of the asset price and proportional to current volatility $v_{i,t}$ \citep{Heston-1993}. That is, for some constant $\Lambda_i$, we can write the market price of risk as
\begin{equation}
\zeta_{i,t}\sigma_i\sqrt{v_{i,t}} = \Lambda_i v_{i,t}.
\end{equation}
The constant of proportionality $\Lambda_i$ must be nonnegative to keep consistent with the financial argument that investors demand a positive premium for volatility risk \citep{CheangChiarellaZiogas-2013}. Thus, we can re-express the $\Q$-dynamics of $v_{i,t}$ as
\begin{align}
\begin{split}
\label{eqn-QDynamics-Volatility}
\dif v_{i,t}
	& = \left[\xi_i \eta_i-(\xi_i+\Lambda_i)v_{i,t}\right]\dif t+\sigma_i\sqrt{v_{i,t}}\dif\tilde{Z}_{i,t}\\
	& = (\xi_i+\Lambda_i)\left[\frac{\xi_i\eta_i}{\xi_i+\Lambda_i}-v_{i,t}\right]\dif t+\sigma_i\sqrt{v_{i,t}}\dif\tilde{Z}_{i,t}.
\end{split}
\end{align}

\begin{rem}
The form of $\psi_{i,t}$ and $\zeta_{i,t}$ requires that $v_{i,t}$ is nonzero and finite. These are guaranteed by Assumption \ref{asp-ParameterAssumptions}.
\end{rem}

Assumption \ref{asp-ParameterAssumptions} ensures that, under $\Q$, the volatility processes neither hit zero nor explode. Indeed, if $\xi_i' = \xi_i+\Lambda_i$ and $\eta_i' = \xi_i\eta_i/(\xi_i+\Lambda_i)$, we find that $$2\xi_i'\eta_i' = 2(\xi_i+\Lambda_i)\cdot\frac{\xi_i\eta_i}{\xi_i+\Lambda_i} = 2\xi_i\eta_i\geq \sigma_i^2,$$ where the last inequality is due to Assumption \ref{asp-ParameterAssumptions}.

\section{An Integro-Partial Differential Equation for Exchange Option Prices}
\label{sec-IPDE}


Consider a European exchange option based on two assets with prices $S_{1,t}$ and $S_{2,t}$ such that the final payoff is $(S_{1,T}-S_{2,T})^+$. Due to the Markov property of the vector process $(S_{1,t},S_{2,t},v_{1,t},v_{2,t})^\top$ and the final payoff not being dependent on the entire history of stock prices, the time $t$ price of the European exchange option, denoted by $C_t^E$, is a function of only $t$, $S_{1,t}$, $S_{2,t}$, $v_{1,t}$, and $v_{2,t}$ \citep{ContTankov-2004, CheangChiarellaZiogas-2013}. We can thus write $C_t^E(S_{1,t},S_{2,t},v_{1,t},v_{2,t})$ to denote the time $t$ price of the European exchange option. In the same vein, denote by $C_t^A(S_{1,t},S_{2,t},v_{1,t},v_{2,t})$ the price at time $t$ of the American exchange option. If the European and American exchange options both expire at time $T$, then the terminal payoff is given by
\begin{equation}
C_T^E(S_{1,T},S_{2,T},v_{1,T},v_{2,T}) = C_T^A(S_{1,T},S_{2,T},v_{1,T},v_{2,T}) = (S_{1,T}-S_{2,T})^+.
\end{equation}

Let $\Q$ the the risk-neutral measure determined by the Radon-Nikod\'ym derivative in Proposition \ref{prop-ChangeofMeasure}. Then, the risk-neutral price of the European exchange option is given by
\begin{align}
\begin{split}
\label{eqn-EuropeanExchangeOptionPrice}
C_t^E(S_{1,t},S_{2,t},v_{1,t},v_{2,t})
	& = e^{-r(T-t)}\E_\Q\left[\left.(S_{1,T}-S_{2,T})^+\right|\calF_t\right]\\
	& = e^{-r(T-t)}\E_\Q\left[\left.(S_{1,T}-S_{2,T})^+\right|S_{1,t},S_{2,t},v_{1,t},v_{2,t}\right].
\end{split}
\end{align}
Furthermore, let $\mathscr{T}_T$ denote the collection of all stopping times $\tau$ in the interval $[0,T]$ with respect to the filtration $\{\calF_t\}$. Then the price of the American exchange option is given by \citep{Bjerskund-1993}
\begin{equation}
\label{eqn-AmericanExchangeOptionPrice}
C_t^A(S_{1,t},S_{2,t},v_{1,t},v_{2,t}) = \sup_{\tau\in\mathscr{T}_T}\E_\Q\left[\left.e^{-r(\tau-t)}(S_{1,\tau}-S_{2,\tau})^+\right|S_{1,t},S_{2,t},v_{1,t},v_{2,t}\right].
\end{equation}

In order to apply It\^{o}'s formula for jump-diffusion processes, we require the following assumption on the European and American option price formulas.

\begin{asp}
\label{asp-OptionPriceDifferentiability}
For $t>0$, the functions $$C_t^E(S_{1,t},S_{2,t},v_{1,t},v_{2,t}), \qquad C_t^A(S_{1,t},S_{2,t},v_{1,t},v_{2,t})$$ are at least twice-differentiable in the stock price and volatility variables with continuous second-order partial derivatives. Assume also that these functions have a continuous first-order partial derivative with respect to $t$.
\end{asp}

For now, we do not impose the assumption $\rho_z=0$, as this is not needed in the derivation of the IPDE. We do, however, require that the other conditions in Assumption \ref{asp-ParameterAssumptions} hold.

Given the $\Q$-dynamics of asset prices and volatility processes in equations \eqref{eqn-QDynamics-Stock} and \eqref{eqn-QDynamics-Volatility}, respectively, we can now solve for the stochastic differential equation for exchange option prices $C_t(S_{1,t},S_{2,t},v_{1,t},v_{2,t})$ under $\Q$. Let $$C_{t-} = C_t(S_{1,t-},S_{2,t-},v_{1,t},v_{2,t})$$ denote the pre-jump price of the exchange option in the event that at time $t$ there is a jump in either $S_{1,t}$ or $S_{2,t}$. From the dynamics of the exchange option price, we can then derive the corresponding pricing IPDE, as shown in the next proposition. Note that different boundary conditions will be used to characterize the American exchange option price from the derived IPDE.

\begin{prop}
\label{prop-SVJD-PricingIPDE}
Given the that asset prices and volatilities have $\Q$-dynamics given by equations \eqref{eqn-QDynamics-Stock} and \eqref{eqn-QDynamics-Volatility}, the exchange option price $C_t(S_{1,t},S_{2,t},v_{1,t},v_{2,t})$ satisfies the IPDE
\begin{align}
\begin{split}
\label{eqn-SVJD-PricingIPDE}
rC_{t-}
	& = \calL[C_{t-}]+ \tilde{\lambda}_1\E_\Q^{Y_1}\left[C_t\left(S_{1,t-}e^{Y_1},S_{2,t-},v_{1,t},v_{2,t}\right)-C_{t-}\right]\\
	& \qquad + \tilde{\lambda}_2\E_\Q^{Y_2}\left[C_t\left(S_{1,t-}, S_{2,t-}e^{Y_2},v_{1,t},v_{2,t}\right)-C_{t-}\right],
\end{split}
\end{align}
where the differential operator $\calL$ is defined by
\begin{align}
\begin{split}
\label{eqn-SVJD-IPDEOperator}
\calL[f]
	& = \pder[f]{t}+\sum_{i=1}^2(r-q_i-\tilde{\lambda}_i\tilde{\kappa}_i)S_{i,t-}\pder[f]{s_i}+\sum_{i=1}^2[\xi_i\eta_i-(\xi_i+\Lambda_i)v_{i,t}]\pder[f]{v_i}\\
	& \qquad + \frac{1}{2}\sum_{i=1}^2 v_{i,t}S_{i,t-}^2\pder[^2 f]{s_i^2}+\frac{1}{2}\sum_{i=1}^2 \sigma_i^2 v_{i,t}\pder[^2 f]{v_i^2}+\rho_w\sqrt{v_{1,t} v_{2,t}}S_{1,t-}S_{2,t-}\pder[^2 f]{s_1\partial s_2}\\
	& \qquad + \rho_{wz_1}\sigma_1 v_{1,t}S_{1,t-}\pder[^2 f]{s_1\partial v_1} + \rho_{wz_2}\sigma_2 v_{2,t}S_{2,t-}\pder[^2 f]{s_2\partial v_2}+\rho_Z\sigma_1\sigma_2\sqrt{v_{1,t}v_{2,t}}\pder[^2 f]{v_1\partial v_2},
\end{split}
\end{align}
\end{prop}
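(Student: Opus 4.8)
The plan is to apply It\^o's formula for jump-diffusion processes \citep[see][]{Runggaldier-2003} to the discounted exchange option price $e^{-rt}C_t(S_{1,t},S_{2,t},v_{1,t},v_{2,t})$ and then exploit the fact that, under $\Q$, this process is a martingale. For the European option the martingale property is immediate from \eqref{eqn-EuropeanExchangeOptionPrice} and the tower property, since $e^{-rt}C_t^E=e^{-rT}\E_\Q[(S_{1,T}-S_{2,T})^+\mid\calF_t]$; for the American option the same computation yields \eqref{eqn-SVJD-PricingIPDE} on the continuation region, where early exercise is suboptimal and the discounted price is again a $\Q$-martingale. Assumption \ref{asp-OptionPriceDifferentiability} supplies the regularity ($C^1$ in $t$, $C^2$ in the state variables) needed to invoke It\^o's formula, while Assumption \ref{asp-ParameterAssumptions} keeps the variance processes strictly positive and non-explosive so that the integrals appearing below are well-defined.

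First I would insert the $\Q$-dynamics \eqref{eqn-QDynamics-Stock}--\eqref{eqn-QDynamics-Volatility} into It\^o's formula and split the result into a finite-variation (drift) part and a local-martingale part. The continuous diffusive contributions produce the first-order drift terms $\sum_{i=1}^2(r-q_i-\tilde\lambda_i\tilde\kappa_i)S_{i,t-}\pder[C_{t-}]{s_i}$ and $\sum_{i=1}^2[\xi_i\eta_i-(\xi_i+\Lambda_i)v_{i,t}]\pder[C_{t-}]{v_i}$, and the second-order terms are assembled from the quadratic covariations prescribed by $\bm{\Sigma}$. In particular, because $\dif W_{1,t}\dif Z_{2,t}=\dif W_{2,t}\dif Z_{1,t}=0$, the mixed derivatives $\pder[^2 C_{t-}]{s_1\partial v_2}$ and $\pder[^2 C_{t-}]{s_2\partial v_1}$ do not appear; collecting the surviving drift and Hessian terms with $\pder[C_{t-}]{t}$ reproduces exactly the operator $\calL$ of \eqref{eqn-SVJD-IPDEOperator}.

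Next I would treat the jump part. Because the two marked Poisson processes are independent, $S_1$ and $S_2$ jump simultaneously with $\Q$-probability zero, so a jump at time $t$ changes $C$ either to $C_t(S_{1,t-}e^{y_1},S_{2,t-},v_{1,t},v_{2,t})$ or to $C_t(S_{1,t-},S_{2,t-}e^{y_2},v_{1,t},v_{2,t})$ (note $v_{i,t-}=v_{i,t}$, as the variance paths are continuous). Writing $p(\dif y_i,\dif t)=q(\dif y_i,\dif t)+\tilde\lambda_i m_\Q(\dif y_i)\dif t$ splits each jump term into a $\Q$-martingale increment against the compensated measure $q$ and a drift increment $\tilde\lambda_i\E_\Q^{Y_i}[C_t(\dots S_{i,t-}e^{Y_i}\dots)-C_{t-}]\dif t$. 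Combining this with $\dif(e^{-rt}C_t)=-re^{-rt}C_{t-}\dif t+e^{-rt}\dif C_t$ (the discount factor being continuous and of finite variation), the total drift of $e^{-rt}C_t$ equals $e^{-rt}$ times
\begin{align*}
&-rC_{t-}+\calL[C_{t-}]\\
&\qquad+\tilde\lambda_1\E_\Q^{Y_1}\!\left[C_t(S_{1,t-}e^{Y_1},S_{2,t-},v_{1,t},v_{2,t})-C_{t-}\right]\\
&\qquad+\tilde\lambda_2\E_\Q^{Y_2}\!\left[C_t(S_{1,t-},S_{2,t-}e^{Y_2},v_{1,t},v_{2,t})-C_{t-}\right],
\end{align*}
and the martingale property forces this to vanish, which is \eqref{eqn-SVJD-PricingIPDE} after rearrangement.

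The main obstacle is bookkeeping rather than anything conceptual: one must track carefully the pre-jump evaluation points inside $\calL$ (the stock arguments are $S_{i,t-}$, though $v_{i,t-}=v_{i,t}$), identify exactly which quadratic covariations survive under $\bm{\Sigma}$, and justify that the $\dif\tilde W_{i,t}$-, $\dif\tilde Z_{i,t}$- and $q(\dif y_i,\dif t)$-integrals are genuine $\Q$-martingales rather than merely local ones---this is where the smoothness of $C_t$ from Assumption \ref{asp-OptionPriceDifferentiability} and the non-explosion of the variance processes from Assumption \ref{asp-ParameterAssumptions} are used. For the American option one additionally restricts the derivation to the continuation region; the terminal condition $C_T=(S_{1,T}-S_{2,T})^+$ together with the distinct boundary and smooth-pasting conditions on the exercise surface are what later distinguish $C_t^A$ from $C_t^E$.
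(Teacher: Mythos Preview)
Your proposal is correct and follows essentially the same route as the paper's own proof: apply It\^o's formula for jump-diffusions to $C_t$ (or equivalently $e^{-rt}C_t$), compensate the Poisson random measures to isolate the drift from the $\Q$-martingale part, invoke Assumptions \ref{asp-ParameterAssumptions} and \ref{asp-OptionPriceDifferentiability} to upgrade the local-martingale terms, and then set the drift of the discounted price to zero. Your additional remarks on why the cross-derivatives $\pder[^2 C_{t-}]{s_1\partial v_2}$ and $\pder[^2 C_{t-}]{s_2\partial v_1}$ vanish and on the European/American distinction are accurate and mirror what the paper records in the proof and in the surrounding discussion.
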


Note that the above proposition lacks terminal and boundary conditions to specify the solution of the IPDE. A remark on these conditions will be provided after the proof of the proposition.

\begin{proof}
Using It\^{o}'s formula for jump-diffusion processes \citep[see][]{Runggaldier-2003, Shreve-2004}, we find that $C_t$ satisfies the stochastic differential equation
\begin{align*}
\dif C_t
	& = \left\{\pder[C_{t-}]{t}+\sum_{i=1}^2(r-q_i-\tilde{\lambda}_i\tilde{\kappa}_i)S_{i,t-}\pder[C_{t-}]{s_i}+\sum_{i=1}^2[\xi_i\eta_i-(\xi_i+\Lambda_i)v_{i,t}]\pder[C_{t-}]{v_i}\right.\\
	& \qquad + \frac{1}{2}\sum_{i=1}^2 v_{i,t}S_{i,t-}^2\pder[^2 C_{t-}]{s_i^2}+\frac{1}{2}\sum_{i=1}^2 \sigma_i^2 v_{i,t}\pder[^2 C_{t-}]{v_i^2}+\rho_w\sqrt{v_{1,t} v_{2,t}}S_{1,t-}S_{2,t-}\pder[^2 C_{t-}]{s_1\partial s_2}\\
	& \qquad + \left.\rho_{wz_1}\sigma_1 v_{1,t}S_{1,t-}\pder[^2 C_{t-}]{s_1\partial v_1} + \rho_{wz_2}\sigma_2 v_{2,t}S_{2,t-}\pder[^2 C_{t-}]{s_2\partial v_2}+\rho_Z\sigma_1\sigma_2\sqrt{v_{1,t}v_{2,t}}\pder[^2 C_{t-}]{v_1\partial v_2}\right\}\dif t\\
	& \qquad + \sum_{i=1}^2 \sqrt{v_{i,t}}S_{i,t-}\pder[C_{t-}]{s_i}\dif \tilde{W}_{i,t}+\sum_{i=1}^2\sigma_i\sqrt{v_{i,t}}\pder[C_{t-}]{v_i}\dif \tilde{Z}_{i,t}\\
	& \qquad + \int_\mathbb{R}\left[C_t(S_{1,t-}e^{y_1},S_{2,t-},v_{1,t},v_{2,t})-C_{t-}\right]p(\dif y_1,\dif t)\\
	& \qquad + \int_\mathbb{R}\left[C_t(S_{1,t-},S_{2,t-}e^{y_2},v_{1,t},v_{2,t})-C_{t-}\right]p(\dif y_2,\dif t).
\end{align*}
The counting measure $p(\dif y_i,\dif t)$ can be replaced by the $\Q$-compensated counting measure $q(\dif y_i,\dif t)$, giving us
\begin{align*}
& \int_\mathbb{R}\left[C_t(S_{1,t-}e^{y_1},S_{2,t-},v_{1,t},v_{2,t})-C_{t-}\right]p(\dif y_1,\dif t)\\
& \qquad = \int_\mathbb{R}\left[C_t(S_{1,t-}e^{y_1},S_{2,t-},v_{1,t},v_{2,t})-C_{t-}\right]\left[q(\dif y_1,\dif t)+\tilde{\lambda}_1 m_\Q(\dif y_1)\dif t\right]\\
& \qquad = \tilde{\lambda}_1\E_\Q^{Y_1}\left[C_t\left(S_{1,t-}e^{Y_1},S_{2,t-},v_{1,t},v_{2,t}\right)-C_{t-}\right]\dif t\\
& \qquad \qquad + \int_\mathbb{R}\left[C_t(S_{1,t-}e^{y_1},S_{2,t-},v_{1,t},v_{2,t})-C_{t-}\right]q(\dif y_1,\dif t),
\end{align*}
where $$\E_\Q^{Y_1}\left[C_t(S_{1,t-}e^{Y_1},S_{2,t-},v_{1,t},v_{2,t})-C_{t-}\right] = \int_\mathbb{R}\left[C_t(S_{1,t-}e^{y_1},S_{2,t-},v_{1,t},v_{2,t})-C_{t-}\right]m_\Q(\dif y_1)$$ represents the expected change in the price of the exchange option due to jumps in the price of stock 1. Likewise, we can write
\begin{align*}
& \int_\mathbb{R}\left[C_t(S_{1,t-},S_{2,t-}e^{y_2},v_{1,t},v_{2,t})-C_{t-}\right]p(\dif y_2,\dif t)\\
& \qquad = \tilde{\lambda}_2\E_\Q^{Y_2}\left[C_t\left(S_{1,t-}, S_{2,t-}e^{Y_2},v_{1,t},v_{2,t}\right)-C_{t-}\right]\dif t\\
& \qquad \qquad + \int_\mathbb{R}\left[C_t(S_{1,t-},S_{2,t-}e^{y_2},v_{1,t},v_{2,t})-C_{t-}\right]q(\dif y_2,\dif t)
\end{align*}
We can thus rewrite $\dif C_t$ as
\begin{align*}
\dif C_t
	& = \left\{\pder[C_{t-}]{t}+\sum_{i=1}^2(r-q_i-\tilde{\lambda}_i\tilde{\kappa}_i)S_{i,t-}\pder[C_{t-}]{s_i}+\sum_{i=1}^2[\xi_i\eta_i-(\xi_i+\Lambda_i)v_{i,t}]\pder[C_{t-}]{v_i}\right.\\
	& \qquad + \frac{1}{2}\sum_{i=1}^2 v_{i,t}S_{i,t-}^2\pder[^2 C_{t-}]{s_i^2}+\frac{1}{2}\sum_{i=1}^2 \sigma_i^2 v_{i,t}\pder[^2 C_{t-}]{v_i^2}+\rho_w\sqrt{v_{1,t} v_{2,t}}S_{1,t-}S_{2,t-}\pder[^2 C_{t-}]{s_1\partial s_2}\\
	& \qquad + \rho_{wz_1}\sigma_1 v_{1,t}S_{1,t-}\pder[^2 C_{t-}]{s_1\partial v_1} + \rho_{wz_2}\sigma_2 v_{2,t}S_{2,t-}\pder[^2 C_{t-}]{s_2\partial v_2}+\rho_Z\sigma_1\sigma_2\sqrt{v_{1,t}v_{2,t}}\pder[^2 C_{t-}]{v_1\partial v_2}\\
	& \qquad + \tilde{\lambda}_1\E_\Q^{Y_1}\left[C_t\left(S_{1,t-}e^{Y_1},S_{2,t-},v_{1,t},v_{2,t}\right)-C_{t-}\right]\\
	& \qquad + \left.\tilde{\lambda}_2\E_\Q^{Y_2}\left[C_t\left(S_{1,t-}, S_{2,t-}e^{Y_2},v_{1,t},v_{2,t}\right)-C_{t-}\right] \right\}\dif t\\
	& \qquad + \sum_{i=1}^2 \sqrt{v_{i,t}}S_{i,t-}\pder[C_{t-}]{s_i}\dif \tilde{W}_{i,t}+\sum_{i=1}^2\sigma_i\sqrt{v_{i,t}}\pder[C_{t-}]{v_i}\dif \tilde{Z}_{i,t}\\
	& \qquad + \int_\mathbb{R}\left[C_t(S_{1,t-}e^{y_1},S_{2,t-},v_{1,t},v_{2,t})-C_{t-}\right]q(\dif y_1,\dif t)\\
	& \qquad + \int_\mathbb{R}\left[C_t(S_{1,t-},S_{2,t-}e^{y_2},v_{1,t},v_{2,t})-C_{t-}\right]q(\dif y_2,\dif t).
\end{align*}

If $\tilde{C}_t = e^{-rt}C_t$ represents the discounted exchange option price, then we find that $\tilde{C}_t$ satisfies the SDE
\begin{align}
\begin{split}
\label{eqn-QDynamics-ExcOptionPrice}
\dif\tilde{C}_t
	& = -rC_{t-}e^{-rt}\dif t+e^{-rt}\dif C_t\\
	& = e^{-rt}\left\{\pder[C_{t-}]{t}+\sum_{i=1}^2(r-q_i-\tilde{\lambda}_i\tilde{\kappa}_i)S_{i,t-}\pder[C_{t-}]{s_i}+\sum_{i=1}^2[\xi_i\eta_i-(\xi_i+\Lambda_i)v_{i,t}]\pder[C_{t-}]{v_i}\right.\\
	& \qquad + \frac{1}{2}\sum_{i=1}^2 v_{i,t}S_{i,t-}^2\pder[^2 C_{t-}]{s_i^2}+\frac{1}{2}\sum_{i=1}^2 \sigma_i^2 v_{i,t}\pder[^2 C_{t-}]{v_i^2}+\rho_w\sqrt{v_{1,t} v_{2,t}}S_{1,t-}S_{2,t-}\pder[^2 C_{t-}]{s_1\partial s_2}\\
	& \qquad + \rho_{wz_1}\sigma_1 v_{1,t}S_{1,t-}\pder[^2 C_{t-}]{s_1\partial v_1} + \rho_{wz_2}\sigma_2 v_{2,t}S_{2,t-}\pder[^2 C_{t-}]{s_2\partial v_2}+\rho_Z\sigma_1\sigma_2\sqrt{v_{1,t}v_{2,t}}\pder[^2 C_{t-}]{v_1\partial v_2}\\
	& \qquad + \tilde{\lambda}_1\E_\Q^{Y_1}\left[C_t\left(S_{1,t-}e^{Y_1},S_{2,t-},v_{1,t},v_{2,t}\right)-C_{t-}\right]\\
	& \qquad + \left.\tilde{\lambda}_2\E_\Q^{Y_2}\left[C_t\left(S_{1,t-}, S_{2,t-}e^{Y_2},v_{1,t},v_{2,t}\right)-C_{t-}\right] - rC_{t-}\right\}\dif t\\
	& \qquad + e^{-rt}\sum_{i=1}^2 \sqrt{v_{i,t}}S_{i,t-}\pder[C_{t-}]{s_i}\dif \tilde{W}_{i,t}+e^{-rt}\sum_{i=1}^2\sigma_i\sqrt{v_{i,t}}\pder[C_{t-}]{v_i}\dif \tilde{Z}_{i,t}\\
	& \qquad + e^{-rt}\int_\mathbb{R}\left[C_t(S_{1,t-}e^{y_1},S_{2,t-},v_{1,t},v_{2,t})-C_{t-}\right]q(\dif y_1,\dif t)\\
	& \qquad + e^{-rt}\int_\mathbb{R}\left[C_t(S_{1,t-},S_{2,t-}e^{y_2},v_{1,t},v_{2,t})-C_{t-}\right]q(\dif y_2,\dif t)
\end{split}
\end{align}
The non-explosion of the volatility processes (implied by Assumption \ref{asp-ParameterAssumptions}) and the differentiability of the option price (Assumption \ref{asp-OptionPriceDifferentiability}) ensure that $$\E_\Q\left[\int_0^t \left|e^{-rs}\sqrt{v_{i,s}}S_{i,s-}\pder[C_{s-}]{s_i}\right|^2\dif s\right]<\infty \quad\text{and}\quad \E_\Q\left[\int_0^t\left|e^{-rs}\sigma_i\sqrt{v_{i,s}}\pder[C_{s-}]{v_i}\right|^2\dif s\right]<\infty$$ hold for $i=1,2$, and so processes whose stochastic differentials correspond to the $\dif \tilde{W}_{i,t}$ and $\dif\tilde{Z}_{i,t}$ terms are $\Q$-martingales \citep[see][Theorem 4.6.1]{Kuo-2006}. Furthermore, \citet[Theorem 2.2]{Runggaldier-2003} ensures that the last two terms of the right-hand side of the above equation correspond to $\Q$-martingales, provided integrability conditions hold for the option price increments.

Under $\Q$, we require $\tilde{C}_t$ to have no drift. Setting the coefficient of $\dif t$ in equation \eqref{eqn-QDynamics-ExcOptionPrice} to zero, we find that the exchange option price satisfies the IPDE
\begin{align}
\begin{split}
rC_{t-}
	& = \pder[C_{t-}]{t}+\sum_{i=1}^2(r-q_i-\tilde{\lambda}_i\tilde{\kappa}_i)S_{i,t-}\pder[C_{t-}]{s_i}+\sum_{i=1}^2[\xi_i\eta_i-(\xi_i+\Lambda_i)v_{i,t}]\pder[C_{t-}]{v_i}\\
	& \qquad + \frac{1}{2}\sum_{i=1}^2 v_{i,t}S_{i,t-}^2\pder[^2 C_{t-}]{s_i^2}+\frac{1}{2}\sum_{i=1}^2 \sigma_i^2 v_{i,t}\pder[^2 C_{t-}]{v_i^2}+\rho_w\sqrt{v_{1,t} v_{2,t}}S_{1,t-}S_{2,t-}\pder[^2 C_{t-}]{s_1\partial s_2}\\
	& \qquad + \rho_{wz_1}\sigma_1 v_{1,t}S_{1,t-}\pder[^2 C_{t-}]{s_1\partial v_1} + \rho_{wz_2}\sigma_2 v_{2,t}S_{2,t-}\pder[^2 C_{t-}]{s_2\partial v_2}+\rho_Z\sigma_1\sigma_2\sqrt{v_{1,t}v_{2,t}}\pder[^2 C_{t-}]{v_1\partial v_2}\\
	& \qquad + \tilde{\lambda}_1\E_\Q^{Y_1}\left[C_t\left(S_{1,t-}e^{Y_1},S_{2,t-},v_{1,t},v_{2,t}\right)-C_{t-}\right]\\
	& \qquad + \tilde{\lambda}_2\E_\Q^{Y_2}\left[C_t\left(S_{1,t-}, S_{2,t-}e^{Y_2},v_{1,t},v_{2,t}\right)-C_{t-}\right]
\end{split}
\end{align}
Using the differential operator $\calL$, the preceding IPDE may be written as
\begin{align*}
rC_{t-}
	& = \calL[C_{t-}]+ \tilde{\lambda}_1\E_\Q^{Y_1}\left[C_t\left(S_{1,t-}e^{Y_1},S_{2,t-},v_{1,t},v_{2,t}\right)-C_{t-}\right]\\
	& \qquad + \tilde{\lambda}_2\E_\Q^{Y_2}\left[C_t\left(S_{1,t-}, S_{2,t-}e^{Y_2},v_{1,t},v_{2,t}\right)-C_{t-}\right].
\end{align*}
\end{proof}

The IPDE derived above extends the result obtained by \citet{CheangChiarella-2011} for exchange options under jump-diffusion dynamics to the case of stochastic volatility and jump-diffusion dynamics. It is also an extension of the IPDE derived by \citet{CheangChiarellaZiogas-2013} for the one-asset option under SVJD dynamics to the case of two risky assets.

For the European exchange option, the terminal condition for the IPDE is $$C^E_T = (S_{1,T}-S_{2,T})^+.$$ In the case of the American exchange option, additional conditions, namely the early exercise boundary condition and smooth-pasting conditions, must be specified given the early exercise boundary of the option \citep{Chiarella-2009, CheangChiarella-2011, CheangChiarellaZiogas-2013, ChiarellaKangMeyer-2015}. These will be discussed in Section \ref{sec-AmericanExchangeOption}.

\section{A Representation of the European Exchange Option Price}
\label{sec-EuropeanExchangeOption}

In this section, we now consider analytical representations of the price of the European exchange option. To this end, we employ two methods: the change-of-num\'eraire technique of \citet{Geman-1995}, which was applied by \citet{CheangChiarella-2011} to the exchange option, and the Fourier transform approach by \citet{DempsterHong-2002} and \citet{CaldanaFusai-2013}. For the first approach, we assume that $\rho_z=0$ and that Assumption \ref{asp-ParameterAssumptions} hold. In the latter approach, we require additional restrictions on the correlation structure of the Wiener processes.


\subsection{A Change-of-Num\'eraire Approach}
\label{sec-ChangeofNumeraire}

In this section, we employ the change of num\'eraire technique of \citet{Geman-1995} to evaluate the $\Q$-expectation that gives the price of the European exchange option. This technique has been employed in \citet{CheangChiarella-2011} and \citet{Caldana-2015} to price exchange options under jump-diffusion dynamics, \citet{CheangTeh-2014} to price single-asset options under jump-diffusion dynamics with stochastic interest rate, and \citet{CheangChiarellaZiogas-2011} and \citet{CheangChiarellaZiogas-2013} to price single-asset European options under SVJD dynamics. Here, we derive a representation for European exchange option prices under the two-asset SVJD model.

Without loss of generality, we analyze the European exchange option price at time $t=0$, which is given by $$C_0^E(S_{1,0},S_{2,0},v_{1,0},v_{2,0}) = e^{-rT}\E_\Q\left[(S_{1,T}-S_{2,T})^+\right].$$ Define the event $\calA_0 = \{S_{1,T}>S_{2,T}\}$ (the event that the option is in-the-money), so that we can write
\begin{equation}
\label{eqn-EuExcOp-Price1}
C_0^E	= e^{-rT}\E_\Q[S_{1,T}\vm{1}_{\calA_0}]-e^{-rT}\E_\Q[S_{2,T}\vm{1}_{\calA_0}].
\end{equation}
Using equation \eqref{eqn-QSolution-Stock}, we may substitute expressions for $S_{1,T}$ and $S_{2,T}$, giving us
\begin{equation}
\label{eqn-EuExcOp-Price2}
C_0^E	= S_{1,0}e^{-q_1 T}\E_\Q[U_{1,T}\vm{1}_{\calA_0}]-S_{2,0}e^{-q_2 T}\E_\Q[U_{2,T}\vm{1}_{\calA_0}],
\end{equation}
where
\begin{equation}
\label{eqn-EuExcOp-RNDerivative}
U_{i,T} = \exp\left\{-\frac{1}{2}\int_0^T v_{i,t}\dif t+\int_0^T\sqrt{v_{i,t}}\dif\tilde{W}_{i,t}-\tilde{\lambda}_i\tilde{\kappa}_i T+\sum_{n=1}^{N_{i,T}}Y_{i,n}\right\}.
\end{equation}
Therefore, the expectations may be seen as the probability that the option is in the money at time $T$ under two new probability measures $\hat{\Q}_1$ and $\hat{\Q}_2$ whose relative densities with respect to $\Q$ are $U_{1,T}$ and $U_{2,T}$, respectively. The probability measures $\hat{\Q}_1$ and $\hat{\Q}_2$ are those that result from using $S_{1,t}$ and $S_{2,t}$, respectively, as the num\'eraire.

To show that $U_{1,T}$ and $U_{2,T}$ are sensible Radon-Nikod\'ym derivatives in the sense of Proposition \ref{prop-ChangeofMeasure}, we must show that $U_{1,t}$ and $U_{2,t}$ are $\Q$-martingales and $\E_\Q[U_{i,t}]=1$, $i=1,2$. This implies that we must ensure that the volatility processes do not explode under the new probability measures $\hat{\Q}_1$ and $\hat{\Q}_2$ (which in effect guarantees the existence of an expression for asset prices in the new probability measures).

The subsequent analysis is for $\{U_{1,t}\}$ and is also applicable to $\{U_{2,t}\}$. As shown in Section \ref{sec-RadonNikodymDerivative}, the condition $2\xi_1\eta_1\geq\sigma_1^2$ on the $\prob$-dynamics of $\{v_{1,t}\}$ is sufficient to ensure that the process neither explodes nor makes excursions to the origin under $\Q$. As such, $\int_0^t v_{1,s}\dif s<\infty$ $\Q$-a.s., which satisfies the Novikov condition to ensure that $\exp\{-\frac{1}{2}\int_0^t v_{1,s}\dif s+\int_0^t \sqrt{v_{1,s}}\dif W_{1,s}\}$ is a $\Q$-martingale and $$\E_\Q\left[\exp\left\{-\frac{1}{2}\int_0^t v_{1,s}\dif s+\int_0^t \sqrt{v_{1,s}}\dif W_{1,s}\right\}\right] = 1.$$ We also note that $$\E_\Q\left[\exp\left\{\sum_{i=1}^{N_{1,t}} Y_{1,n}\right\}\right] = \exp\left\{\tilde{\lambda}_1 t(\E_\Q(e^{Y_i})-1)\right\} = e^{\tilde{\lambda}_1\tilde{\kappa}_1 t}.$$ Independence of the Wiener and jump components imply that
\begin{align*}
\E_\Q[U_{1,t}]
	& = \E_\Q\left[\exp\left\{-\frac{1}{2}\int_0^t v_{1,s}\dif s+\int_0^t \sqrt{v_{1,s}}\dif W_{1,s}\right\}\right]\cdot e^{-\tilde{\lambda}_1\tilde{\kappa}_1 t}\cdot\E_\Q\left[\exp\left\{\sum_{i=1}^{N_{1,t}} Y_{1,n}\right\}\right]\\
	& = 1 = U_{1,0}.
\end{align*}
This also shows that $\{U_{1,t}\}$ is a $\Q$-martingale.

By Proposition \ref{prop-ChangeofMeasure}, $U_{1,T}$ defines a Radon-Nikod\'ym derivative that facilitates a change of measure from $\Q$ to some equivalent measure $\hat{\Q}_1$. To determine the drift of the Wiener processes and the distributional properties of the jumps under $\hat{\Q}_1$, we compare equation \eqref{eqn-EuExcOp-RNDerivative} (with $i=1$) with equation \eqref{eqn-SVJD-RNDerivative} to determine the change of measure parameters. 

We first investigate changes in the jump components arising from the shift to $\hat{\Q}_1$. Note that the jump component of $S_{2,t}$ does not appear in $U_{1,T}$, implying that the change of measure has no effect on the jumps in stock 2. From the comparison, we find that $\gamma_1 = 1$ and $\nu_1 = 0$, which implies that the Poisson process $\{N_{i,t}\}$ has $\hat{\Q}_1$-intensity $$\hat{\lambda}^{(1)}_1 = \tilde{\lambda}_1\E_\Q(e^{Y_1}) = \tilde{\lambda}_1(1+\tilde{\kappa}_1)$$ and the new distribution of the jump random variables $Y_{1,n}$ is given by the moment generating function $$M_{\hat{\Q}_1,Y_1}(u) = \frac{M_{\Q,Y_1}(u+1)}{M_{\Q,Y_1}(1)}.$$ This relation between the moment generating functions also implies that the $\hat{\Q}_1$-density of $Y_1$ is given by $$m_{\hat{\Q}_1}(\dif y_1) = \frac{e^{y_1}}{\E_\Q(e^{Y_1})}m_\Q(\dif y_1).$$ This analysis therefore implies that the compensated counting measures under $\hat{\Q}_1$ corresponding to the original counting measure $p(\dif y_i,\dif t)$ are given by
\begin{align}
\begin{split}
\hat{q}^{(1)}(\dif y_1,\dif t) & = p(\dif y_1,\dif t)-\hat{\lambda}^{(1)}_1 m_{\hat{\Q}_1}(\dif y_1)\dif t\\
\hat{q}^{(1)}(\dif y_2,\dif t) & = p(\dif y_2,\dif t)-\hat{\lambda}^{(1)}_2 m_{\hat{\Q}_1}(\dif y_2)\dif t,
\end{split}
\end{align}
where $\hat{\lambda}^{(1)}_1$ and $m_{\hat{\Q}_1}(\dif y_1)$ are given above, and $\hat{\lambda}^{(1)}_2 = \tilde{\lambda}_2$ and $m_{\hat{\Q}_1}(\dif y_2) = m_\Q(\dif y_2)$ as no changes are introduced to the jump components of stock 2.

Now we consider the diffusion components. The parameter $\hat{\bm{\theta}}_t^{(1)}$ for this change of measure (analogous to $\bm{\theta}_1$ in equation \eqref{eqn-SVJD-RNDerivative}) is defined such that $$\left(\bm{\Sigma}^{-1}\hat{\bm{\theta}}_t^{(1)}\right)^\top\dif\tilde{\vm{B}}_t = \sqrt{v_{1,t}}\dif\tilde{W}_{1,t},$$ where $\dif\tilde{\vm{B}}_t = (\dif\tilde{W}_{1,t}, \dif\tilde{W}_{2,t},\dif\tilde{Z}_{1,t},\dif\tilde{Z}_{2,t})^\top$ is the vector of the $\Q$-Wiener increments and $\bm{\Sigma}$ is the original correlation matrix of the Wiener processes. The above equation, in matrix form, can also be written as
$$\left[\begin{array}{cccc}
1	& \rho_w & \rho_{wz_1} & 0\\
\rho_w & 1 & 0 & \rho_{wz_2}\\
\rho_{wz_1} & 0 & 1 & \rho_z\\
0 & \rho_{wz_2} & \rho_z & 1
\end{array}\right]^{-1}\hat{\bm{\theta}}_t^{(1)} = 
\left[\begin{array}{c}
\sqrt{v_{1,t}} \\ 0 \\ 0 \\ 0
\end{array}\right].$$
This implies that $$\hat{\bm{\theta}}_t^{(1)} = \left(\sqrt{v_{1,t}},\rho_w\sqrt{v_{1,t}},\rho_{wz_1}\sqrt{v_{1,t}},0\right)^\top.$$ Thus, if $\hat{W}^{(1)}_{1,t}$, $\hat{W}^{(1)}_{2,t}$, $\hat{Z}^{(1)}_{1,t}$, and $\hat{Z}^{(1)}_{2,t}$ are standard $\hat{\Q}_1$-Wiener processes, then the $\hat{\Q}_1$-dynamics of the $\Q$-Wiener processes are given by
\begin{align}
\begin{split}
\dif\tilde{W}_{1,t}	& = -\sqrt{v_{1,t}}\dif t+\dif\hat{W}^{(1)}_{1,t}\\
\dif\tilde{W}_{2,t}	& = -\rho_w\sqrt{v_{1,t}}\dif t+\dif\hat{W}^{(1)}_{2,t}\\
\dif\tilde{Z}_{1,t}	& = -\rho_{wz_1}\sqrt{v_{1,t}}\dif t+\dif\hat{Z}^{(1)}_{1,t}\\
\dif\tilde{Z}_{2,t}	& = \dif\hat{Z}^{(1)}_{2,t}.
\end{split}
\end{align}

Under $\hat{\Q}_1$, the variance process $v_{1,t}$ satisfies the equation
\begin{equation}
\dif v_{1,t} = (\xi_1+\Lambda_1-\sigma_1\rho_{wz_1})\left[\frac{\xi_1\eta_1}{\xi_1+\Lambda_1-\sigma_1\rho_{wz_1}}-v_{1,t}\right]\dif t+\sigma_1\sqrt{v_{1,t}}\dif\hat{Z}^{(1)}_{1,t},
\end{equation}
which is obtained by substituting $\dif\tilde{Z}_{1,t} = -\rho_{wz_1}\sqrt{v_{1,t}}\dif t+\dif\hat{Z}^{(1)}_{1,t}$ into equation \eqref{eqn-QDynamics-Volatility}. In light of Assumption \ref{asp-ParameterAssumptions}, we require that $\xi_1+\Lambda_1-\sigma_1\rho_{wz_1}>0$. From here and from the fact that $\Lambda_1$ is chosen to be a nonnegative constant, we see that the condition
\begin{equation}
-1<\rho_{wz_1}<\min\left\{\frac{\xi_1}{\sigma_1},1\right\}
\end{equation}
is sufficient to ensure that the coefficients of the volatility process under $\hat{\Q}_1$ are positive. Furthermore, we observe that $$2\left(\xi_1+\Lambda_1-\sigma_1\rho_{wz_1}\right)\cdot\frac{\xi_1\eta_1}{\xi_1+\Lambda_1-\sigma_1\rho_{wz_1}} = 2\xi_1\eta_1 \geq \sigma_1^2,$$ which means that $v_{1,t}$ neither explodes in finite time or makes excursions to 0 under $\hat{\Q}_1$.

Similarly, the $\hat{\Q}_1$ dynamics of $v_{2,t}$ is given by $$\dif v_{2,t} = (\xi_2+\Lambda_2)\left[\frac{\xi_2\eta_2}{\xi_2+\Lambda_2}-v_{2,t}\right]\dif t+\sigma_2\sqrt{v_{2,t}}\dif\hat{Z}^{(1)}_{2,t},$$ which we note to be identical to equation \eqref{eqn-QDynamics-Volatility} except for the change in the Wiener process. At this point, no further parameter assumptions are required aside from those in Assumption \ref{asp-ParameterAssumptions} to ensure that $v_{2,t}$ does not vanish or explode in finite time under $\hat{\Q}_1$.

An analysis similar to that above also shows that $\{U_{2,t}\}$ is a $\Q$-martingale, allowing us to define a new probability measure $\hat{\Q}_2$ equivalent to $\Q$ via the Radon-Nikod\'ym derivative $U_{2,T}$. Under $\hat{\Q}_2$, the local characteristics $(\hat{\lambda}^{(2)}_1,m_{\hat{\Q}_2}(\dif y_1))$ of the counting measure $p(\dif y_1,\dif t)$ are given by $$\hat{\lambda}^{(2)}_1 = \tilde{\lambda}_1, \qquad m_{\hat{\Q}_2}(\dif y_1) = m_\Q(\dif y_1),$$ as $U_{2,T}$ is parameterized such that no changes are induced on the distributional properties of the jump component of stock 1. Analogous to the above analysis, we find that the $\hat{\Q}_2$-local characteristics $(\hat{\lambda}^{(2)}_2,m_{\hat{\Q}_2}(\dif y_2))$ of $p(\dif y_2,\dif t)$ are given by
\begin{align*}
\hat{\lambda}^{(2)}_2 & = \tilde{\lambda}_2(1+\tilde{\kappa}_2)\\
m_{\hat{\Q}_2}(\dif y_2) & = \frac{e^{y_2}}{\E_\Q(e^{Y_2})}m_\Q(\dif y_2).
\end{align*}
A comparison between equation \eqref{eqn-EuExcOp-RNDerivative} (with $i=2$) with equation \eqref{eqn-SVJD-RNDerivative} produces $$\hat{\bm{\theta}}_t^{(2)} = \left(\rho_w\sqrt{v_{2,t}},\sqrt{v_{2,t}},0,\rho_{wz_2}\sqrt{v_{2,t}}\right)^\top,$$ to facilitate the change of drift upon shifting to $\hat{\Q}_2$. Let $\hat{W}^{(2)}_{1,t}$, $\hat{W}^{(2)}_{2,t}$, $\hat{Z}^{(2)}_{1,t}$, and $\hat{Z}^{(2)}_{2,t}$ be standard $\hat{\Q}_2$-Wiener processes; then the $\hat{\Q}_2$-dynamics of the $\Q$-Wiener processes are given by
\begin{align}
\begin{split}
\dif\tilde{W}_{1,t}	& = -\rho_w\sqrt{v_{2,t}}\dif t+\dif\hat{W}^{(2)}_{1,t}\\
\dif\tilde{W}_{2,t}	& = -\sqrt{v_{2,t}}\dif t+\dif\hat{W}^{(2)}_{2,t}\\
\dif\tilde{Z}_{1,t}	& = \dif\hat{Z}^{(2)}_{1,t}\\
\dif\tilde{Z}_{2,t}	& = \rho_{wz_2}\sqrt{v_{2,t}}\dif t+ \dif\hat{Z}^{(2)}_{2,t}.
\end{split}
\end{align}
Consequently, the $\hat{\Q}_2$-dynamics of $v_{2,t}$ is given by the equation
\begin{equation}
\dif v_{2,t} = \left(\xi_2+\Lambda_2-\sigma_2\rho_{wz_2}\right)\left[\frac{\xi_2\eta_2}{\xi_2+\Lambda_2-\sigma_2\rho_{wz_2}}-v_{2,t}\right]\dif t+\sigma_2\sqrt{v_{2,t}}\dif\hat{Z}^{(2)}_{2,t},
\end{equation}
which is guaranteed to neither explode in finite time nor make excursions to 0 under $\hat{\Q}_2$ by enforcing the condition
\begin{equation}
-1<\rho_{wz_2}<\min\left\{\frac{\xi_2}{\sigma_2},1\right\}.
\end{equation}
No additional restrictions need to be made to ensure that $v_{1,t}$, aside from those in Assumption \ref{asp-ParameterAssumptions}, does not explode nor go to zero under $\hat{\Q}_2$.

\begin{rem}
The preceding discussion indeed show that Assumption \ref{asp-ParameterAssumptions}, in the case that $\rho_z=0$, is sufficient to ensure that the volatility processes $\{v_{1,t}\}$ and $\{v_{2,t}\}$ neither explode in finite time nor hit zero under the risk-neutral measure $\Q$ and the probability measures $\hat{\Q}_1$ and $\hat{\Q}_2$ equivalent to $\Q$ determined by the Radon-Nikod\'ym derivatives $U_{1,T}$ and $U_{2,T}$. This extends the discussion of \citet{CheangChiarellaZiogas-2011} to the case of two assets modelled with stochastic volatility and jump-diffusion dynamics.
\end{rem}

In terms of the new probability measures $\hat{\Q}_1$ and $\hat{\Q}_2$, the price of the European exchange option may be written as
\begin{equation}
\label{eqn-EuExcOp-Price-Decomposed}
C_0^E = S_{1,0}e^{-q_1 T}\hat{\Q}_1({\calA_0})-S_{2,0}e^{-q_2 T}\hat{\Q}_2({\calA_0}).
\end{equation}
Using equation \eqref{eqn-QSolution-Stock}, the event $\calA_0$ may be rewritten as
\begin{equation}
\left\{\mathfrak{A}_{0,T} > \ln\left(\frac{S_{2,0}}{S_{1,0}}\right)-\left(q_2-q_1-\tilde{\lambda}_1\tilde{\kappa}_1+\tilde{\lambda}_2\tilde{\kappa}_2\right)T\right\},
\end{equation}
where $\mathfrak{A}_{0,T}$ is the random variable
\begin{equation}
\mathfrak{A}_{0,T} = -\frac{1}{2}\int_0^T(v_{1,t}-v_{2,t})\dif t+\int_0^T\sqrt{v_{1,t}}\dif\tilde{W}_{1,t}-\int_0^T\sqrt{v_{2,t}}\dif\tilde{W}_{2,t}+\sum_{n=1}^{N_{1,T}}Y_{1,n}-\sum_{n=1}^{N_{2,T}}Y_{2,n}.
\end{equation}

\begin{rem}
Similar to the original \citet{Margrabe-1978} formula, our characterization of the European exchange option price under SVJD dynamics is also independent of the risk-free interest rate $r$.
\end{rem}

In general, for any $0\leq t<T$, the time-$t$ price of the European exchange option is given by
\begin{equation}
\label{eqn-EuExcOp-Price-Decomposed-Timet}
C_t^E = S_{1,t}e^{-q_1(T-t)}\hat{\Q}_1(\calA_t)-S_{2,t}e^{-q_2(T-t)}\hat{\Q}_2(\calA_t),
\end{equation}
where $\calA_t$ is the event $$\left\{\mathfrak{A}_{t,T} > \ln\left(\frac{S_{2,t}}{S_{1,t}}\right)-\left(q_2-q_1-\tilde{\lambda}_1\tilde{\kappa}_1+\tilde{\lambda}_2\tilde{\kappa}_2\right)(T-t)\right\}$$ and $\mathfrak{A}_{t,T}$ is the random variable $$\mathfrak{A}_{t,T} = -\frac{1}{2}\int_t^T(v_{1,s}-v_{2,s})\dif s+\int_t^T\sqrt{v_{1,s}}\dif\tilde{W}_{1,s}-\int_t^T\sqrt{v_{2,s}}\dif\tilde{W}_{2,s}+\sum_{n=N_{1,t}}^{N_{1,T}}Y_{1,n}-\sum_{n=N_{2,t}}^{N_{2,T}}Y_{2,n}.$$ Here, the summations count the number of jumps that occur in the period $(t,T]$. By construction, $C_t^E$ as given above is a solution to the IPDE in Proposition \ref{prop-SVJD-PricingIPDE} subject to the terminal condition $C_T^E = (S_{1,T}-S_{2,T})^+$.

In the presence of stochastic volatilities $v_{1,t}$ and $v_{2,t}$, a series expression for the European exchange option price similar to those obtained in \citet{CheangChiarella-2011} and \citet{Caldana-2015} cannot be obtained. In light of our general correlation structure for the Wiener processes, the probabilities above may be computed via simulation methods. Alternatively, a solution via characteristic functions can be made possible by making additional assumptions on the relationship between the stock prices and the volatilties, as will be shown in the next section. In the next section, we will derive a representation of the European exchange option price in terms of Fourier inversion formulas whose form is comparable to equations \eqref{eqn-EuExcOp-Price1} and \eqref{eqn-EuExcOp-Price-Decomposed}.

\subsection{A Fourier Transform Approach}
\label{sec-FourierTransform}

One may derive a Fourier inversion formula for the price of the European exchange option in terms of the joint characteristic function of the log-prices of the stocks. However, an explicit formula for the characteristic function is available if we make additional simplifying assumptions to the correlation structure of the Wiener processes. The derivation of the characteristic function and the application of the \citet{CaldanaFusai-2013} result to derive the exact price of the European exchange option is the main topic of this section. Here, we show that the \citet{CaldanaFusai-2013} result also lends itself to a decomposition similar to equation \eqref{eqn-EuExcOp-Price-Decomposed} in the case of the European exchange option.

Recall that the risk-neutral dynamics of the log-price of the stock is given by
\begin{equation}
\dif X_{i,t} = \left(r-q_i-\tilde{\lambda}_i\tilde{\kappa}_i-\frac{1}{2}v_{i,t}\right)\dif t+\sqrt{v_{i,t}}\dif\tilde{W}_{i,t}+\dif\tilde{Q}_{i,t}, \quad i=1,2
\end{equation}
where $\tilde{Q}_{i,t}$ is a compound Poisson process with intensity $\tilde{\lambda}_i$ under $\Q$ whose jump components $Y_{i,1},Y_{i,2},\dots$ independent and identically distributed with common characteristic function $\phi_{Y_i}(u)$. We also recall that under $\Q$, the volatility processes $v_{i,t}$ satisfy the equation
\begin{equation}
\dif v_{i,t} = \left[\xi_i\eta_i-(\xi_i+\Lambda_i)v_{i,t}\right]\dif t+\sigma_i\sqrt{v_{i,t}}\dif\tilde{Z}_{i,t},
\end{equation}
where the model parameters follow Assumption \ref{asp-ParameterAssumptions}.

Denote by $X_{i,t}^c$ the continuous part of $X_{i,t}$ such that it satisfies the equation
\begin{equation}
\dif X_{i,t}^c = \left(r-q_i-\tilde{\lambda}_i\tilde{\kappa}_i-\frac{1}{2}v_{i,t}\right)\dif t+\sqrt{v_{i,t}}\dif\tilde{W}_{i,t}.
\end{equation}
At this juncture, we follow the steps in \citet{ContTankov-2004} and \citet{CaneOlivares-2014} in deriving the joint characteristic function of $X_{1,t}^c$ and $X_{2,t}^c$ and, eventually, the joint characteristic function of the log-prices.

\begin{lem}
\label{lem-SVJD-LogPriceCF}
Suppose $\rho_w=\rho_z=0$. The joint characteristic function $\phi_{X_t}(u_1,u_2)$ of the log-price vector $X_t = (X_{1,t},X_{2,T})^\top$ is given by
\begin{align}
\begin{split}
\label{eqn-SVJD-LogPriceCF}
\phi_{X_t}(u_1,u_2)
	& = \exp\left\{i(u_1 x_1+u_2 x_2)+C(t;u_1,u_2)+D_1(t;u_1)v_1+D_2(t;u_2)v_2\right\}\\
	& \qquad\times \exp\left\{\tilde{\lambda}_1 t\left(\phi_{Y_1}(u_1)-1\right)+\tilde{\lambda}_2 t\left(\phi_{Y_2}(u_2)-1\right)\right\}
\end{split}
\end{align}
where 
\begin{equation}
D_j(s;u_j) = -\frac{u_j^2+iu_j}{\gamma_j\coth(\gamma_j s/2)+(\xi_j+\lambda_j)-iu_j\rho_{wz_j}\sigma_j}, \qquad j=1,2,
\end{equation}
where $\gamma_j$ is given by $$\gamma_j = \sqrt{\sigma_j^2(u_j^2+iu_j)+(\xi_j+\Lambda_j-iu_j\rho_{wz_j}\sigma_j)^2}, \qquad j=1,2,$$ and
\begin{align}
\begin{split}
C(s;u_1,u_2) 	
			& = \sum_{j=1}^2\left\{iu_j s(r-q_j-\tilde{\lambda}_j\tilde{\kappa}_j)+\frac{\xi_j\eta_j s (\xi_j+\lambda_j-iu_j\rho_{wz_j}\sigma_j)}{\sigma_j^2}\right.\\
			& \qquad \left.-\frac{2\xi_j\eta_j}{\sigma_j^2}\ln\left(\cosh\frac{\gamma_j s}{2}+\frac{\xi_j+\Lambda_j-iu_j\rho_{wz_j}\sigma_j}{\gamma}\sinh\frac{\gamma_j s}{2}\right)\right\}.
\end{split}
\end{align}
\end{lem}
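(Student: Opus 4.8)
The plan is to compute the joint characteristic function by exploiting the independence structure that the hypotheses $\rho_w=\rho_z=0$ afford us, together with the independence (already assumed in Section~\ref{sec-SVJDModel}) of the jump components from the Wiener components and of the two assets' jump mechanisms from each other. Since $\rho_w=0$, the only remaining correlations are $\rho_{wz_1}$ (linking $\tilde W_{1,t}$ and $\tilde Z_{1,t}$) and $\rho_{wz_2}$ (linking $\tilde W_{2,t}$ and $\tilde Z_{2,t}$); hence the pair $(X_{1,t}^c,v_{1,t})$ is independent of the pair $(X_{2,t}^c,v_{2,t})$, and each is independent of both compound Poisson processes $\tilde Q_{1,t},\tilde Q_{2,t}$, which are themselves mutually independent. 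Therefore $\phi_{X_t}(u_1,u_2)$ factorizes as
\begin{equation}
\phi_{X_t}(u_1,u_2)=\E_\Q\!\left[e^{iu_1 X_{1,t}^c}\right]\E_\Q\!\left[e^{iu_2 X_{2,t}^c}\right]\E_\Q\!\left[e^{iu_1\tilde Q_{1,t}}\right]\E_\Q\!\left[e^{iu_2\tilde Q_{2,t}}\right].
\end{equation}
The two compound-Poisson factors are immediate from the Lévy--Khintchine formula for a compound Poisson process: $\E_\Q[e^{iu_j\tilde Q_{j,t}}]=\exp\{\tilde\lambda_j t(\phi_{Y_j}(u_j)-1)\}$, which already accounts for the second line of \eqref{eqn-SVJD-LogPriceCF}.

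It remains to compute $g_j(t;u_j):=\E_\Q[\exp\{iu_j X_{j,t}^c\}\mid X_{j,0}^c=x_j, v_{j,0}=v_j]$ for each $j$. This is exactly the characteristic-function computation for a Heston model (with mean-reversion speed $\xi_j+\Lambda_j$ and long-run level $\xi_j\eta_j/(\xi_j+\Lambda_j)$, as in \eqref{eqn-QDynamics-Volatility}), so I would follow the standard affine/Riccati route used by \citet{Heston-1993} and laid out in \citet{ContTankov-2004} and \citet{CaneOlivares-2014}. First I posit the exponential-affine ansatz
\begin{equation}
g_j(t;u_j)=\exp\left\{iu_j x_j + C_j(t;u_j) + D_j(t;u_j) v_j\right\},\qquad C_j(0;u_j)=D_j(0;u_j)=0,
\end{equation}
substitute it into the Kolmogorov backward PDE generated by the $(X_j^c,v_j)$-dynamics (the drift of $X_j^c$ is $r-q_j-\tilde\lambda_j\tilde\kappa_j-\tfrac12 v_j$, its diffusion coefficient $\sqrt{v_j}$, the $v_j$-dynamics as above, with instantaneous covariation $\rho_{wz_j}\sigma_j v_j\,dt$ between the two), and collect terms proportional to $v_j$ and the terms free of $v_j$. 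Matching the $v_j$-coefficient yields the Riccati ODE
\begin{equation}
D_j' = \tfrac12\sigma_j^2 D_j^2 + (iu_j\rho_{wz_j}\sigma_j-(\xi_j+\Lambda_j))D_j - \tfrac12(u_j^2+iu_j),
\end{equation}
and matching the remaining terms gives $C_j' = iu_j(r-q_j-\tilde\lambda_j\tilde\kappa_j) + \xi_j\eta_j D_j$, so $C_j$ is obtained by directly integrating $D_j$. The stated closed forms for $\gamma_j$, $D_j$, and $C_j$ then follow by solving this constant-coefficient Riccati equation in the usual way — write $\gamma_j$ for the square root of the discriminant, solve via the substitution $D_j=-\tfrac{2}{\sigma_j^2}\,w_j'/w_j$ which linearizes to a second-order constant-coefficient ODE, and express the solution through $\cosh(\gamma_j s/2)$ and $\sinh(\gamma_j s/2)$ (equivalently $\coth$) to match the given form — and then integrating for $C_j$, where the $\ln(\cosh+\cdots\sinh)$ term arises precisely as $\int_0^s D_j$. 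Finally, multiplying $g_1 g_2$ by the two compound-Poisson factors reproduces \eqref{eqn-SVJD-LogPriceCF}.

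The genuinely substantive point, rather than the bookkeeping, is twofold. First, one must justify that the affine ansatz is legitimate — i.e. that $g_j$ is finite and smooth enough to satisfy the backward PDE — which rests on the non-explosion of $v_{j,t}$ under $\Q$ guaranteed by Assumption~\ref{asp-ParameterAssumptions} (the Feller condition $2\xi_j\eta_j\ge\sigma_j^2$), ensuring $\int_0^t v_{j,s}\,ds<\infty$ a.s. so that the local martingale $\exp\{\int_0^t\sqrt{v_{j,s}}\,d\tilde W_{j,s}-\tfrac12\int_0^t v_{j,s}\,ds\}$ is well-behaved. Second, solving the Riccati equation requires care with the branch of the complex square root $\gamma_j$ and with the regime where $\cosh(\gamma_j s/2)+\cdots$ could in principle vanish; I expect this branch-selection and the verification that the resulting $C_j,D_j$ are the correct (continuous, vanishing at $s=0$) solution to be the main obstacle, though it is a well-trodden one and the condition $-1<\rho_{wz_j}<\min\{\xi_j/\sigma_j,1\}$ in Assumption~\ref{asp-ParameterAssumptions} is exactly what keeps the relevant coefficients in a benign range. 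The independence-based factorization and the two Poisson factors are routine.
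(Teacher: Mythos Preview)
Your proposal is correct and arrives at the same Riccati system and the same closed forms as the paper, but the route differs slightly in organization. The paper does not factorize at the outset: it defines the \emph{joint} conditional characteristic function of $(X_{1,T}^c,X_{2,T}^c)$ given $(X_{1,t}^c,X_{2,t}^c,v_{1,t},v_{2,t})$, applies It\^o's formula to obtain a four-variable PDE that still contains the cross terms $\rho_w\sqrt{v_1 v_2}\,\partial^2 f/\partial x_1\partial x_2$ and $\rho_z\sigma_1\sigma_2\sqrt{v_1 v_2}\,\partial^2 f/\partial v_1\partial v_2$, and only then invokes $\rho_w=\rho_z=0$ to kill these non-affine terms and make the joint affine ansatz $\exp\{i(u_1x_1+u_2x_2)+C(T-t)+D_1(T-t)v_1+D_2(T-t)v_2\}$ work; the decoupling into two Riccati ODEs for $D_1,D_2$ and a single ODE for $C=C_1+C_2$ then happens at the level of the PDE. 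You instead invoke independence of $(X_1^c,v_1)$ and $(X_2^c,v_2)$ directly at the level of random variables (legitimate once $\rho_w=\rho_z=0$) to factorize $\phi_{X_t}$ into two one-asset Heston characteristic functions, each of which you then compute by the standard two-variable affine argument. Your approach is a bit more economical and reduces immediately to the known single-asset result; the paper's approach has the pedagogical benefit of showing exactly which cross terms obstruct the affine ansatz when the extra correlations are nonzero. The treatment of the jump part and the final multiplication by independence are identical in both.
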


\begin{proof}
We first determine the characteristic function of the continuous parts of the log-prices. Define the function $f$ as
\begin{equation}
f(t,x_1,x_2,v_1,v_2) = \E_\Q\left\{e^{i(u_1 X_{1,T}^c+u_2 X_{2,T}^c)}|X_{1,t}^c = x_1, X_{2,t}^c = x_2, v_{1,t} = v_1, v_{2,t} = v_2\right\},
\end{equation}
and let $\frakM_t = f(t,X_{1,t}^c,X_{2,t}^c,v_{1,t},v_{2,t})$. In the next calculation, we do not yet invoke the assumption that $\rho_w=\rho_z=0$. By It\^{o}'s Lemma, $\frakM_t$ satisfies
\begin{align*}
\dif\frakM_t
	& = \left\{\pder[f]{t}+\sum_{i=1}^2\left(r-q_i-\tilde{\lambda}_i\tilde{\kappa}_i-\frac{1}{2}v_{i,t}\right)\pder[f]{x_i}+\sum_{i=1}^2\left[\xi_i\eta_i-(\xi_i+\Lambda_i)v_{i,t}\right]\pder[f]{v_i}\right.\\
	& \qquad + \frac{1}{2}\sum_{i=1}^2 v_{i,t}\pder[^2f]{x_i^2} + \frac{1}{2}\sum_{i=1}^2\sigma_i^2 v_{i,t}\pder[^2f]{v_i^2}+\rho_w\sqrt{v_{1,t}v_{2,t}}\pder[^2f]{x_1\partial x_2}\\
	& \qquad + \left.\sum_{i=1}^2\rho_{wz_i}\sigma_iv_{i,t}\pder[^2f]{x_i\partial v_i}+\rho_z\sigma_1\sigma_2\sqrt{v_{1,t}v_{2,t}}\pder[^2f]{v_1\partial v_2}\right\}\dif t\\
	& \qquad + \sum_{i=1}^2\sqrt{v_{i,t}}\pder[f]{x_i}\dif\tilde{W}_{i,t}+\sum_{i=1}^2\sigma_i\sqrt{v_{i,t}}\pder[f]{v_i}\dif\tilde{Z}_{i,t}.
\end{align*}
Since $\frakM_t$ is a martingale, the drift coefficient is equal to zero, which leads to the equation
\begin{align*}
0	& = \pder[f]{t}+\sum_{i=1}^2\left(r-q_i-\tilde{\lambda}_i\tilde{\kappa}_i-\frac{1}{2}v_{i,t}\right)\pder[f]{x_i}+\sum_{i=1}^2\left[\xi_i\eta_i-(\xi_i+\Lambda_i)v_{i,t}\right]\pder[f]{v_i}\\
	& \qquad + \frac{1}{2}\sum_{i=1}^2 v_{i,t}\pder[^2f]{x_i^2} + \frac{1}{2}\sum_{i=1}^2\sigma_i^2 v_{i,t}\pder[^2f]{v_i^2}+\rho_w\sqrt{v_{1,t}v_{2,t}}\pder[^2f]{x_1\partial x_2}\\
	& \qquad + \sum_{i=1}^2\rho_{wz_i}\sigma_iv_{i,t}\pder[^2f]{x_i\partial v_i}+\rho_z\sigma_1\sigma_2\sqrt{v_{1,t}v_{2,t}}\pder[^2f]{v_1\partial v_2}.
\end{align*}
The corresponding terminal condition is given by $$f(T,x_1,x_2,v_1,v_2) = e^{i(u_1 x_1+u_2 x_2)}.$$ This PDE is to be solved for $t\in[0,T]$, $(x_1,x_2)\in\mathbb{R}^2$, and $(v_1,v_2)\in\mathbb{R}_+^2$, where $\mathbb{R}_+^2 = (0,\infty)\times(0,\infty)$.

To be able to solve the above PDE explicitly, we now assume that $\rho_w=\rho_z=0$. \citet{CaneOlivares-2014} refer to this situation as the independent volatility case. This simplifies the preceding PDE such that its coefficients become linear. The resulting equation given the the simplifying assumptions is
\begin{align}
0	& = \pder[f]{t}+\sum_{i=1}^2\left(r-q_i-\tilde{\lambda}_i\tilde{\kappa}_i-\frac{1}{2}v_{i,t}\right)\pder[f]{x_i}+\sum_{i=1}^2\left[\xi_i\eta_i-(\xi_i+\Lambda_i)v_{i,t}\right]\pder[f]{v_i}\\
	& \qquad + \frac{1}{2}\sum_{i=1}^2 v_{i,t}\pder[^2f]{x_i^2} + \frac{1}{2}\sum_{i=1}^2\sigma_i^2 v_{i,t}\pder[^2f]{v_i^2}+ \sum_{i=1}^2\rho_{wz_i}\sigma_iv_{i,t}\pder[^2f]{x_i\partial v_i},
\end{align}
and to solve the equation, we guess a solution of the form
\begin{equation}
f(t,x_1,x_2,v_1,v_2) = \exp\left\{i(u_1 x_1+u_2 x_2)+C(T-t)+D_1(T-t)v_1+D_2(T-t)v_2\right\},
\end{equation}
for some functions $C(s)$, $D_1(s)$, and $D_2(s)$ of one variable evaluated at $s=T-t$ \citep{Heston-1993, ContTankov-2004, CaneOlivares-2014}. At this point, we suppress the $t$ subscript and introduce the notation $f_t = f(t,x_1,x_2,v_1,v_2)$. With this specification, the PDE becomes
\begin{align*}
0	& = f_t\left[-C'(T-t)-D_1'(T-t)v_1-D_2'(T-t)v_2\right]\\
	& \qquad+\left(r-q_1-\tilde{\lambda}_1\tilde{\kappa}_1-\frac{1}{2}v_1\right)f_t iu_1+\left(r-q_2-\tilde{\lambda}_2\tilde{\kappa}_2-\frac{1}{2}v_2\right)f_t iu_2\\
	& \qquad+[\xi_1\eta_1-(\xi_1+\Lambda_1)v_1]f_t D_1(T-t)+[\xi_2\eta_2-(\xi_2+\Lambda_2)v_2]f_t D_2(T-t)\\
	& \qquad-\frac{1}{2}v_1 u_1^2 f_t-\frac{1}{2}v_2 u_2^2 f_t +\frac{1}{2}\sigma_1^2 v_1 D_1^2(T-t)f_t+\frac{1}{2}\sigma_2^2 v_2 D_2^2(T-t)f_t\\
	& \qquad+\rho_{wz_1}\sigma_1 v_1 iu_1 D_1(T-t)f_t +\rho_{wz_2}\sigma_2 v_2 iu_2 D_2(T-t)f_t,
\end{align*}
where $C'(T-t)$, $D_1'(T-t)$, and $D_2'(T-t)$ are the first derivatives of $C(s)$, $D_1(s)$, and $D_2(s)$ evaluated at $s=T-t$. Simplifying and collecting the coefficients of $v_1$ and $v_2$, we have

{\small
\begin{align*}
0	& = v_1\left\{-D_1'(T-t)-\frac{iu_1}{2}-\frac{u_1^2}{2}+\frac{1}{2}\sigma_1^2 D_1^2(T-t)+(iu_1\rho_{wz_1}\sigma_1-\xi_1-\Lambda_1)D_1(T-t)\right\}\\
	& \qquad + v_2\left\{-D_2'(T-t)-\frac{iu_2}{2}-\frac{u_2^2}{2}+\frac{1}{2}\sigma_2^2 D_2^2(T-t)+(iu_2\rho_{wz_2}\sigma_2-\xi_2-\Lambda_2)D_2(T-t)\right\}\\
	& \qquad - C'(T-t)+(r-q_1-\tilde{\lambda}_1\tilde{\kappa}_1)iu_1+(r-q_2-\tilde{\lambda}_2\tilde{\kappa}_2)iu_2+\xi_1\eta_1 D_1(T-t)+\xi_2\eta_2 D_2(T-t).
\end{align*}}

\noindent Since $v_1$ and $v_2$ are nonzero, it follows that $C(s)$, $D_1(s)$, and $D_2(s)$ must satisfy the equations
\begin{align}
D_1'(s)	& = \frac{\sigma_1^2}{2}D_1^2(s)+(iu_1\rho_{wz_1}\sigma_1-\xi_1-\Lambda_1)D_1(s)-\frac{1}{2}(iu_1+u_1^2)\\
D_2'(s)	& = \frac{\sigma_2^2}{2}D_2^2(s)+(iu_2\rho_{wz_2}\sigma_2-\xi_2-\Lambda_2)D_2(s)-\frac{1}{2}(iu_2+u_2^2)\\
C'(s)		& = (r-q_1-\tilde{\lambda}_1\tilde{\kappa}_1)iu_1+(r-q_2-\tilde{\lambda}_2\tilde{\kappa}_2)iu_2+\xi_1\eta_1 D_1(s)+\xi_2\eta_2 D_2(s),
\end{align}
subject to the initial condition $D_1(0)=D_2(0)=C(0)=0$. Adapting the results of \citet{Heston-1993}, \citet{Bates-1996}, and \citet{ContTankov-2004}, $D_1(s)$ and $D_2(s)$ are given by
\begin{equation*}
D_j(s) = -\frac{u_j^2+iu_j}{\gamma_j\coth(\gamma_j s/2)+(\xi_j+\lambda_j)-iu_j\rho_{wz_j}\sigma_j}, \qquad j=1,2,
\end{equation*}
where $\gamma_j$ is given by $$\gamma_j = \sqrt{\sigma_j^2(u_j^2+iu_j)+(\xi_j+\Lambda_j-iu_j\rho_{wz_j}\sigma_j)^2}, \qquad j=1,2.$$ Integration for $C(s)$ yields
\begin{align*}
C(s) 	& = \sum_{j=1}^2\left\{iu_j s(r-q_j-\tilde{\lambda}_j\tilde{\kappa}_j)+\frac{\xi_j\eta_j s (\xi_j+\lambda_j-iu_j\rho_{wz_j}\sigma_j)}{\sigma_j^2}\right.\\
			& \qquad \left.-\frac{2\xi_j\eta_j}{\sigma_j^2}\ln\left(\cosh\frac{\gamma_j s}{2}+\frac{\xi_j+\Lambda_j-iu_j\rho_{wz_j}\sigma_j}{\gamma}\sinh\frac{\gamma_j s}{2}\right)\right\}.
\end{align*}

Note that 
\begin{align*}
f(0,x_1,x_2,v_1,v_2) 
	& = \E_\Q\left[\left.e^{i(u_1 X_{1,T}^c+u_2 X_{2,T}^c)}\right|X_{1,0}^c = x_1, X_{2,0}^c = x_2, v_{1,0} = v_1, v_{2,0}=v_2\right]\\
	& = \E_\Q\left[e^{i(u_1 X_{1,T}^c+u_2 X_{2,T}^c)}\right].
\end{align*}
If $x_1$ and $x_2$ represent the log of initial stock prices and $v_1$ and $v_2$ represent initial volatility levels, then from the assumed functional form of $f$, it follows that
\begin{align*}
\E_\Q\left[e^{i(u_1 X_{1,T}^c+u_2 X_{2,T}^c)}\right]
	& = \exp\left\{i(u_1 x_1+u_2 x_2)+C(T)+D_1(T)v_1+D_2(T)v_2\right\},
\end{align*} 
where $C(T)$, $D_1(T)$, and $D_2(T)$ are given as above. Thus for any $0<t\leq T$, the joint characteristic function of the continuous part of the log-asset prices, which we denote by $\phi_{X_t^c}(u_1,u_2)$ is given by
\begin{equation}
\phi_{X_t^c}(u_1,u_2) = \exp\left\{i(u_1 x_1+u_2 x_2)+C(t;u_1,u_2)+D_1(t;u_1,u_2)v_1+D_2(t;u_1,u_2)v_2\right\}
\end{equation}

We now turn to the joint characteristic function $\phi_{Q_t}(u_1,u_2)$ of the jump parts. From the assumptions on the SVJD model, we note that the jump components of stock 1 and stock 2 are independent, hence
\begin{align*}
\phi_{Q_t}(u_1,u_2)
& = \E_\Q\left[\exp\left(iu_1\sum_{n=1}^{N_{1,t}}Y_{1,n}+iu_2\sum_{n=1}^{N_{2,t}}Y_{2,n}\right)\right]\\
& = \E_\Q\left[\exp\left(iu_1\sum_{n=1}^{N_{1,t}}Y_{1,n}\right)\right\}\E_\Q\left\{\exp\left(iu_2\sum_{n=1}^{N_{2,t}}Y_{2,n}\right)\right]
\end{align*}
If $\phi_{Y_i}(\cdot)$ denotes the common characteristic function of the $Y_{i,n}$'s under the risk-neutral measure $\Q$, then we have
\begin{equation}
\phi_{Q_t}(u_1,u_2) = \exp\left\{\tilde{\lambda}_1 t\left(\phi_{Y_1}(u_1)-1\right)+\tilde{\lambda}_2 t\left(\phi_{Y_2}(u_2)-1\right)\right\},
\end{equation}
where $\tilde{\lambda}_1$ and $\tilde{\lambda}_2$ are the intensities of $N_{1,t}$ and $N_{2,t}$, respectively, under $\Q$.

Following the independence of the continuous and jump parts of the log-price processes, the joint characteristic function of the log-prices $X_{1,t}$ and $X_{2,t}$, which we denote by $\phi_{X_t}(u_1,u_2)$ is therefore given by
\begin{align*}
\phi_{X_t}(u_1,u_2)
	& = \exp\left\{i(u_1 x_1+u_2 x_2)+C(t;u_1,u_2)+D_1(t;u_1,u_2)v_1+D_2(t;u_1,u_2)v_2\right\}\\
	& \qquad\times \exp\left\{\tilde{\lambda}_1 t\left(\phi_{Y_1}(u_1)-1\right)+\tilde{\lambda}_2 t\left(\phi_{Y_2}(u_2)-1\right)\right\}.
\end{align*}
This is defined for all $(x_1,x_2)\in\mathbb{R}^2$, $(v_1,v_2)\in\mathbb{R}^2_+$, $t\in[0,T]$, and $u_1,u_2\in\mathbb{C}$.
\end{proof}

\begin{rem}
Let $\phi_{X_T|t}(u_1,u_2) = \E_\Q\left[\left.e^{i(u_1 X_{1,T}+u_2 X_{2,T})}\right|\calF_t\right]$. From the above calculations, we have
\begin{align}
\begin{split}
\label{eqn-SVJD-LogPriceCF-Conditional}
& \phi_{X_T|t}(u_1,u_2)\\
& \qquad = \exp\left\{i(u_1 x_1+u_2 x_2)+C(T-t;u_1,u_2)+D_1(T-t;u_1,u_2)v_1+D_2(T-t;u_1,u_2)v_2\right\}\\
& \qquad \qquad \times \exp\left\{\tilde{\lambda}_1 (T-t)\left(\phi_{Y_1}(u_1)-1\right)+\tilde{\lambda}_2 (T-t)\left(\phi_{Y_2}(u_2)-1\right)\right\},
\end{split}
\end{align}
where $x_1$, $x_2$, $v_1$, and $v_2$ now denote time $t$ values of the processes $\{X_{1,t}\}$, $\{X_{2,t}\}$, $\{v_{1,t}\}$, and $\{v_{2,t}\}$, respectively.
\end{rem}

At this point, we briefly state the results of \citet{CaldanaFusai-2013} on the approximate pricing of European spread options under general stock price dynamics, as this will be used to derive European exchange option prices under the assumed SVJD model. Consider a European option with strike price $K$ written on the spread $S_{1,T}-S_{2,T}$.
Then the price today of the spread option is $C_{K,0} = e^{-rT}\E_\Q\left[(S_{1,T}-S_{2,T}-K)^+\right].$ For some parameters $\alpha$ and $k$, define the event $$A = \left\{S_{1,T}>e^k S_{2,T}/\E_\Q[S_{2,T}^\alpha]\right\}.$$ This event, as noted by \citet{Bjerskund-2011}, is a sub-optimal exercise strategy and produces the following lower bound on the European spread option payoff,
\begin{equation}
(S_{1,T}-S_{2,T}-K)^+ \geq (S_{1,T}-S_{2,T}-K)\vm{1}_A.
\end{equation}
It follows therefore that a lower bound for the spread option price is given by
\begin{equation}
C^{k,\alpha}_{K,0} = e^{-rT}\E_\Q\left[(S_{1,T}-S_{2,T}-K)\vm{1}_A\right].
\end{equation}
The following theorem \citep[Proposition 1]{CaldanaFusai-2013} provides an expression for $C^{k,\alpha}_{K,0}$ in terms of the joint characteristic function $\phi_{X_T}(u_1,u_2)$of the log-price vector $X_T = (X_{1,T},X_{2,T})^\top$.

\begin{thm}
\label{thm-CaldanaFusaiApproximation}
The lower bound $C^{k,\alpha}_{K,0}$ for the European spread option price is given by
\begin{equation}
C^{k,\alpha}_{K,0} = \left\{\frac{e^{-\delta k-rT}}{\pi}\int_0^\infty e^{-izk}\Psi_T(z,\delta,\alpha)\dif z\right\}^+,
\end{equation}
where
\begin{align}
\begin{split}
\psi_T(z,\delta,\alpha)
	& = \frac{e^{i(z-i\delta)\ln\phi_{X_T}(0,-i\alpha)}}{i(z-i\delta)}\left[\phi_{X_T}\left(z-i\delta-i,-\alpha(z-i\delta)\right)\right.\\
	& \qquad \left.-\phi_{X_T}\left(z-i\delta,-\alpha(z-i\delta)-i\right)-K\phi_{X_T}\left(z-i\delta,-\alpha(z-i\delta)\right)\right],
\end{split}
\end{align}
and
\begin{equation}
\alpha = \frac{\E_\Q[S_{2,T}]}{\E_\Q[S_{2,T}]+K}, \qquad k = \ln\left(\E_\Q[S_{2,T}]+K\right).
\end{equation}
\end{thm}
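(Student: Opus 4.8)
The argument is the Carr--Madan damped-transform technique applied to the lower-bound payoff $(S_{1,T}-S_{2,T}-K)\vm{1}_A$, with the auxiliary parameter $k$ playing the role of the transform variable; this is the route taken in \citet{CaldanaFusai-2013}, and I sketch the steps. First I would rewrite the exercise region in log-price coordinates. Writing $R_T:=X_{1,T}-\alpha X_{2,T}+\ln\phi_{X_T}(0,-i\alpha)$ and noting $\phi_{X_T}(0,-i\alpha)=\E_\Q[S_{2,T}^\alpha]$, one sees at once that $A=\{R_T>k\}$, so that
\[
e^{rT}C^{k,\alpha}_{K,0}=\E_\Q\!\left[(e^{X_{1,T}}-e^{X_{2,T}}-K)\,\vm{1}_{\{R_T>k\}}\right].
\]

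Next, fix a damping parameter $\delta>0$ and set $\Pi(k):=e^{\delta k}\,e^{rT}C^{k,\alpha}_{K,0}$, chosen precisely so that $k\mapsto\Pi(k)$ is integrable near $-\infty$. Assuming the exponential-moment conditions that make the mixed moments below finite --- in the SVJD model these hold, for $\delta$ in a suitable range, under Assumption~\ref{asp-ParameterAssumptions} together with the assumed finiteness of the jump moment generating functions --- Tonelli's theorem permits interchanging the $\dif k$-integral and $\E_\Q$:
\begin{align*}
\int_{\mathbb{R}}e^{izk}\Pi(k)\,\dif k
&=\E_\Q\!\left[(e^{X_{1,T}}-e^{X_{2,T}}-K)\int_{-\infty}^{R_T}e^{(\delta+iz)k}\,\dif k\right]\\
&=\frac{1}{\delta+iz}\,\E_\Q\!\left[(e^{X_{1,T}}-e^{X_{2,T}}-K)\,e^{(\delta+iz)R_T}\right],
\end{align*}
the inner integral converging exactly because $\delta>0$. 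Substituting the definition of $R_T$, pulling out the deterministic factor $\phi_{X_T}(0,-i\alpha)^{\delta+iz}=e^{i(z-i\delta)\ln\phi_{X_T}(0,-i\alpha)}$, and expanding the product, each of the three remaining expectations has the form $\E_\Q[e^{iu_1 X_{1,T}+iu_2 X_{2,T}}]=\phi_{X_T}(u_1,u_2)$ with $(u_1,u_2)$ affine in $z-i\delta$; a one-line computation identifies them with $\phi_{X_T}(z-i\delta-i,-\alpha(z-i\delta))$, $\phi_{X_T}(z-i\delta,-\alpha(z-i\delta)-i)$, and $\phi_{X_T}(z-i\delta,-\alpha(z-i\delta))$, respectively. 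Hence $\int_{\mathbb{R}}e^{izk}\Pi(k)\,\dif k=\Psi_T(z,\delta,\alpha)$.

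It then remains to invert. Fourier inversion (in the $L^1$ or, after a further moment estimate, $L^2$ sense) gives $\Pi(k)=\frac{1}{2\pi}\int_{\mathbb{R}}e^{-izk}\Psi_T(z,\delta,\alpha)\,\dif z$, whence $C^{k,\alpha}_{K,0}=\frac{e^{-\delta k-rT}}{2\pi}\int_{\mathbb{R}}e^{-izk}\Psi_T(z,\delta,\alpha)\,\dif z$. Since $\Pi$ is real-valued, $\Psi_T(-z,\delta,\alpha)=\overline{\Psi_T(z,\delta,\alpha)}$ for real $z$, which folds the line integral into twice the real part of the half-line integral, producing the stated expression with the $1/\pi$ normalisation; the outer $\{\,\cdot\,\}^+$ simply records that a negative lower bound may always be replaced by the trivial bound $0$ (take $A=\varnothing$). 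I expect the genuine work to be the analytic bookkeeping rather than the algebra: identifying an admissible range of $\delta$ for which every shifted argument of $\phi_{X_T}$ lies in its strip of analyticity, verifying the integrability needed both for Tonelli and for the validity of the inversion, and justifying the contour shift $z\mapsto z-i\delta$. For the exchange option ($K=0$, so $\alpha=1$ and $k=\ln\E_\Q[S_{2,T}]$) these requirements collapse to the finiteness of a handful of exponential moments of $X_{1,T}$ and $X_{2,T}$, which the model guarantees.
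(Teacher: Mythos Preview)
Your proposal is correct and follows essentially the same approach as the paper: the paper (in the proof of the time-$t$ extension, Proposition~\ref{prop-CaldanaFusaiApproximationExtension}, from which the theorem follows by setting $t=0$) rewrites the event $A$ in log-price coordinates, defines $\psi_T$ as the Fourier transform in $k$ of the damped expectation, interchanges the $k$-integral with the expectation (written there as integration against the transition density $g(x_1,x_2)$), evaluates the inner $k$-integral, identifies the resulting double integrals as characteristic-function values, and inverts. Your sketch is in fact slightly more careful on the analytic side---explicitly invoking Tonelli, flagging the admissible range of $\delta$ and the strip of analyticity of $\phi_{X_T}$, and justifying the fold to the half-line via the Hermitian symmetry $\Psi_T(-z,\delta,\alpha)=\overline{\Psi_T(z,\delta,\alpha)}$---points the paper leaves implicit.
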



The preceding theorem requires the existence of the joint characteristic function of the log-prices of the two stocks. In our model, we have shown that the characteristic function $\phi_{X_t}$ is given by equation \eqref{eqn-SVJD-LogPriceCF}, and hence the above result can be applied in approximating European spread option prices and pricing the European exchange option under our SVJD model.

We make some remarks on the other quantities that appear in the formula. First, $\delta$ is associated to the exponentially decaying term $e^{-\delta k}$ that must be multiplied to $C^{k,\alpha}_{K,0}$ to produce a square-integrable term in the negative $k$-axis. A discussion of the choice of $\delta$ can be found in \citet{CarrMadan-1999} and \citet{DempsterHong-2002}. Next, since the discounted yield process $\{e^{-(r-q_2)t}S_{2,t}\}$ is a martingale under $\Q$, $\E_\Q[S_{2,T}]$ represents the forward price today of asset 2 for delivery at time $T$. In terms of the characteristic function, the forward price can also be expressed as $\phi_{X_T}(0,-i)$. The positive part $(\cdot)^+$ is required since, without it, the formula (as well as the original \citet{Bjerskund-2011} result) may produce negative values for deeply out-of-the-money options. Thus for practical purposes, out-of-the-money exchange options are assigned a value of 0 \citep{CaldanaFusai-2013}. Note that this consideration is consistent with equation \eqref{eqn-EuExcOp-Price1}, since if the option is out-of-the-money, then $\vm{1}_\calA=0$, which results to $C_0^E=0$.

\citet{CaldanaFusai-2013} note that this result is an improvement from the \citet{HurdZhou-2010} Fourier inversion formula since the exchange option case ($K=0$) can be handled here without complications. In this regard, the \citet{CaldanaFusai-2013} formula also gives an \emph{exact} price for the exchange option through an appropriate choice of the parameters $k$ and $\alpha$ (which will be discussed next). Furthermore, the integration in the formula above involves a univariate Fourier inversion in contrast to the bivariate inversion of \citet{HurdZhou-2010}, implying that the \citet{CaldanaFusai-2013} result requires less computation time. However, the result of \citet{CaldanaFusai-2013} is a lower bound approximation for the spread option price, in contrast to the exact price derived by \citet{HurdZhou-2010}. The \citet{HurdZhou-2010} formula also does not depend on the decay parameter $\delta$.

In the following proposition we extend the \citet{CaldanaFusai-2013} result in Theorem \ref{thm-CaldanaFusaiApproximation} to a version that provides a lower bound for the European spread option price at any time $t\in[0,T)$.

\begin{prop}
\label{prop-CaldanaFusaiApproximationExtension}
A lower bound for the time $t\in[0,T)$ price of a European spread option with strike price $K$ is given by
\begin{equation}
C_{K,t}^{\alpha,k} = \left\{\frac{e^{-r(T-t)-\delta k}}{\pi}\int_0^\infty e^{-izk}\psi_{T|t}(z,\delta,\alpha)\dif z\right\}^+,
\end{equation}
where
\begin{align}
\begin{split}
\psi_{T|t}(z,\delta,\alpha)
	& = \frac{e^{i(z-i\delta)\ln\phi_{X_T|t}(0,-i\alpha)}}{iz+\delta}\left[\phi_{X_T|t}\left(z-i\delta-i,-\alpha(z-i\delta)\right)\right.\\
	& \qquad \left.\phi_{X_T|t}\left(z-i\delta,-\alpha(z-i\delta)-i\right)-K\phi_{X_T|t}\left(z-i\delta,-\alpha(z-i\delta)\right)\right],
\end{split}
\end{align}
$\phi_{X_T|t}=\E_\Q[e^{iu_1 X_{1,T}+iu_2 X_{2,T}}|\calF_t]$, $\alpha$ and $k$ are given by
\begin{equation}
\alpha = \frac{F_2(t,T)}{F_2(t,T)+K} \qquad \text{and} \qquad k=\ln\left[F_2(t,T)+K\right],
\end{equation}
and $F_2(t,T)$ is the time $t$ forward price of asset with for delivery at time $T$.
\end{prop}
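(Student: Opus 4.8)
The plan is to reduce the claim to the already-established time-$0$ result, Theorem~\ref{thm-CaldanaFusaiApproximation}, via the Markov property and time-homogeneity of the model. Since the coefficients in the $\Q$-dynamics \eqref{eqn-QDynamics-Stock}--\eqref{eqn-QDynamics-Volatility} are constant, $(X_{1,t},X_{2,t},v_{1,t},v_{2,t})$ is a time-homogeneous Markov process under $\Q$, so for any Borel payoff $g$,
\[
\E_\Q\!\left[g(X_{1,T},X_{2,T})\mid\calF_t\right]=\E_\Q\!\left[g(X_{1,T},X_{2,T})\mid X_{1,t},X_{2,t},v_{1,t},v_{2,t}\right],
\]
and the $\calF_t$-conditional law of $(X_{1,T},X_{2,T})$ equals the unconditional law of the process run over the horizon $T-t$ started from the current state. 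Its conditional characteristic function is exactly $\phi_{X_T|t}$ of \eqref{eqn-SVJD-LogPriceCF-Conditional}, which has the same exponential-affine form as $\phi_{X_T}$ with $T$ replaced by $T-t$ and $(x_1,x_2,v_1,v_2)$ taken as the time-$t$ values of the state. Theorem~\ref{thm-CaldanaFusaiApproximation} expresses the time-$0$ lower bound as a deterministic functional of the initial state through $\phi_{X_T}$; by the Markov property the time-$t$ conditional lower bound is that same functional evaluated at the current state over the horizon $T-t$, i.e.\ with $\phi_{X_T}$ replaced by $\phi_{X_T|t}$ and $e^{-rT}$ by $e^{-r(T-t)}$, which is precisely the asserted formula.

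To make the argument self-contained and to fix $\alpha$ and $k$, I would also rerun the derivation of \citet{CaldanaFusai-2013} conditionally on $\calF_t$: write $C_{K,t}=e^{-r(T-t)}\E_\Q[(S_{1,T}-S_{2,T}-K)^+\mid\calF_t]$; bound the optimal exercise region below, following \citet{Bjerskund-2011}, by the sub-optimal set $A_t=\{S_{1,T}>e^{k}S_{2,T}^{\alpha}/\E_\Q[S_{2,T}^{\alpha}\mid\calF_t]\}$ with $\calF_t$-measurable constants $\alpha,k$, giving $C_{K,t}^{\alpha,k}=e^{-r(T-t)}\E_\Q[(S_{1,T}-S_{2,T}-K)\vm{1}_{A_t}\mid\calF_t]$; dampen in $k$ by $e^{-\delta k}$; take the Fourier transform in $k$; interchange the conditional expectation with the $k$-integral (conditional Fubini, valid as in the unconditional case); evaluate the elementary half-line $k$-integral; and recognise the outcome as the stated combination of $\phi_{X_T|t}(z-i\delta-i,-\alpha(z-i\delta))$, $\phi_{X_T|t}(z-i\delta,-\alpha(z-i\delta)-i)$, $\phi_{X_T|t}(z-i\delta,-\alpha(z-i\delta))$, with the prefactor $e^{i(z-i\delta)\ln\phi_{X_T|t}(0,-i\alpha)}$ coming from $\E_\Q[S_{2,T}^{\alpha}\mid\calF_t]=\phi_{X_T|t}(0,-i\alpha)$; then invert and take positive parts. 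Since $\{\tilde S_{2,t}\}$ is a $\Q$-martingale, $\E_\Q[S_{2,T}\mid\calF_t]=e^{(r-q_2)(T-t)}S_{2,t}=F_2(t,T)=\phi_{X_T|t}(0,-i)$, so the time-$t$ analogue of the Caldana--Fusai parameter choice is $\alpha=F_2(t,T)/(F_2(t,T)+K)$, $k=\ln(F_2(t,T)+K)$; for $K=0$ these reduce to $\alpha=1$, $k=\ln F_2(t,T)$, making $A_t=\{S_{1,T}>S_{2,T}\}$, so the lower bound is attained with equality and one recovers the exact time-$t$ exchange option price consistent with \eqref{eqn-EuExcOp-Price-Decomposed-Timet}.

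I expect the main obstacle to be bookkeeping rather than substance. One must be careful that $\alpha$ and $k$, being functions of $F_2(t,T)$, are $\calF_t$-measurable and hence constants inside $\E_\Q[\,\cdot\mid\calF_t]$, and one must justify the conditional-Fubini interchange and the admissible range of the damping parameter $\delta$. These all carry over verbatim from the time-$0$ case because, for fixed $t$ and fixed current state, the conditional characteristic function \eqref{eqn-SVJD-LogPriceCF-Conditional} is an entire function of the same exponential-affine type as $\phi_{X_T}$ with $T$ replaced by $T-t$; the integrability and decay estimates of \citet{CaldanaFusai-2013}, and thus the set of admissible $\delta$, are therefore unchanged.
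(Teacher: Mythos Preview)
Your proposal is correct, and the ``self-contained'' derivation you sketch in the second and third paragraphs is exactly the route the paper takes: define the conditional sub-optimal exercise set $A_t$ with $\calF_t$-measurable parameters, write the lower bound as a conditional expectation against $\vm{1}_{A_t}$, take the damped Fourier transform in $k$, swap integrals via (conditional) Fubini, integrate the half-line $k$-integral to produce the $(iz+\delta)^{-1}$ prefactor, and identify the remaining double integrals as values of $\phi_{X_T|t}$ before inverting and taking positive parts.

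The one genuine addition in your proposal is the Markov/time-homogeneity reduction in the first paragraph, which the paper does not invoke. That argument is valid and arguably cleaner: since Theorem~\ref{thm-CaldanaFusaiApproximation} expresses the time-$0$ bound as a deterministic functional of $\phi_{X_T}$, and since $\phi_{X_T|t}$ has the same exponential-affine structure with $T$ replaced by $T-t$ and initial state replaced by the time-$t$ state, the result at time $t$ follows by substitution without redoing the Fourier computation. The paper instead reruns the Caldana--Fusai calculation conditionally, which has the modest advantage of being explicit about where the $\calF_t$-measurability of $\alpha$ and $k$ enters and of not relying on the reader to track that the time-$0$ theorem is stated for arbitrary initial data. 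Either route suffices; your observation that one can simply transport the time-$0$ result via the Markov property is a worthwhile economy.
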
 

\begin{proof}
The proof we present follows the outline of Theorem \ref{thm-CaldanaFusaiApproximation} in Appendix A of \citet{CaldanaFusai-2013}. Recall that the time $t$ price of the European spread option is given by $$e^{-r(T-t)}\E_\Q\left[\left.(S_{1,T}-S_{2,T}-K)^+\right|\calF_t\right].$$ To obtain an approximation for the time $t$ price,  we define the event $A_t$ as
\begin{equation}
A_t = \left\{S_{1,T}\geq\frac{e^k S_{2,T}^\alpha}{\E_\Q[S_{2,T}^\alpha|\calF_t]}\right\},
\end{equation}
where $\alpha$ and $k$ are defined as above. This is similar to the event $A$ defined by \citet{CaldanaFusai-2013} (see discussion preceding Theorem \ref{thm-CaldanaFusaiApproximation}), except that the expectation is conditional on $\calF_t$ and forward prices are taken at time $t$. Following the argument of \citet{Bjerskund-2011}, the quantity
\begin{equation}
C_{K,t}^{\alpha,k} = e^{-r(T-t)}\E_\Q\left[\left.(S_{1,T}-S_{2,T}-K)\vm{1}_{A_t}\right|\calF_t\right]
\end{equation}
is a lower bound for the true option price at time $t$.

At this point, we rewrite some of the quantities above in terms of the notation established before. First, we note that since the discounted yield process $\{e^{-(r-q_2)t}S_{2,t}\}$ is a $\Q$-martingale, the forward price $F_2(t,T)$ can be expressed as $\E_\Q[S_{2,T}|\calF_t]$. In terms of the conditional characteristic function $\phi_{X_T|t}$ (which, under the SVJD model, is given by equation \eqref{eqn-SVJD-LogPriceCF-Conditional}), the quantity $\E_\Q[S_{2,T}^\alpha|\calF_t]$ is given by $\phi_{X_T|t}(0,-i\alpha)$. Furthermore, we may rewrite the exercise strategy $A_t$ in terms of log-prices as
\begin{equation}
A_t = \left\{X_{1,T}-\alpha X_{2,T}+\ln\phi_{X_T|t}(0,-i\alpha)>k\right\}.
\end{equation}

Now, we seek to express $C_K^{\alpha,k}(t)$ as a Fourier inversion formula following the proof in \citet{CaldanaFusai-2013} (see Appendix A of their paper). To this end, let $\delta$ be a positive number and define $\psi_{T|t}$ as $$\psi_{T|t}(z,\delta,\alpha) = \int_{-\infty}^{\infty} e^{izk+\delta k}\E_\Q\left[\left.(S_{1,T}-S_{2,T}-K)\vm{1}_{A_t}\right|\calF_t\right]\dif k.$$ In other words, $\psi_{T|t}$ is the Fourier transform of $e^{\delta k}\E_\Q[(S_{1,T}-S_{2,T}-K)\vm{1}_{A_t}|\calF_t]$ in the $k$-variable. Let $g(x_1,x_2)$ be the transition density function of the log-price vector $X_T$. Thus, the $\psi_{T|t}$ may be evaluated as the triple integral $$\psi_{T|t}(z,\delta,\alpha) = \int_{-\infty}^\infty e^{izk+\delta k}\left[\int_{-\infty}^\infty\int_\beta^\infty \left(e^{x_1}-e^{x_2}-K\right)g(x_1,x_2)\dif x_1\dif x_2\right]\dif k,$$ where $\beta = k+\alpha x_2-\ln\phi_{X_T|t}(0,-i\alpha)$. Let $\beta' = x_1-\alpha x_2+\ln\phi_{X_T|t}(0,-i\alpha)$. Then, after changing the order of integration, we have the following:
\begin{align*}
\psi_{T|t}(z,\delta,\alpha)
	& = \int_{-\infty}^\infty\int_{-\infty}^\infty\left[\int_{-\infty}^{\beta'}e^{izk+\delta k}\dif k\right]\left(e^{x_1}-e^{x_2}-K\right)g(x_1,x_2)\dif x_1\dif x_2\\
	& = \int_{-\infty}^\infty\int_{-\infty}^\infty\frac{e^{i(z-i\delta)\beta'}}{iz+\delta}\left(e^{x_1}-e^{x_2}-K\right)g(x_1,x_2)\dif x_1\dif x_2\\
	& = \frac{e^{i(z-i\delta)\ln\phi_{X_T|t}(0,-i\alpha)}}{iz+\delta}\\
	& \qquad \times \int_{-\infty}^\infty\int_{-\infty}^\infty e^{i(z-i\delta)(x_1-\alpha x_2)}\left(e^{x_1}-e^{x_2}-K\right)g(x_1,x_2)\dif x_1\dif x_2\\
	& = \frac{e^{i(z-i\delta)\ln\phi_{X_T|t}(0,-i\alpha)}}{iz+\delta}\\
	& \qquad \left[\int_{-\infty}^\infty\int_{-\infty}^\infty \exp\left\{i(z-i\delta-i)x_1-i\alpha(z-i\delta)x_2\right\}g(x_1,x_2)\dif x_1\dif x_2\right.\\
	& \qquad - \int_{-\infty}^\infty\int_{-\infty}^\infty \exp\left\{i(z-i\delta)x_1-i(\alpha z-i\alpha\delta+i)x_2\right\}g(x_1,x_2)\dif x_1\dif x_2\\
	& \qquad - K\left.\int_{-\infty}^\infty\int_{-\infty}^\infty \exp\left\{i(z-i\delta)x_1-i\alpha(z-i\delta)x_2\right\}g(x_1,x_2)\dif x_1\dif x_2\right]
\end{align*}
The preceding double integrals may be written in terms of $\phi_{X_T|t}$, which thus results to
\begin{align*}
\psi_{T|t}(z,\delta,\alpha)
	& = \frac{e^{i(z-i\delta)\ln\phi_{X_T|t}(0,-i\alpha)}}{iz+\delta}\left[\phi_{X_T|t}\left(z-i\delta-i,-\alpha(z-i\delta)\right)\right.\\
	& \qquad \left.\phi_{X_T|t}\left(z-i\delta,-\alpha(z-i\delta)-i\right)-K\phi_{X_T|t}\left(z-i\delta,-\alpha(z-i\delta)\right)\right]
\end{align*}
We then invert the Fourier transform to obtain $$e^{\delta k}\E_\Q[(S_{1,T}-S_{2,T}-K)\vm{1}_{A_t}|\calF_t] = \frac{1}{\pi}\int_0^\infty e^{-izk}\psi_{T|t}(z,\delta,\alpha)\dif z.$$ It follows therefore that $$C_K^{\alpha,k}(t) = \left\{\frac{e^{-r(T-t)-\delta k}}{\pi}\int_0^\infty e^{-izk}\psi_{T|t}(z,\delta,\alpha)\dif z\right\}^+.$$ We require the positive part in the preceding formula to avoid negative prices for deeply-out-of-money options \citep{CaldanaFusai-2013}. As in \citet{Bjerskund-2011} and \citet{CaldanaFusai-2013} $\alpha$ and $k$ may be chosen as $$\alpha = \frac{F_2(t,T)}{F_2(t,T)+K} \qquad \text{and} \qquad k = \ln\left[F_2(t,T)+K\right].$$
\end{proof}

\begin{rem}
Theorem \ref{thm-CaldanaFusaiApproximation} follows naturally by setting $t=0$.
\end{rem}


We recall the approximate strategy $A_t$ introduced in the proof of Proposition \ref{prop-CaldanaFusaiApproximationExtension}. If we set $K=0$ (as is the case for the exchange option) in the expressions for $\alpha$ and $k$, we obtain $\alpha=1$ and $k = \ln F_2(t,T)$. Since $F_2(t,T)=\E_\Q[S_{2,T}|\calF_t]$, setting $K=0$ causes $A_t$ to coincide with the true exercise strategy $B=\{S_{1,T}\geq S_{2,T}\}$ for the European exchange option. Thus, similar to the findings of \citet{CaldanaFusai-2013}, $C_{K,t}^{\alpha,k}$ (or $C_{K,0}^{\alpha,k}$ in Theorem \ref{thm-CaldanaFusaiApproximation}) is exact for the European exchange option. The following corollary provides the time $t$ price of the European exchange option.

\begin{cor}
\label{cor-EuExcOp-FTPrice}
The time $t$ price of the European exchange option is given by
\begin{equation}
\label{eqn-EuExcOp-FTPrice}
C_t^E = \left\{\frac{e^{-r(T-t)-\delta k}}{\pi}\int_0^\infty e^{-izk}\psi_{T|t}(z,\delta,1)\dif z\right\}^+,
\end{equation}
where
\begin{align*}
\psi_{T|t}(z,\delta,1)
	& = \frac{e^{i(z-i\delta)k}}{iz+\delta}\left[\phi_{X_T|t}\left(z-i\delta-i,-z+i\delta\right)-\phi_{X_T|t}\left(z-i\delta,-z+i\delta-i\right)\right].
\end{align*}
\end{cor}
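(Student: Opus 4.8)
The plan is to obtain Corollary \ref{cor-EuExcOp-FTPrice} as the specialization $K=0$ of Proposition \ref{prop-CaldanaFusaiApproximationExtension}, the only substantive point being that when $K=0$ the generic \citet{Bjerskund-2011}-type lower bound becomes an exact equality. First I would substitute $K=0$ into the expressions for the parameters, which immediately gives $\alpha = F_2(t,T)/(F_2(t,T)+0) = 1$ and $k = \ln[F_2(t,T)+0] = \ln F_2(t,T)$. With these values the sub-optimal exercise set $A_t = \{S_{1,T}\geq e^k S_{2,T}^\alpha/\E_\Q[S_{2,T}^\alpha|\calF_t]\}$ collapses: since $\alpha=1$ and $e^k = F_2(t,T) = \E_\Q[S_{2,T}|\calF_t]$, the ratio $e^k S_{2,T}^\alpha/\E_\Q[S_{2,T}^\alpha|\calF_t]$ reduces to $S_{2,T}$, so $A_t = \{S_{1,T}\geq S_{2,T}\} = B$, the true exercise region for the exchange option. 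Consequently $(S_{1,T}-S_{2,T})\vm{1}_{A_t} = (S_{1,T}-S_{2,T})^+$ and the lower-bound price $C^{\alpha,k}_{K,t}$ of Proposition \ref{prop-CaldanaFusaiApproximationExtension} coincides with $C^E_t = e^{-r(T-t)}\E_\Q[(S_{1,T}-S_{2,T})^+|\calF_t]$.

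Next I would carry the substitution $K=0$, $\alpha=1$ through the integrand $\psi_{T|t}$ of Proposition \ref{prop-CaldanaFusaiApproximationExtension}. The term $-K\phi_{X_T|t}(z-i\delta,-\alpha(z-i\delta))$ drops out; the second argument $-\alpha(z-i\delta)$ becomes $-(z-i\delta)=-z+i\delta$; and the prefactor involves $\ln\phi_{X_T|t}(0,-i\alpha)=\ln\phi_{X_T|t}(0,-i)$. Here I would record the identity $\phi_{X_T|t}(0,-i) = \E_\Q[e^{X_{2,T}}|\calF_t] = \E_\Q[S_{2,T}|\calF_t] = F_2(t,T)$ (recalling $S_{2,T}=e^{X_{2,T}}$ and the definition of $\phi_{X_T|t}$), so that $\ln\phi_{X_T|t}(0,-i) = k$. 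Assembling these pieces gives $\psi_{T|t}(z,\delta,1) = \dfrac{e^{i(z-i\delta)k}}{iz+\delta}\bigl[\phi_{X_T|t}(z-i\delta-i,-z+i\delta) - \phi_{X_T|t}(z-i\delta,-z+i\delta-i)\bigr]$, and substituting this and the value of $C^{\alpha,k}_{K,t}$ into the formula of Proposition \ref{prop-CaldanaFusaiApproximationExtension} yields \eqref{eqn-EuExcOp-FTPrice}.

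I do not anticipate a genuine obstacle, since the whole argument is a direct specialization; the one step that must be stated with care is the reduction $A_t = B$, which relies on using $\phi_{X_T|t}(0,-i\alpha)$ (equivalently $\E_\Q[S_{2,T}^\alpha|\calF_t]$) with $\alpha=1$ to normalize $S_{2,T}$ to its own conditional mean, and it is precisely this normalization that renders the \citet{Bjerskund-2011} strategy optimal when $K=0$. A minor remaining point is the positive-part operator $\{\cdot\}^+$ in \eqref{eqn-EuExcOp-FTPrice}: since $C^E_t$ is a genuine nonnegative option price and the bound is now exact, the $\{\cdot\}^+$ is in principle redundant, and I would note that it is retained purely as a numerical safeguard, exactly as in \citet{CaldanaFusai-2013}, because the Fourier integral evaluated by quadrature may return a small negative value for deeply out-of-the-money parameter configurations. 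Finally, the case $t=0$ of this corollary reproduces the exactness claim for Theorem \ref{thm-CaldanaFusaiApproximation} noted in the discussion preceding the corollary.
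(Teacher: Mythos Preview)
Your proposal is correct and follows essentially the same approach as the paper, which derives the corollary by setting $K=0$ in Proposition~\ref{prop-CaldanaFusaiApproximationExtension}, observing that $\alpha=1$, $k=\ln F_2(t,T)$, and that the approximate exercise set $A_t$ then coincides with the true exercise set $B=\{S_{1,T}\geq S_{2,T}\}$, making the lower bound exact. Your write-up is in fact more detailed than the paper's, since you explicitly verify the simplification $\ln\phi_{X_T|t}(0,-i)=k$ and comment on the role of the positive-part operator, both of which the paper leaves implicit.
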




In the analysis that follows, we focus on the price of the European exchange option at time 0. Suppose the exchange option is \emph{not} deeply out-of-the-money (i.e. $C_0^E\geq 0$). Then the following proposition expresses the \citet{CaldanaFusai-2013} result into a form that is consistent with what we obtained in equation \eqref{eqn-EuExcOp-Price-Decomposed} from the change-of-num\'eraire technique. 

\begin{prop}
\label{prop-EuExcOp-FTPrice-Decomposed}
The time $0$ price of the European exchange option is given by
\begin{align}
\begin{split}
\label{eqn-EuExcOp-FTPrice-Decomposed}
C_0^E
	& = S_{1,0}e^{-q_1 T}\int_0^\infty\frac{1}{\pi(iz+\delta)}\phi_{X_T}^{(1)}(iz+\delta,-iz-\delta)\dif z\\
	& \qquad -S_{2,0}e^{-q_2 T}\int_0^\infty\frac{1}{\pi(iz+\delta)}\phi_{X_T}^{(2)}(iz+\delta,-iz-\delta)\dif z.
\end{split}
\end{align}
where $\phi_{X_T}^{(1)}$ and $\phi_{X_T}^{(2)}$ denote the joint characteristic function of $X_T = (X_{1,T},X_{2,T})^\top$ under $\hat{\Q}_1$ and $\hat{\Q}_2$ (defined in Section \ref{sec-ChangeofNumeraire}), respectively.
\end{prop}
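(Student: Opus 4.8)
The plan is to derive \eqref{eqn-EuExcOp-FTPrice-Decomposed} directly from the exact Fourier representation in Corollary \ref{cor-EuExcOp-FTPrice}, the main ingredients being the cancellation of the auxiliary parameter $k$, the reading of a complex shift of an argument of $\phi_{X_T}$ as a weighting by $S_{1,T}$ or $S_{2,T}$, and the change-of-measure identities defining $\hat{\Q}_1$ and $\hat{\Q}_2$ in Section \ref{sec-ChangeofNumeraire}. First, since the option is assumed not to be deeply out-of-the-money, $C_0^E$ equals the bracketed quantity in \eqref{eqn-EuExcOp-FTPrice} at $t=0$ without the positive part. Substituting $\psi_{T|t}(z,\delta,1)$ at $t=0$ and using $i(z-i\delta)k=(iz+\delta)k$, the prefactors $e^{-\delta k}$, $e^{-izk}$ and $e^{i(z-i\delta)k}$ multiply to $1$, so that
\[
C_0^E = \frac{e^{-rT}}{\pi}\int_0^\infty\frac{\phi_{X_T}(z-i\delta-i,\,-z+i\delta)-\phi_{X_T}(z-i\delta,\,-z+i\delta-i)}{iz+\delta}\,\dif z ,
\]
an expression which, as one would hope for the exchange option, no longer depends on $k$.

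Next I would rewrite the two numerator terms as $\Q$-expectations. With $\phi_{X_T}(u_1,u_2)=\E_\Q[e^{iu_1 X_{1,T}+iu_2 X_{2,T}}]$ and $X_{i,T}=\ln S_{i,T}$, shifting the first argument by $-i$ contributes a factor $e^{X_{1,T}}=S_{1,T}$ and shifting the second by $-i$ a factor $e^{X_{2,T}}=S_{2,T}$; since $i(z-i\delta)=iz+\delta$, the first term equals $\E_\Q[S_{1,T}e^{(iz+\delta)(X_{1,T}-X_{2,T})}]$ and the second equals $\E_\Q[S_{2,T}e^{(iz+\delta)(X_{1,T}-X_{2,T})}]$. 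Recalling from the step leading to \eqref{eqn-EuExcOp-Price2} (via \eqref{eqn-QSolution-Stock} and \eqref{eqn-EuExcOp-RNDerivative}) that $S_{i,T}=S_{i,0}e^{(r-q_i)T}U_{i,T}$, where $U_{i,T}$ is the density of $\hat{\Q}_i$ with respect to $\Q$ on $\calF_T$, one has $\E_\Q[S_{i,T}e^{(iz+\delta)(X_{1,T}-X_{2,T})}]=S_{i,0}e^{(r-q_i)T}\E_{\hat{\Q}_i}[e^{(iz+\delta)(X_{1,T}-X_{2,T})}]$, and this last expectation is the value of the joint characteristic function of $X_T$ under $\hat{\Q}_i$ at the arguments appearing in \eqref{eqn-EuExcOp-FTPrice-Decomposed}, namely $\phi^{(i)}_{X_T}(iz+\delta,-iz-\delta)$. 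Since $e^{-rT}\,S_{i,0}e^{(r-q_i)T}=S_{i,0}e^{-q_iT}$, collecting the two terms gives \eqref{eqn-EuExcOp-FTPrice-Decomposed}.

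The main obstacle is analytic rather than algebraic. One must verify that $\phi_{X_T}$, and hence $\phi^{(i)}_{X_T}$, is finite at the complex arguments that occur --- those with imaginary parts $\delta$ and $1+\delta$ --- and that Fubini's theorem legitimizes both interchanging $\E_\Q$ with the $z$-integral and carrying out the change of measure inside the expectation. These requirements reduce to finiteness of generalized moments such as $\E_{\hat{\Q}_i}[(S_{1,T}/S_{2,T})^{\delta}]$ and $\E_\Q[S_{i,T}^{1+\delta}]$: the jump contributions are controlled by the assumed existence of the jump moment generating functions $M_{\Q,Y_i}(\cdot)$, and the diffusive contributions by the non-explosion of the variance processes under $\Q$, $\hat{\Q}_1$ and $\hat{\Q}_2$ secured by Assumption \ref{asp-ParameterAssumptions}, provided the damping parameter $\delta>0$ is taken in the admissible range (see \citet{CarrMadan-1999} and \citet{CaldanaFusai-2013}). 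One should also check that the two integrals on the right-hand side of \eqref{eqn-EuExcOp-FTPrice-Decomposed} converge separately; running the argument above term by term on $e^{-rT}\E_\Q[S_{i,T}\vm{1}_{\calA_0}]$ in fact identifies each integral with $\hat{\Q}_i(\calA_0)$, which simultaneously guarantees convergence and makes explicit the consistency with the change-of-num\'eraire representation \eqref{eqn-EuExcOp-Price-Decomposed}.
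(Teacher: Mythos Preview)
Your proposal is correct and follows essentially the same route as the paper: start from Corollary \ref{cor-EuExcOp-FTPrice} at $t=0$, drop the positive part, observe that the $k$-dependent exponential factors cancel, interpret the $-i$ shifts in the arguments of $\phi_{X_T}$ as weighting by $S_{1,T}$ or $S_{2,T}$, write $S_{i,T}=S_{i,0}e^{(r-q_i)T}U_{i,T}$, and recognise the resulting expectations as $\phi_{X_T}^{(i)}$ under $\hat{\Q}_i$. Your additional paragraph on Fubini, moment conditions, and the identification of each integral with $\hat{\Q}_i(\calA_0)$ goes beyond what the paper's proof makes explicit, but this is extra care rather than a different argument.
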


\begin{proof}
Since we assume that the option price is nonnegative, we may remove the positive part in equation \eqref{eqn-EuExcOp-FTPrice} (with $t=0$) and write $$C_0^E = e^{-rT}\int_0^\infty\frac{e^{-(iz+\delta)k}}{\pi}\psi_T(z,\delta,1)\dif z.$$ Substituting the expression for $\psi_T$, we have
\begin{align*}
C_0^E
	& = e^{-rT}\left\{\int_0^\infty\frac{e^{-(iz+\delta)k+(iz+\delta)k}}{\pi(iz+\delta)}\phi_{X_T}\left(z-i\delta-i,-z+i\delta\right)\dif z\right.\\
	& \qquad \left.-\int_0^\infty\frac{e^{-(iz+\delta)k+(iz+\delta)k}}{\pi(iz+\delta)}\phi_{X_T}\left(z-i\delta,-z+i\delta-i\right)\dif z\right\}.
\end{align*}
We also note that
\begin{align*}
\phi_{X_T}(z-i\delta-i,-z+i\delta) & = \E_\Q\left[S_{1,T}\exp\left\{(iz+\delta)X_{1,T}+(-iz-\delta)X_{2,T}\right\}\right]\\
\phi_{X_T}(z-i\delta,-z+i\delta-i) & = \E_\Q\left[S_{2,T}\exp\left\{(iz+\delta)X_{1,T}+(-iz-\delta)X_{2,T}\right\}\right].
\end{align*}
But note that for $i=1,2$, the terminal stock prices can be written as $S_{i,T} = S_{i,0}e^{(r-q_i)T}U_{i,T}$, where $U_{i,T}$ is defined as in equation \eqref{eqn-EuExcOp-RNDerivative}. We can thus write
\begin{align*}
\phi_{X_T}(z-i\delta-i,-z+i\delta) & = S_{1,0}e^{(r-q_1)T}\E_\Q\left[U_{1,T}\exp\left\{(iz+\delta)X_{1,T}+(-iz-\delta)X_{2,T}\right\}\right]\\
\phi_{X_T}(z-i\delta,-z+i\delta-i) & = S_{2,0}e^{(r-q_2)T}\E_\Q\left[U_{2,T}\exp\left\{(iz+\delta)X_{1,T}+(-iz-\delta)X_{2,T}\right\}\right].
\end{align*}
Using these expressions for $\phi_{X_T}$, the European exchange option price is therefore given by
\begin{align*}
C_0^E
	& = S_{1,0}e^{-q_1 T}\int_0^\infty\frac{1}{\pi(iz+\delta)}\E_\Q\left[U_{1,T}\exp\left\{(iz+\delta)X_{1,T}+(-iz-\delta)X_{2,T}\right\}\right]\dif z\\
	& \qquad -S_{2,0}e^{-q_2 T}\int_0^\infty\frac{1}{\pi(iz+\delta)}\E_\Q\left[U_{2,T}\exp\left\{(iz+\delta)X_{1,T}+(-iz-\delta)X_{2,T}\right\}\right]\dif z.
\end{align*}
Recall from Section \ref{sec-ChangeofNumeraire} that $U_{1,T}$ and $U_{2,T}$ are Radon-Nikod\'ym derivatives that define new probability measures $\hat{\Q}_1$ and $\hat{\Q}_2$ equivalent to $\Q$. Therefore, the expectations that appear above define joint characteristic functions of the log-prices under these new probability measures. The desired result thus follows from replacing the expectations with $\phi_{X_T}^{(1)}$ and $\phi_{X_T}^{(2)}$, the characteristic function of $X_T = (X_{1,T},X_{2,T})^\top$ under $\hat{\Q}_1$ and $\hat{\Q}_2$ respectively.
\end{proof}

\begin{rem}
Under the SVJD specification, the joint characteristic functions $\phi_{X_T}^{(1)}$ and $\phi_{X_T}^{(2)}$ have forms similar to $\phi_{X_T}$ as provided in equation \eqref{eqn-SVJD-LogPriceCF}, with some slight changes in the values of the parameters due to the change of measure. Properties of log-prices after the change of measure follow from the discussion in Section \ref{sec-ChangeofNumeraire}, under the assumption that $\rho_w=\rho_z=0$ in addition to Assumption \ref{asp-ParameterAssumptions}. As such, $\phi_{X_T}^{(1)}$ and $\phi_{X_T}^{(2)}$ may be determined in a manner similar to the calculations presented in the first part of this section.
\end{rem}




The above calculations therefore show that the \citet{CaldanaFusai-2013} result, when applied to European exchange options, allows for a decomposition similar to equation \eqref{eqn-EuExcOp-Price-Decomposed}, which was obtained via the change-of-num\'eraire technique. This therefore presents the possibility that the probabilities $\hat{\Q}_1(\calA_0)$ and $\hat{\Q}_2(\calA_0)$ (the probability of the option being in-the-money under the alternative measures $\hat{\Q}_1$ and $\hat{\Q}_2$) may be computed using Fourier inversion in equation \eqref{eqn-EuExcOp-FTPrice-Decomposed}.


The same analysis can be applied to produce a similar representation for the time $t$ price of the European exchange option, as shown in the next proposition. 

\begin{prop}
The time $t$ European exchange option price is given by
\begin{align}
\begin{split}
\label{eqn-EuExcOp-FTPrice-Decomposed-Timet}
C_t^E & = S_{1,t}e^{-q_1(T-t)}\int_0^\infty\frac{1}{\pi(iz+\delta)}\phi_{X_T|t}^{(1)}(iz+\delta,-iz-\delta)\dif z\\
			& \qquad S_{2,T}e^{-q_2(T-t)}\int_0^\infty\frac{1}{\pi(iz+\delta)}\phi_{X_T|t}^{(2)}(iz+\delta,-iz-\delta)\dif z,
\end{split}
\end{align}
where $$\phi_{X_T|t}^{(i)}(u_1,u_2) = \E_{\hat{\Q}_i}\left[\left.e^{iu_1 X_{1,T}+iu_2 X_{2,T}}\right|\calF_t\right], \qquad i=1,2$$ gives the joint conditional characteristic function of $X_T = (X_{1,T},X_{2,T})^\top$ under $\hat{\Q}_i$, and $\hat{\Q}_1$ and $\hat{\Q}_2$ are the probability measures equivalent to $\Q$ discussed in Section \ref{sec-ChangeofNumeraire}.
\end{prop}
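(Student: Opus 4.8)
The plan is to mirror, \emph{mutatis mutandis}, the proof of Proposition \ref{prop-EuExcOp-FTPrice-Decomposed}, replacing the unconditional change of measure used there by a conditional one. As in that proof, I would first assume the option is not deeply out-of-the-money, so that $C_t^E\geq 0$ and the positive part in Corollary \ref{cor-EuExcOp-FTPrice} may be dropped, giving $C_t^E = \frac{e^{-r(T-t)-\delta k}}{\pi}\int_0^\infty e^{-izk}\,\psi_{T|t}(z,\delta,1)\dif z$. Substituting $\psi_{T|t}(z,\delta,1)=\tfrac{e^{i(z-i\delta)k}}{iz+\delta}\big[\phi_{X_T|t}(z-i\delta-i,-z+i\delta)-\phi_{X_T|t}(z-i\delta,-z+i\delta-i)\big]$ and observing that $e^{-\delta k}e^{-izk}e^{i(z-i\delta)k}=1$ \emph{regardless of the value of $k$} (since $i(z-i\delta)=iz+\delta$), the dependence on $k$, hence on $F_2(t,T)$, cancels and one is left with
\begin{equation*}
C_t^E = \frac{e^{-r(T-t)}}{\pi}\int_0^\infty \frac{1}{iz+\delta}\Big[\phi_{X_T|t}(z-i\delta-i,-z+i\delta)-\phi_{X_T|t}(z-i\delta,-z+i\delta-i)\Big]\dif z .
\end{equation*}

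Next I would reinterpret the two evaluations of the conditional characteristic function. Writing out the exponents and using $e^{X_{i,T}}=S_{i,T}$, one gets $\phi_{X_T|t}(z-i\delta-i,-z+i\delta)=\E_\Q\big[S_{1,T}\,e^{(iz+\delta)X_{1,T}+(-iz-\delta)X_{2,T}}\mid\calF_t\big]$ and $\phi_{X_T|t}(z-i\delta,-z+i\delta-i)=\E_\Q\big[S_{2,T}\,e^{(iz+\delta)X_{1,T}+(-iz-\delta)X_{2,T}}\mid\calF_t\big]$. From equation \eqref{eqn-QSolution-Stock} one has $S_{i,T}=S_{i,t}e^{(r-q_i)(T-t)}U_{i,t,T}$, where
\begin{equation*}
U_{i,t,T}=\exp\Big\{-\tfrac12\textstyle\int_t^T v_{i,s}\dif s+\int_t^T\sqrt{v_{i,s}}\dif\tilde W_{i,s}-\tilde\lambda_i\tilde\kappa_i(T-t)+\sum_{n=N_{i,t}+1}^{N_{i,T}}Y_{i,n}\Big\}
\end{equation*}
is the restriction to $[t,T]$ of the Radon--Nikod\'ym density $U_{i,T}$ of Section \ref{sec-ChangeofNumeraire}; the jump sum now counts only jumps in $(t,T]$. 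Re-running the martingale argument of Section \ref{sec-ChangeofNumeraire} on $[t,T]$ rather than $[0,T]$ shows $\E_\Q[U_{i,t,T}\mid\calF_t]=1$, so $U_{i,t,T}$ is a genuine $\calF_t$-conditional density and the abstract Bayes rule gives $\E_\Q[U_{i,t,T}\,Z\mid\calF_t]=\E_{\hat\Q_i}[Z\mid\calF_t]$ for bounded $\calF_T$-measurable $Z$. Hence $\phi_{X_T|t}(z-i\delta-i,-z+i\delta)=S_{1,t}e^{(r-q_1)(T-t)}\phi_{X_T|t}^{(1)}(iz+\delta,-iz-\delta)$ (and analogously for the second term with $S_{2,t}$ and $\phi_{X_T|t}^{(2)}$), in the notation of the statement. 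Substituting back, the factor $e^{(r-q_i)(T-t)}$ combines with $e^{-r(T-t)}$ to produce $e^{-q_i(T-t)}$, yielding exactly \eqref{eqn-EuExcOp-FTPrice-Decomposed-Timet}.

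I expect the main obstacle to be the conditional change of measure: one must verify carefully that $U_{i,t,T}$ is a legitimate $\calF_t$-conditional Radon--Nikod\'ym density --- checking the Novikov/martingale condition of Section \ref{sec-ChangeofNumeraire} on $[t,T]$ (which Assumption \ref{asp-ParameterAssumptions} with $\rho_w=\rho_z=0$ guarantees, together with the non-explosion of $v_{1,\cdot}$ and $v_{2,\cdot}$ under $\hat\Q_1$ and $\hat\Q_2$ already established there) and invoking the conditional (Bayes) form of the change-of-measure identity rather than the unconditional one used at $t=0$. Everything else is the computation of Proposition \ref{prop-EuExcOp-FTPrice-Decomposed} carried through verbatim.
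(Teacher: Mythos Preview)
Your proposal is correct and follows essentially the same approach as the paper. The paper's own proof is in fact a single sentence stating that the argument is identical to that of Proposition \ref{prop-EuExcOp-FTPrice-Decomposed} with unconditional expectations replaced by their conditional counterparts (referring to Corollary \ref{cor-EuExcOp-FTPrice}); you have simply written out those conditional-version details explicitly, including the observation that $U_{i,t,T}=U_{i,T}/U_{i,t}$ serves as the $\calF_t$-conditional density for the change to $\hat{\Q}_i$.
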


\begin{proof}
The proof of this proposition is similar to that of Proposition \ref{prop-EuExcOp-FTPrice-Decomposed}, except that the expectations involved are replaced by their conditional counterparts (see Corollary \ref{cor-EuExcOp-FTPrice}).
\end{proof}

Because of the equivalence between equations \eqref{eqn-EuExcOp-Price-Decomposed-Timet} and \eqref{eqn-EuExcOp-FTPrice-Decomposed-Timet}, it follows that equation \eqref{eqn-EuExcOp-FTPrice-Decomposed-Timet} characterizes a solution to the IPDE in Proposition \ref{prop-SVJD-PricingIPDE} with terminal condition $C_T^E = (S_{1,T}-S_{2,T})^+$.

In summary, the European exchange option price under the SVJD model may be represented by equation \eqref{eqn-EuExcOp-Price-Decomposed}, which is a formula that resembles the original \citet{Margrabe-1978} result under the Black-Scholes framework. It is also notable that our results in equations \eqref{eqn-EuExcOp-Price-Decomposed} and \eqref{eqn-EuExcOp-FTPrice-Decomposed} do not contain the risk-free rate $r$, similar to the \citet{Margrabe-1978} formula. Due to the addition of stochastic volatilities, a closed formula for the probabilities in equation \eqref{eqn-EuExcOp-Price-Decomposed} cannot be obtained. However, by making additional assumptions on the correlation structure of the Wiener processes involved, we were able to obtain equation \eqref{eqn-EuExcOp-FTPrice-Decomposed}, a Fourier inversion formula for the European exchange option price (following the work of \citet{CaldanaFusai-2013} and \citet{CaneOlivares-2014}). Lastly, by extending the analysis of \citet{CaldanaFusai-2013} to obtain time-$t$ option prices, we were able to obtain a solution $C_t^E$ of IPDE \eqref{prop-SVJD-PricingIPDE}, for the case of the European exchange option, in terms of Fourier inversion formulas.

\section{A Representation of the American Exchange Option Price}
\label{sec-AmericanExchangeOption}

In this section, we show that the price of the American exchange option, under our SVJD model for underlying stock prices, can be decomposed into a sum of the European exchange option price and the early exercise premium. As will be shown, the early exercise premium can be decomposed into a premium arising from the diffusion part of the asset dynamics and premia arising from the possibility of sudden jumps in the asset prices. For expositional convenience, we proceed first with the discussion then consolidate our main result in Proposition \ref{prop-SVJD-AmExcOpPriceRep}.

Note that equation \eqref{eqn-QDynamics-ExcOptionPrice} also applies to the discounted American exchange option price $\tilde{C}_t^A$. In terms of the partial differential operator $\calL$, $\tilde{C}_t^A = e^{-rt}C_{t}^A$ satisfies the stochastic differential equation
\begin{align}
\begin{split}
\dif\tilde{C}_t^A
	& = e^{-rt}\left\{\calL[C_{t-}^A]-rC_{t}^A+\tilde{\lambda}_1\E_\Q^{Y_1}\left[C_t^A\left(S_{1,t-}e^{Y_1},S_{2,t-},v_{1,t},v_{2,t}\right)-C_{t-}^A\right]\right.\\
	& \qquad \left.+ \tilde{\lambda}_2\E_\Q^{Y_2}\left[C_t^A\left(S_{1,t},S_{2,t}e^{Y_2},v_{1,t},v_{2,t}\right)-C_{t-}^A\right]\right\}\dif t\\
	& \qquad + e^{-rt}\sum_{i=1}^2\sqrt{v_{i,t}}S_{i,t-}\pder[C_{t-}^A]{s_i}\dif\tilde{W}_{i,t}+e^{-rt}\sum_{i=1}^2 \sigma_i\sqrt{v_{i,t}}\pder[C_{t-}^A]{v_i}\dif\tilde{Z}_{i,t}\\
	& \qquad + e^{-rt}\int_\mathbb{R}\left[C_t\left(S_{1,t-}e^{y_1},S_{2,t-},v_{1,t},v_{2,t}\right)-C_{t-}^A\right]q(\dif y_1,\dif t)\\
	& \qquad + e^{-rt}\int_\mathbb{R}\left[C_t\left(S_{1,t-},S_{2,t-}e^{y_2},v_{1,t},v_{2,t}\right)-C_{t-}^A\right]q(\dif y_2,\dif t),
\end{split}
\end{align}
where $C_{t-}^A = C_t^A(S_{1,t-},S_{2,t-},v_{1,t},v_{2,t})$ represents the pre-jump price of the American exchange option at time $t$. In integral form, we have
\begin{align*}
\tilde{C}_{T}^A
	& = \tilde{C}_t^A+\int_t^T e^{-ru}\left\{\calL[C_{u-}^A]-rC_{u}^A+\tilde{\lambda}_1\E_\Q^{Y_1}\left[C_u^A\left(S_{1,u-}e^{Y_1},S_{2,u-},v_{1,u},v_{2,u}\right)-C_{u-}^A\right]\right.\\
	& \qquad \left.+ \tilde{\lambda}_2\E_\Q^{Y_2}\left[C_u^A\left(S_{1,u},S_{2,u}e^{Y_2},v_{1,u},v_{2,u}\right)-C_{u-}^A\right]\right\}\dif u\\
	& \qquad + \int_t^T e^{-ru}\sum_{i=1}^2\sqrt{v_{i,u}}S_{i,u-}\pder[C_{u-}^A]{s_i}\dif\tilde{W}_{i,u}+\int_t^T e^{-ru}\sum_{i=1}^2 \sigma_i\sqrt{v_{i,u}}\pder[C_{u-}^A]{v_i}\dif\tilde{Z}_{i,u}\\
	& \qquad + \int_t^T e^{-ru}\int_\mathbb{R}\left[C_u\left(S_{1,u-}e^{y_1},S_{2,u-},v_{1,u},v_{2,u}\right)-C_{u-}^A\right]q(\dif y_1,\dif u)\\
	& \qquad + \int_t^T e^{-ru}\int_\mathbb{R}\left[C_u\left(S_{1,u-},S_{2,u-}e^{y_2},v_{1,u},v_{2,u}\right)-C_{u-}^A\right]q(\dif y_2,\dif u).
\end{align*}
A division by $e^{-rt}$ produces the equation
{\footnotesize
\begin{align*}
\frac{C_T^A}{e^{r(T-t)}}
	& = C_t^A+\int_t^T e^{-r(u-t)}\left\{\calL[C_{u-}^A]-rC_{u}^A+\tilde{\lambda}_1\E_\Q^{Y_1}\left[C_u^A\left(S_{1,u-}e^{Y_1},S_{2,u-},v_{1,u},v_{2,u}\right)-C_{u-}^A\right]\right.\\
	& \qquad \left.+ \tilde{\lambda}_2\E_\Q^{Y_2}\left[C_u^A\left(S_{1,u},S_{2,u}e^{Y_2},v_{1,u},v_{2,u}\right)-C_{u-}^A\right]\right\}\dif u\\
	& \qquad + \int_t^T e^{-r(u-t)}\sum_{i=1}^2\sqrt{v_{i,u}}S_{i,u-}\pder[C_{u-}^A]{s_i}\dif\tilde{W}_{i,u}+\int_t^T e^{-r(u-t)}\sum_{i=1}^2 \sigma_i\sqrt{v_{i,u}}\pder[C_{u-}^A]{v_i}\dif\tilde{Z}_{i,u}\\
	& \qquad + \int_t^T e^{-r(u-t)}\int_\mathbb{R}\left[C_u\left(S_{1,u-}e^{y_1},S_{2,u-},v_{1,u},v_{2,u}\right)-C_{u-}^A\right]q(\dif y_1,\dif u)\\
	& \qquad + \int_t^T e^{-r(u-t)}\int_\mathbb{R}\left[C_u\left(S_{1,u-},S_{2,u-}e^{y_2},v_{1,u},v_{2,u}\right)-C_{u-}^A\right]q(\dif y_2,\dif u).
\end{align*}}
To further simplify notation, define the operator $\bar{\calL}$ as
\begin{align}
\begin{split}
\bar{\calL}[C_{u-}^A]
	& = \calL[C_{u-}^A]-rC_{u-}^A+\tilde{\lambda}_1\E_\Q^{Y_1}\left[C_u^A\left(S_{1,u-}e^{Y_1},S_{2,u-},v_{1,u},v_{2,u}\right)-C_{u-}^A\right]\\
	& \qquad + \tilde{\lambda}_2\E_\Q^{Y_2}\left[C_u^A\left(S_{1,u-},S_{2,u-}e^{Y_2},v_{1,u},v_{2,u}\right)-C_{u-}^A\right],
\end{split}
\end{align}
so the equation above becomes
\begin{align*}
\frac{C_T^A}{e^{r(T-t)}}
	& = C_t^A+\int_t^T e^{-r(u-t)}\bar{\calL}[C_{u-}^A]\dif u\\
	& \qquad + \int_t^T e^{-r(u-t)}\sum_{i=1}^2\sqrt{v_{i,u}}S_{i,u-}\pder[C_{u-}^A]{s_i}\dif\tilde{W}_{i,u}+\int_t^T e^{-r(u-t)}\sum_{i=1}^2 \sigma_i\sqrt{v_{i,u}}\pder[C_{u-}^A]{v_i}\dif\tilde{Z}_{i,u}\\
	& \qquad + \int_t^T e^{-r(u-t)}\int_\mathbb{R}\left[C_u\left(S_{1,u-}e^{y_1},S_{2,u-},v_{1,u},v_{2,u}\right)-C_{u-}^A\right]q(\dif y_1,\dif u)\\
	& \qquad + \int_t^T e^{-r(u-t)}\int_\mathbb{R}\left[C_u\left(S_{1,u-},S_{2,u-}e^{y_2},v_{1,u},v_{2,u}\right)-C_{u-}^A\right]q(\dif y_2,\dif u).
\end{align*}

As remarked below equation \eqref{eqn-QDynamics-ExcOptionPrice}, terms involving the $\Q$-Wiener processes and the $\Q$-compensated counting measures are martingales with zero mean under $\Q$. Hence, taking the conditional expectation of the above equation under $\Q$ with respect to $\calF_t$, we obtain
\begin{equation}
\label{eqn-AmericanExcOptionRep1}
\E_\Q\left[\left.e^{-r(T-t)}C_T^A\right|\calF_t\right] = C_t^A + \E_\Q\left[\left.\int_t^T e^{-r(u-t)}\bar{\calL}[C_{u-}^A]\dif u\right|\calF_t\right].
\end{equation}
Since the terminal price for both the European and American exchange options are the same, we have the relation $$\E_\Q\left[\left.e^{-r(T-t)}C_T^A\right|\calF_t\right] = \E_\Q\left[\left.e^{-r(T-t)}(S_{1,T}-S_{2,T})^+\right|\calF_t\right] = C_t^E,$$ and so therefore we have
\begin{equation}
\label{eqn-AmericanExcOptionRep2}
C_t^A = C_t^E - \E_\Q\left[\left.\int_t^T e^{-r(u-t)}\bar{\calL}[C_{u-}^A]\dif u\right|\calF_t\right].
\end{equation}

At this point, we introduce the early exercise region for American exchange options. Consolidating the findings of \citet{BroadieDetemple-1997}, \citet{Touzi-1999}, and \citet{CheangChiarella-2011}, we define the early exercise region (or stopping region) at time $t$ for the American exchange option to be given by
\begin{equation}
\calS = \left\{\left(S_{1,t},S_{2,t}\right)\in\mathbb{R}^2_+:S_{1,t}\geq B(v_{1,t},v_{2,t},t)S_{2,t}\right\},
\end{equation}
where the line
\begin{equation}
s_{1,t} = B(v_{1,t},v_{2,t},t)S_{2,t}
\end{equation}
on the $s_1s_2$-plane is called the early exercise boundary. The continuation region $\calC$, given by
\begin{equation}
\calC = \left\{\left(S_{1,t},S_{2,t}\right)\in\mathbb{R}^2_+:S_{1,t}< B(v_{1,t},v_{2,t},t)S_{2,t}\right\},
\end{equation}
is complement of $\calS$ in the first quadrant of the $s_1s_2$-plane (see Figure \ref{fig-EarlyExerciseBoundary}).\footnote{Optimal stopping arguments \citep[see for example][in the case of the single-asset American put option]{Myneni-1992} lead to a definition of the stopping and continuation regions in terms of the option price and its payoff. In the case of the American exchange option \citep{BroadieDetemple-1997}, the stopping region is given by $$\calS = \left\{\left(S_{1,t},S_{2,t}\right)\in\mathbb{R}^2_+:C_t^A(S_{1,t},S_{2,t},v_{1,t},v_{2,t}) = (S_{1,t}-S_{2,t})^+\right\}.$$ Analogously, the continuation region is defined as $$\calC = \left\{\left(S_{1,t},S_{2,t}\right)\in\mathbb{R}^2_+:C_t^A(S_{1,t},S_{2,t},v_{1,t},v_{2,t}) > (S_{1,t}-S_{2,t})^+\right\}.$$} Note that $B(v_{1,t},v_{2,t},t)$ (which is greater than or equal to 1) represents the critical price ratio of stocks 1 and 2 above which it is optimal to exercise the option. If at time $t$ the stock prices are in $\calS$, then it is optimal to exercise the American exchange option. If the stock prices are in $\calC$ at time $t$, then the option should not be exercised and the investor should ``continue'' to wait until it is optimal.

\begin{figure}
\centering
\includegraphics[width=3.5in]{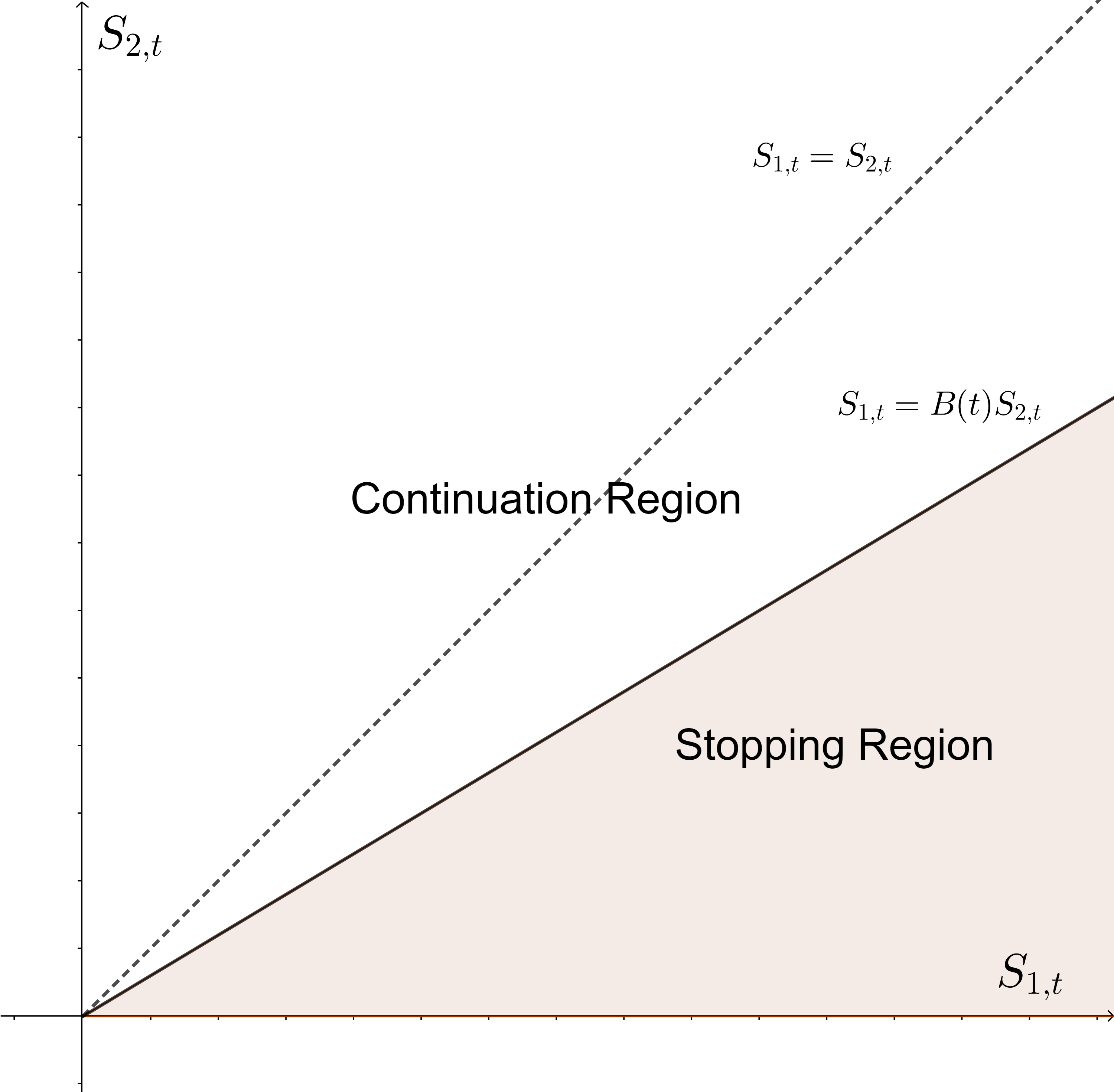}
\caption{The early exercise boundary and the continuation and stopping regions for the American exchange option, adapted from \citet{BroadieDetemple-1997, CheangChiarella-2011}. In the SVJD case, the slope $B(t)$ (shorthand for $B(v_{1,t},v_{2,t},t)$) of the early exercise boundary is also dependent on $v_{1,t}$ and $v_{2,t}$ \citep{Touzi-1999}.}
\label{fig-EarlyExerciseBoundary}
\end{figure}

In the stopping region $\calS$, the discounted American exchange option price is a strict supermartingale \citep{BroadieDetemple-1997, CheangChiarella-2011}, hence from equation \eqref{eqn-AmericanExcOptionRep1} it should hold that
\begin{equation}
\bar{\calL}[C_{t-}^A] < 0, \qquad \text{if $(S_{1,t-}, S_{2,t-})\in\calS$}.
\end{equation}
In the continuation region $\calC$, it is suboptimal to exercise the American option (i.e. the American option behaves like the European option), and so the discounted American option price is a martingale. This implies that
\begin{equation}
\bar{\calL}[C_{t-}^A] = 0, \qquad \text{if $(S_{1,t-}, S_{2,t-})\in\calC$}.
\end{equation}

From the last analysis above, it follows that the American exchange option price is the solution to the IPDE
\begin{align}
\begin{split}
\label{eqn-SVJD-AmExcOptionIPDE}
rC_{t-}^A
	& = \calL[C_{t-}^A]+\tilde{\lambda}_1\E_\Q^{Y_1}\left[C_t^A\left(S_{1,t-}e^{Y_1},S_{2,t-},v_{1,t},v_{2,t}\right)-C_{t-}^A\right]\\
	& \qquad + \tilde{\lambda}_2\E_\Q^{Y_2}\left[C_t^A\left(S_{1,t-}, S_{2,t-}e^{Y_2},v_{1,t},v_{2,t}\right)-C_{t-}^A\right],
\end{split}
\end{align}
where $0<S_{1,t}<B(v_{1,t},v_{2,t},t)S_{2,t}$, $S_{2,t}>0$, $v_{1,t}>0$, $v_{2,t}>0$, and $0\leq t<T$ (i.e. the continuation region $\calC$). Terminal and boundary conditions for the IPDE are given by
\begin{align}
\begin{split}
C_T^A(S_{1,T},S_{2,T},v_{1,T},v_{2,T}) & = (S_{1,T}-S_{2,T})^+\\
C_t^A(0,S_{2,t},v_{1,t},v_{2,t}) & = 0, \qquad S_{2,t}>0\\
C_t^A(S_{1,t},0,v_{1,t},v_{2,t}) & = S_{1,t}, \qquad S_{1,t}>0.
\end{split}
\end{align}
The IPDE is also supplemented by the value-matching condition
\begin{equation}
C_t^A = S_{1,t}-S_{2,t}, \qquad \text{for $S_{1,t}\geq B(v_{1,t},v_{2,t},t)S_{2,t}$ and $S_{2,t}>0$}
\end{equation}
that gives the price of the option once stock prices enter the stopping region $\calS$ \citep{CheangChiarella-2011, ChiarellaKangMeyer-2015}. In addition to the value-matching condition, we also require additional conditions on the behavior of the American exchange option price along the early exercise boundary. These additional conditions are known as smooth-pasting conditions:
\begin{align}
\begin{split}
\label{eqn-SVJD-SmoothPastingConditions}
\lim_{s_1\to B(v_1,v_2,t)s_2}\pder[C_t^A]{s_1}(s_1,s_2,v_1,v_2) & = 1\\
\lim_{s_1\to B(v_1,v_2,t)s_2}\pder[C_t^A]{s_2}(s_1,s_2,v_1,v_2) & = -1\\
\lim_{s_1\to B(v_1,v_2,t)s_2}\pder[C_t^A]{v_1}(s_1,s_2,v_1,v_2) & = 0\\
\lim_{s_1\to B(v_1,v_2,t)s_2}\pder[C_t^A]{v_2}(s_1,s_2,v_1,v_2) & = 0\\
\lim_{s_1\to B(v_1,v_2,t)s_2}\pder[C_t^A]{t}(s_1,s_2,v_1,v_2) & =0.
\end{split}
\end{align}
The smooth-pasting conditions result from the assumption that the holder of the American exchange option will maximize its value by selecting the appropriate exercise strategy. Consequently, the smooth-pasting conditions also ensure that the first-order partial derivatives of $C_t^A$ with respect to stock prices and volatility will be continuous for any value of $S_1$ and $S_2$ \citep{ChiarellaKangMeyer-2015}.

\begin{rem}
The American exchange option pricing problem can be formulated as a linear complementarity problem based on the preceding arguments. Since the stock price ordered pair $(S_{1,t-},S_{2,t-})$ is always either in the stopping or continuation region, it follows that
\begin{equation}
\label{eqn-LCPFormulation}
\min\left\{-\bar{\calL}\left[C_{t-}^A\right], C_{t-}^A - (S_{1,t-}-S_{2,t-})\right\} = 0
\end{equation}
for any $S_{1,t-}>0$, $S_{2,t-}>0$, and $0< t\leq T$. This equation is supplemented by the inequalities
\begin{align}
\bar{\calL}[C_{t-}^A] & \leq 0\\
C_{t-}^A & \geq S_{1,t-}-S_{2,t-},
\end{align}
both of which are true for any $S_{1,t-}>0$, $S_{2,t-}>0$, and $0< t\leq T$, and the terminal payoff condition
\begin{equation}
C_T^A = S_{1,T}-S_{2,T}.
\end{equation}
Note that this formulation does not contain the early exercise boundary, which may be useful in implementing numerical methods to find the solution to the pricing problem \citep{Seydel-2017}. Similar to what was noted in \citet[Chapter 7]{ChiarellaKangMeyer-2015}, the early exercise boundary may be found \emph{after} finding the solution as the boundary of the set $$\left\{S_{1,t},S_{2,t},v_{1,t},v_{2,t}:C^A_t(S_{1,t},S_{2,t},v_{1,t},v_{2,t})>S_{1,t}-S_{2,t}\right\}.$$
\end{rem}

To proceed with equation \eqref{eqn-AmericanExcOptionRep2}, we decompose the conditional expectation into integrals over the stopping and continuation regions. If $\calA_t$ denotes the event that the stock prices at time $t$ are in $\calS$, then
\begin{align*}
C_t^A
	& = C_t^E - \int_t^T e^{-r(u-t)}\E_\Q\left[\left.\bar{\calL}[C_{u-}^A]\vm{1}_{\calA_u}+\bar{\calL}[C_{u-}^A]\vm{1}_{\calA_u^c}\right|\calF_t\right]\dif u\\
	& = C_t^E - \int_t^T e^{-r(u-t)}\E_\Q\left[\left.\bar{\calL}[C_{u-}^A]\vm{1}_{\calA_u}\right|\calF_t\right]\dif u
\end{align*}
Suppose $(S_{1,t-},S_{2,t-})\in\calS$ (i.e. the event $\calA_t$ holds). It follows from the value-matching condition that $$C_{t-}^A = C_t^A(S_{1,t-},S_{2,t-},v_{1,t},v_{2,t}) = S_{1,t-}-S_{2,t-},$$ and so $$\pder[C_{t-}^A]{s_1} = 1 , \quad \pder[C_{t-}^A]{s_1} = -1, \quad \pder[C_{t-}^A]{t} = \pder[C_{t-}^A]{v_1} = \pder[C_{t-}^A]{v_2} = 0$$ for $(S_{1,t-},S_{2,t-})\in\calS$. Consequently, all second-order partial derivatives appearing in $\bar{\calL}[C_{t-}^A]$ vanish in the stopping region. This implies that
\begin{align*}
\bar{\calL}[C_{u-}^A]\vm{1}_{\calA_u}
	& = \calL[C_{u-}^A]\vm{1}_{\calA_u}-rC_{u-}^A\vm{1}_{\calA_u}\\
	& \qquad + \tilde{\lambda}_1\E_\Q^{Y_1}\left[C_u^A\left(S_{1,u-}e^{Y_1},S_{2,u-},v_{1,u},v_{2,u}\right)-C_{u-}^A\right]\vm{1}_{\calA_u}\\
	& \qquad + \tilde{\lambda}_2\E_\Q^{Y_2}\left[C_u^A\left(S_{1,u-},S_{2,u-}e^{Y_2},v_{1,u},v_{2,u}\right)-C_{u-}^A\right]\vm{1}_{\calA_u}\\
	& = \left[(r-q_1-\tilde{\lambda}_1\tilde{\kappa}_1)S_{1,u-}(1)+(r-q_2-\tilde{\lambda}_2\tilde{\kappa}_2)S_{2,u-}(-1)\right]\vm{1}_{\calA_u}\\
	& \qquad -r\left(S_{1,u-}-S_{2,u-}\right)\vm{1}_{\calA_u}\\
	& \qquad + \tilde{\lambda}_1\E_\Q^{Y_1}\left[C_u^A\left(S_{1,u-}e^{Y_1},S_{2,u-},v_{1,u},v_{2,u}\right)-(S_{1,u-}-S_{2,u-})\right]\vm{1}_{\calA_u}\\
	& \qquad + \tilde{\lambda}_2\E_\Q^{Y_2}\left[C_u^A\left(S_{1,u-},S_{2,u-}e^{Y_2},v_{1,u},v_{2,u}\right)-(S_{1,u-}-S_{2,u-})\right]\vm{1}_{\calA_u}.
\end{align*}
Recalling that $\tilde{\kappa}_i = \E_\Q^{Y_i}(e^{Y_i}-1)$, the last expression simplifies to
\begin{align*}
\bar{\calL}[C_{u-}^A]\vm{1}_{\calA_u}
	& = -\left[q_1 S_{1,u-}-q_2 S_{2,u-}\right]\vm{1}_{\calA_u}\\
	& \qquad + \tilde{\lambda}_1\E_\Q^{Y_1}\left[C_u^A\left(S_{1,u-}e^{Y_1},S_{2,u-},v_{1,u},v_{2,u}\right)-\left(S_{1,u-}e^{Y_1}-S_{2,u-}\right)\right]\vm{1}_{\calA_u}\\
	& \qquad + \tilde{\lambda}_2\E_\Q^{Y_2}\left[C_u^A\left(S_{1,u-},S_{2,u-}e^{Y_2},v_{1,u},v_{2,u}\right)-\left(S_{1,u-}-S_{2,u-}e^{Y_2}\right)\right]\vm{1}_{\calA_u}
\end{align*}
It follows that the American exchange option price is given by
{\scriptsize
\begin{align}
\begin{split}
C_t^A
	& = C_t^E + \int_t^T e^{-r(u-t)}\E_\Q\left[\left.\left(q_1 S_{1,u-}-q_2 S_{2,u-}\right)\vm{1}_{\calA_u}\right|\calF_t\right]\dif u\\
	& \qquad - \int_t^T e^{-r(u-t)}\E_\Q\left\{\left.\tilde{\lambda}_1\E_\Q^{Y_1}\left[C_u^A\left(S_{1,u-}e^{Y_1},S_{2,u-},v_{1,u},v_{2,u}\right)-\left(S_{1,u-}e^{Y_1}-S_{2,u-}\right)\right]\vm{1}_{\calA_u}\right|\calF_t\right\}\dif u\\
	& \qquad - \int_t^T e^{-r(u-t)}\E_\Q\left\{\left.\tilde{\lambda}_2\E_\Q^{Y_2}\left[C_u^A\left(S_{1,u-},S_{2,u-}e^{Y_2},v_{1,u},v_{2,u}\right)-\left(S_{1,u-}-S_{2,u-}e^{Y_2}\right)\right]\vm{1}_{\calA_u}\right|\calF_t\right\}\dif u
\end{split}
\end{align}}

Note that the discounted price of the American exchange option $\tilde{C}_t^A$ is the Snell envelope (the smallest supermartingale majorant) of the discounted intrinsic value $e^{-rt}(S_{1,t}-S_{2,t})^+$, following the optimal stopping arguments for American options \citep{Karatzas-1988, Myneni-1992, CheangChiarella-2011}. Therefore, the discounted American exchange option price is always greater than or equal to $e^{-rt}(S_{1,t}-S_{2,t})$, with equality occurring only in the stopping region. In this light, we note that $$C_u^A\left(S_{1,u-}e^{Y_1},S_{2,u-},v_{1,u},v_{2,u}\right)-\left(S_{1,u-}e^{Y_1}-S_{2,u-}\right) = 0$$ if $S_{1,u-}e^{Y_1}\geq B(v_{1,u},v_{2,u},u)S_{2,u-}$ (i.e. when $(S_{1,u-}e^{Y_1},S_{2,u-})\in\calS$) and $$C_u^A\left(S_{1,u-}e^{Y_1},S_{2,u-},v_{1,u},v_{2,u}\right)-\left(S_{1,u-}e^{Y_1}-S_{2,u-}\right) > 0$$ when $S_{1,u-}e^{Y_1}< B(v_{1,u},v_{2,u},u)S_{2,u-}$ (i.e. when $(S_{1,u-}e^{Y_1},S_{2,u-})\in\calC$). In the same manner, $$C_u^A\left(S_{1,u-},S_{2,u-}e^{Y_2},v_{1,u},v_{2,u}\right)-\left(S_{1,u-}-S_{2,u-}e^{Y_2}\right) = 0$$ when $S_{1,u-}\geq B(v_{1,u},v_{2,u},u)S_{2,u-}e^{Y_2}$ (i.e. when $(S_{1,u-},S_{2,u-}e^{Y_2})\in\calS$) and $$C_u^A\left(S_{1,u-},S_{2,u-}e^{Y_2},v_{1,u},v_{2,u}\right)-\left(S_{1,u-}-S_{2,u-}e^{Y_2}\right) > 0$$ when $S_{1,u-}< B(v_{1,u},v_{2,u},u)S_{2,u-}e^{Y_2}$ (i.e. when $(S_{1,u-},S_{2,u-}e^{Y_2})\in\calC$). In line with these observations, define the events $\calA_{1,u}$ and $\calA_{2,u}$ as
\begin{align*}
\calA_{1,u} & = \calA_u \cap \left\{S_{1,u-}e^{Y_1}< B(v_{1,u},v_{2,u},u)S_{2,u-}\right\} = \left\{B_u\leq\frac{S_{1,u-}}{S_{2,u-}}<B_u e^{-Y_1}\right\}\\
\calA_{2,u} & = \calA_u \cap \left\{S_{1,u-}< B(v_{1,u},v_{2,u},u)S_{2,u-}e^{Y_2}\right\} = \left\{B_u\leq\frac{S_{1,u-}}{S_{2,u-}}<B_u e^{Y_2}\right\},
\end{align*}
where $B_u$ is shorthand for $B(v_{1,u},v_{2,u},u)$. Following the arguments made above, the American exchange option price is therefore given by
{\scriptsize
\begin{align*}
C_t^A
	& = C_t^E + \int_t^T e^{-r(u-t)}\E_\Q\left[\left.\left(q_1 S_{1,u-}-q_2 S_{2,u-}\right)\vm{1}_{\calA_u}\right|\calF_t\right]\dif u\\
	& \qquad - \int_t^T e^{-r(u-t)}\E_\Q\left\{\left.\tilde{\lambda}_1\E_\Q^{Y_1}\left[C_u^A\left(S_{1,u-}e^{Y_1},S_{2,u-},v_{1,u},v_{2,u}\right)-\left(S_{1,u-}e^{Y_1}-S_{2,u-}\right)\right]\vm{1}_{\calA_{1,u}}\right|\calF_t\right\}\dif u\\
	& \qquad - \int_t^T e^{-r(u-t)}\E_\Q\left\{\left.\tilde{\lambda}_2\E_\Q^{Y_2}\left[C_u^A\left(S_{1,u-},S_{2,u-}e^{Y_2},v_{1,u},v_{2,u}\right)-\left(S_{1,u-}-S_{2,u-}e^{Y_2}\right)\right]\vm{1}_{\calA_{2,u}}\right|\calF_t\right\}\dif u
\end{align*}}

We summarize the results of the preceding calculations in the following proposition.

\begin{prop}
\label{prop-SVJD-AmExcOpPriceRep}
The price of the American exchange option admits the representation
\begin{equation}
\label{eqn-SVJD-AmExcOpPriceRep}
C_t^A(S_{1,t},S_{2,t},v_{1,t},v_{2,t}) = C_t^E(S_{1,t},S_{2,t},v_{1,t},v_{2,t})+C_t^P(S_{1,t},S_{2,t},v_{1,t},v_{2,t}),
\end{equation}
where $C_t^E$ is the price of the corresponding European exchange option and $C_t^P$ is the early exercise premium of the American exchange option. The early exercise premium is given by

{\scriptsize
\begin{align}
\begin{split}
C_t^P
	& = \int_t^T e^{-r(u-t)}\E_\Q\left[\left.\left(q_1 S_{1,u-}-q_2 S_{2,u-}\right)\vm{1}_{\calA_u}\right|\calF_t\right]\dif u\\
	& \qquad - \int_t^T e^{-r(u-t)}\E_\Q\left\{\left.\tilde{\lambda}_1\E_\Q^{Y_1}\left[C_u^A\left(S_{1,u-}e^{Y_1},S_{2,u-},v_{1,u},v_{2,u}\right)-\left(S_{1,u-}e^{Y_1}-S_{2,u-}\right)\right]\vm{1}_{\calA_{1,u}}\right|\calF_t\right\}\dif u\\
	& \qquad - \int_t^T e^{-r(u-t)}\E_\Q\left\{\left.\tilde{\lambda}_2\E_\Q^{Y_2}\left[C_u^A\left(S_{1,u-},S_{2,u-}e^{Y_2},v_{1,u},v_{2,u}\right)-\left(S_{1,u-}-S_{2,u-}e^{Y_2}\right)\right]\vm{1}_{\calA_{2,u}}\right|\calF_t\right\}\dif u,
\end{split}
\end{align}}

\noindent where the events $\calA_u$, $\calA_{1,u}$, and $\calA_{2,u}$ are defined as
\begin{align}
\calA_u & = \left\{(S_{1,u-},S_{2,u-})\in\calS\right\} = \{S_{1,u-}\geq B_u S_{2,u-}\}\\
\calA_{1,u} & = \calA_u \cap \left\{S_{1,u-}e^{Y_1}< B(v_{1,u},v_{2,u},u)S_{2,u-}\right\} = \left\{B_u\leq\frac{S_{1,u-}}{S_{2,u-}}<B_u e^{-Y_1}\right\}\\
\calA_{2,u} & = \calA_u \cap \left\{S_{1,u-}< B(v_{1,u},v_{2,u},u)S_{2,u-}e^{Y_2}\right\} = \left\{B_u\leq\frac{S_{1,u-}}{S_{2,u-}}<B_u e^{Y_2}\right\}.
\end{align}
Here, $B_u = B(v_{1,u},v_{2,u},u)$ is the early exercise boundary at time $u$.
\end{prop}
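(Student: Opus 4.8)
The plan is to consolidate the calculations carried out in the discussion immediately preceding the proposition. First I would apply It\^{o}'s formula for jump-diffusion processes to the discounted American exchange option price $\tilde{C}_t^A = e^{-rt}C_t^A$, invoking Assumptions \ref{asp-ParameterAssumptions} and \ref{asp-OptionPriceDifferentiability} to supply the smoothness and square-integrability needed; this produces $\dif\tilde{C}_t^A$ written through the operator $\bar{\calL}$, the $\Q$-Wiener integrals, and the $\Q$-compensated jump integrals. Integrating from $t$ to $T$, dividing by $e^{-rt}$, and taking $\E_\Q[\cdot\mid\calF_t]$ annihilates the Wiener and compensated-jump terms, which are $\Q$-martingales of zero conditional mean by \citet[Theorem 4.6.1]{Kuo-2006} and \citet[Theorem 2.2]{Runggaldier-2003}, leaving equation \eqref{eqn-AmericanExcOptionRep1}. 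Since $C_T^A = (S_{1,T}-S_{2,T})^+ = C_T^E$, the left-hand side equals $C_t^E$ and we obtain equation \eqref{eqn-AmericanExcOptionRep2}, which is the decomposition \eqref{eqn-SVJD-AmExcOpPriceRep} with early exercise premium $C_t^P = -\E_\Q[\int_t^T e^{-r(u-t)}\bar{\calL}[C_{u-}^A]\,\dif u \mid \calF_t]$.

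Next I would localize the integrand to the stopping region. By the Snell-envelope characterization of $\tilde{C}_t^A$ \citep{Karatzas-1988, Myneni-1992, CheangChiarella-2011}, the discounted price is a $\Q$-martingale throughout the continuation region $\calC$, so $\bar{\calL}[C_{u-}^A]=0$ there and only the contribution on the stopping event $\calA_u = \{(S_{1,u-},S_{2,u-})\in\calS\}$ survives. On $\calS$ the value-matching condition $C_{u-}^A = S_{1,u-}-S_{2,u-}$, supplemented by the smooth-pasting conditions \eqref{eqn-SVJD-SmoothPastingConditions} supplying the $C^1$ regularity required earlier for It\^{o}'s formula, forces all second-order derivatives of $C_{u-}^A$ and its $t$-, $v_1$- and $v_2$-derivatives to vanish there while $\partial_{s_1}C_{u-}^A = 1$ and $\partial_{s_2}C_{u-}^A = -1$; hence $\calL[C_{u-}^A]$ reduces to $(r-q_1-\tilde{\lambda}_1\tilde{\kappa}_1)S_{1,u-}-(r-q_2-\tilde{\lambda}_2\tilde{\kappa}_2)S_{2,u-}$. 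Substituting $-rC_{u-}^A = -r(S_{1,u-}-S_{2,u-})$ and using $\tilde{\kappa}_i = \E_\Q^{Y_i}[e^{Y_i}-1]$ to absorb the terms $\tilde{\lambda}_i\tilde{\kappa}_i S_{i,u-}$ into the jump expectations collapses the drift on $\calS$ to $-(q_1 S_{1,u-}-q_2 S_{2,u-}) + \tilde{\lambda}_1\E_\Q^{Y_1}[C_u^A(S_{1,u-}e^{Y_1},S_{2,u-},v_{1,u},v_{2,u})-(S_{1,u-}e^{Y_1}-S_{2,u-})] + \tilde{\lambda}_2\E_\Q^{Y_2}[C_u^A(S_{1,u-},S_{2,u-}e^{Y_2},v_{1,u},v_{2,u})-(S_{1,u-}-S_{2,u-}e^{Y_2})]$.

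Finally I would shrink the supports of the two jump expectations. Since $\tilde{C}^A$ dominates the discounted intrinsic value with equality exactly on $\calS$, the post-jump difference $C_u^A(S_{1,u-}e^{Y_1},S_{2,u-},v_{1,u},v_{2,u})-(S_{1,u-}e^{Y_1}-S_{2,u-})$ equals $0$ whenever $(S_{1,u-}e^{Y_1},S_{2,u-})\in\calS$ and is strictly positive on $\calC$; so, intersected with $\calA_u$, the surviving mass sits on $\calA_{1,u} = \calA_u\cap\{S_{1,u-}e^{Y_1}<B_u S_{2,u-}\} = \{B_u\leq S_{1,u-}/S_{2,u-}<B_u e^{-Y_1}\}$, and symmetrically the stock-$2$ term lives on $\calA_{2,u} = \{B_u\leq S_{1,u-}/S_{2,u-}<B_u e^{Y_2}\}$. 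Inserting these indicators in place of $\vm{1}_{\calA_u}$, negating, and integrating against $e^{-r(u-t)}$ gives exactly the stated formula for $C_t^P$. I expect the main obstacle to be this final localization: one must argue carefully that a downward jump in stock $1$ (or upward jump in stock $2$) can carry the price ratio $S_{1,u-}/S_{2,u-}$ back across the exercise boundary, so that the jump integrands are supported precisely on the overlap events $\calA_{1,u}$ and $\calA_{2,u}$ rather than on all of $\calA_u$. This is the structural feature that genuinely separates the diffusive premium (the dividend term) from the two jump premia, and distinguishes the present jump-diffusion decomposition from the pure-diffusion one of \citet{Bjerskund-1993} and \citet{BroadieDetemple-1997}.
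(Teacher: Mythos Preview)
Your proposal is correct and follows essentially the same route as the paper: apply It\^{o}'s formula to $e^{-rt}C_t^A$, take conditional $\Q$-expectations to kill the martingale terms and obtain equation \eqref{eqn-AmericanExcOptionRep2}, then use the martingale property on $\calC$ and the value-matching condition on $\calS$ to evaluate $\bar{\calL}[C_{u-}^A]\vm{1}_{\calA_u}$, and finally invoke the Snell-envelope inequality to restrict the jump integrands to $\calA_{1,u}$ and $\calA_{2,u}$. The only cosmetic difference is that you explicitly flag the smooth-pasting conditions as supplying the $C^1$ regularity needed for It\^{o}'s formula across the boundary, which the paper leaves implicit.
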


\begin{rem}
The Fourier inversion formulas we obtained for $C_t^E$ in equations \eqref{eqn-EuExcOp-FTPrice} and \eqref{eqn-EuExcOp-FTPrice-Decomposed-Timet} may be used as an input in equation \eqref{eqn-SVJD-AmExcOpPriceRep}.
\end{rem}

The event $\calA_{i,u}$, $i=1,2$, represents the event that the pre-jump stock prices $S_{1,u-}$ and $S_{2,u-}$ were in the stopping/early exercise region, but due to a jump in stock $i$ at time $u$, the post-jump stock prices were sent back to the continuation region. This interpretation is analogous to that offered in \citet{Pham-1997} for the single-asset jump-diffusion case and \citet{CheangChiarella-2011} for the case of exchange options under jump-diffusion dynamics.

The decomposition offered in the previous proposition is also similar to that in \citet{CheangChiarella-2011}. We also note that the early exercise premium can also be decomposed into a premium arising from the diffusion of the dynamics (the positive term) and rebalancing costs arising from the possibility that stock prices suddenly jump back into the continuation region (the negative terms), as was emphasized by \citet{Gukhal-2001} and \citet{CheangChiarella-2011}. In these computations, we note that the exercise boundary, and consequently the events defined with respect to the exercise boundary, are all dependent on the volatility levels.

From the value-matching condition, we note that
\begin{equation}
C_t^A(B_t S_{2,t},S_{2,t},v_{1,t},v_{2,t}) = S_{2,t}\left(B_t-1\right)
\end{equation}
when the stock prices are on the early exercise boundary (i.e. when $S_{1,t} = B_t S_{2,t}$). From the earlier proposition, we may therefore express the early exercise boundary $B_t = B(v_{1,t},v_{2,t},t)$ as a solution to the equation
\begin{equation}
S_{2,t}(B_t-1) = C_t^E(B_t S_{2,t},S_{2,t},v_{1,t},v_{2,t})+C_t^P(B_t S_{2,t},S_{2,t},v_{1,t},v_{2,t}).
\end{equation}
Note however that this equation must solved as a linked system in conjunction with equation \eqref{eqn-SVJD-AmExcOpPriceRep}, since the equation for the early exercise boundary involves the (yet unknown) American exchange option price $C_t^A$.

\section{Summary and Conclusions}
\label{sec-Conclusion}

In this paper, we have provided an extension to the results of \citet{Margrabe-1978} and \citet{CheangChiarella-2011} to consider the case where, aside from the presence of jumps, asset prices are also driven by a stochastic volatility process. To facilitate changes of measure from the objective probability measure to other equivalent measures, we introduced a Radon-Nikod\'ym derivative process, which requires some assumptions on the parameters of the volatility processes. From the construction of the Radon-Nikod\'ym derivative, it was noted that equivalent martingale measures are not unique, which therefore can lead to multiple plausible option prices. 

Representations for European exchange option prices were derived using two methods. The first method employs the change-of-num\'eraire technique to obtain a representation that is similar to the classical \citet{Margrabe-1978}. Alternatively, we considered additional assumptions on the model's correlation structure to allow an explicit form of the joint-characteristic function of the log-prices of the stocks. This, in turn, enabled us to represent European exchange option prices using in terms of this characteristic function using the results of \citet{CaldanaFusai-2013}. We were able to show that the European exchange option price obtained via the \citet{CaldanaFusai-2013} method can also be written in a form that is consistent with the characterization obtained via the change-of-num\'eraire procedure.

Finally, we demonstrated that the American exchange option price can also be represented as the sum of the price of the corresponding European exchange option price and an early exercise premium, similar to the findings of \citet{BroadieDetemple-1997}, \citet{Gukhal-2001}, and \citet{CheangChiarella-2011} in the case of jump-diffusion dynamics. We were also able to show that the early exercise premium can be decomposed into a premium on the diffusive component of asset prices and a premium owing to the possibility of jumps back into the continuation region right before exercise.

The use of a stochastic volatility jump-diffusion model for asset prices indeed allows for a more accurate characterization of asset prices but presents some complications in obtaining option prices. By making some minor additional assumptions on the correlation structure of the market model, we were able to obtain a representation of the European exchange option price in terms of Fourier inversion formulas. The representations we obtained may be numerically evaluated via Monte Carlo simulation or fast Fourier transform methods \citep[see][for example]{HurdZhou-2010, CaldanaFusai-2013, CaneOlivares-2014}. Meanwhile, extensions to the numerical methods proposed by \citet{Chiarella-2009} and \citet{ChiarellaZiveyi-2011} may be considered in providing a numerical solution for the American exchange option pricing problem. The efficacy of these methods in implementing our exchange option price representations is a topic for further research.

\section*{Acknowledgments}

The second author is supported by a Research Training Program International (RPTi) scholarship awarded by the Australian Commonwealth Government and by a Faculty Development Grant from the Loyola Schools of Ateneo de Manila University.

\bibliographystyle{tfcad}
\bibliography{references}


\appendix

\section{On the Martingale Property of the Stochastic Exponential}
\label{sec-app-MartingaleProperty}

Here, we show that the process $\{\calM_{i,t}\}$ defined by equation \eqref{eqn-StochExp} is a martingale under $\prob$. We first note that
\begin{align*}
\E_\prob\left[\exp\left\{-\lambda_i\kappa_i t+\sum_{n=1}^{N_{i,t}}Y_{i,n}\right\}\right]
	& = e^{-\lambda_i\kappa_i t}\E_\prob\left[\exp\left\{\sum_{n=1}^{N_{i,t}}Y_{i,n}\right\}\right]\\
	& = e^{-\lambda_i\kappa_i t}\cdot\exp\left\{\lambda_i t\left(\E_\prob(e^{Y_i})-1\right)\right\}\\
	& = e^{-\lambda_i\kappa_i t}e^{\lambda_i\kappa_i t} = 1.
\end{align*}
Next we examine if the process $$E_t=\exp\left\{-\frac{1}{2}\int_0^t v_{i,s}\dif s+\int_0^t \sqrt{v_{i,s}}\dif W_{i,s}\right\}$$ is a $\prob$-martingale. From \citet{Kuo-2006} Theorem 8.7.3 and \citet{WongHeyde-2004} Theorem 1, $E_t$ is a martingale if and only if $\E_\prob(E_t) = 1$ for all $t\in[0,T]$. \citet{Kuo-2006} notes that this condition is generally difficult to verify, but the stronger Novikov's condition, $\E_\prob[\exp\{\frac{1}{2}\int_0^T v_{i,t}\dif t\}]<\infty$, may be used instead.\footnote{As Novikov's condition is stronger as pointed out by \citet{Kuo-2006}, weaker alternative conditions for the martingale property are discussed in \citet{WongHeyde-2004}.}

Following Proposition 2.1 of \citet{AndersenPiterbarg-2007}, the process $\{v_{i,t}\}$ satisfying the conditions in Assumption \ref{asp-ParameterAssumptions} does not hit 0 and does not explode in in finite time \citep[see also][Chapter 9]{Lewis-2000}. Hence, $0<v_{i,t}<\infty$ almost surely for all $t\in[0,T]$, and so $\int_0^T v_{i,t}\dif t<\infty$ almost surely. Thus, Novikov's condition is satisfied, allowing us to conclude that $E_t$ is a martingale and that $\E_\prob(E_t)=1$.

\end{document}